\definecolor{ocre}{rgb}{0.72,0,0} 
\definecolor{newblue}{rgb}{0.2,0.2,0.6} 
\definecolor{ocre}{rgb}{0.72,0,0} 
\definecolor{babyblueeyes}{rgb}{0.63, 0.79, 0.95}
\definecolor{newgreen}{rgb}{0.53,0.66,0.42} 
\newcommand{\remove}[1]{}
\newcommand{\N}{\mathbb{N}}
\newcommand{\R}{\mathbb{R}}
\newcommand{\rot}{\intercal}
\newcommand{\ce}{\mathrm{e}}
\newcommand{\calR}{\mathcal{R}}
\newcommand{\poly}{\operatorname{poly}}
\newcommand{\APT}{\mathsf{APT}}
\newcommand{\eps}{\epsilon}
\newcommand{\calX}{\mathcal{X}}
\newcommand{\core}{\mathsf{CORE}}
\newcommand{\vol}{\operatorname{vol}}
\newcommand{\COST}{\mathsf{COST}}
\newcommand{\argmax}{\operatorname{argmax}}
\DeclareMathOperator{\spn}{span}
\DeclareMathOperator{\dmn}{dim}
\renewcommand{\leq}{\leqslant}
\renewcommand{\geq}{\geqslant}
\renewcommand{\le}{\leqslant}
\renewcommand{\ge}{\geqslant}
\newcommand{\thmref}[1]{Theorem~\ref{thm:#1}}
\newcommand{\lemref}[1]{Lemma~\ref{lem:#1}}
\newcommand{\secref}[1]{Section~\ref{sec:#1}}
\newcommand{\eq}[1]{\eqref{eq:#1}}
\renewcommand{\tilde}{\widetilde}
\renewcommand{\epsilon}{\varepsilon}
\newcommand{\aseq}{\{A_i\}_{i=1}^k}
\newcommand{\sseq}{\{S_i\}_{i=1}^k}
\newcommand{\AlgoMean}{\mathsf{AlgoMean}}
\definecolor{ocre}{RGB}{150,22,11} 
\newcommand{\calL}{\mathcal{L}}
\def\argmax{\operatornamewithlimits{argmax}}
\newtheorem{thm}{Theorem}[section]  
\newtheorem{lem}[thm]{Lemma}
\newtheorem{cor}[thm]{Corollary}
\newtheorem{prob}{Problem}
\renewcommand{\tilde}{\widetilde}
\numberwithin{equation}{section}
\newcommand{\mat}[1]{\boldsymbol{\mathbf{#1}}}
\newcommand{\citep}{\cite}
\crefname{thm}{Theorem}{Theorems}
\title{\textbf{Partitioning Well-Clustered Graphs: \\
Spectral Clustering Works!}\thanks{A preliminary version of this paper appeared in the 28th Annual Conference on Learning Theory~(COLT 2015).}}
 \author{Richard Peng
 \thanks{Georgia Institute of Technology, Atlanta, USA.
 (\url{rpeng@cc.gatech.edu})}
 \and
He Sun\thanks{University of Bristol, Bristol, UK.
 (\url{h.sun@bristol.ac.uk}) Questions, comments, or corrections
to this document may be directed to that email address.}
\and
Luca Zanetti\thanks{University of Bristol, Bristol, UK. (\url{luca.zanetti@bristol.ac.uk})
}
}
\date{}
\begin{document}

\maketitle

\begin{abstract}
In this paper we study variants of the widely used
\emph{spectral clustering}  that
 partitions a graph into $k$ clusters by (1) embedding the vertices of a graph into a low-dimensional space using the bottom eigenvectors of the Laplacian matrix, and (2) grouping the embedded points into $k$ clusters via $k$-means algorithms. We show that, for a wide class of  graphs,
spectral clustering  gives a good approximation of the optimal clustering.
While this approach was proposed in the early 1990s and has comprehensive applications, prior to our work  similar results were known only for graphs generated from stochastic models.

We also give a nearly-linear time algorithm for partitioning well-clustered graphs based on   computing a matrix exponential and
approximate nearest neighbor data structures.

\vspace{0.5cm}

\textbf{Keywords:} 
graph partitioning, spectral clustering, $k$-means, heat kernel

\end{abstract}

\thispagestyle{plain}

\thispagestyle{empty}

\setcounter{page}{0}

\newpage

\thispagestyle{empty}

\tableofcontents

\setcounter{page}{0}

\newpage

\section{Introduction}

Partitioning a graph into two or more  pieces
is one of the most fundamental problems in combinatorial optimization,
and has comprehensive applications in various disciplines of computer science.

One of the most studied graph partitioning problems is the \emph{edge expansion problem}, i.e., finding a cut with few crossing edges normalized by  the size of the smaller side of the cut.
Formally, let $G=(V,E)$ be an undirected  graph. For any set $S$,  the conductance of set $S$ is defined by
\[
\phi_G(S)\triangleq \frac{ |E(S, V\setminus S)|}{ \vol(S)},
\]
where $\vol(S)$ is the total weight of edges incident to vertices in $S$,
and let the conductance of $G$ be
\[
\phi(G)\triangleq \min_{S:\vol(S)\leq \vol(G)/2} \phi_G(S).
\]
The edge expansion problem asks for a set $S\subseteq V$ of $\vol(S)\leq \vol(V)/2$ such that $\phi_G(S)=\phi(G)$.
This problem is known to be $\mathsf{NP}$-hard~\cite{MatulaS90}, and the current best approximation algorithm achieves an approximation ratio of $O\left(\sqrt{\log n} \right)$~\cite{ARV09}.

The \emph{$k$-way partitioning problem} is a natural  generalization of the edge expansion problem.
We call subsets of vertices~(i.e.~\emph{clusters}) $A_1,\ldots, A_k$ a \emph{$k$-way partition} of $G$ if $A_i\cap A_j=\emptyset$ for different $i$ and $j$, and $\bigcup_{i=1}^k A_i=V$. The $k$-way partitioning problem asks for a $k$-way partition of $G$  such that the conductance of any
$A_i$ in the partition is at most  the \emph{$k$-way expansion constant}, defined by
\begin{equation}\label{eq:defrho}
\rho(k)\triangleq\min\limits_{\mathrm{partition\ } A_1,
\ldots, A_k }\max_{1\leq i\leq k} \phi_G(A_i).
\end{equation}
Clusters of low conductance in networks appearing in practice usually capture the notion of \emph{community},
and algorithms for finding these subsets have applications in various domains such as
community detection and network analysis.
In computer vision,  most image segmentation procedures are
based on region-based merge and split~\citep{CC79}, which in turn
rely on partitioning graphs into multiple subsets~\citep{ShiM00}.
On a theoretical side, decomposing vertex/edge sets into multiple disjoint subsets
is used in designing approximation algorithms for Unique Games~\citep{toc/Trevisan08},
and efficient algorithms for graph problems~\citep{KelnerLOS14,LR99,spielmanTeng:SS11}.

Despite widespread use of various graph partitioning schemes over the past decades, the quantitative relationship between the $k$-way expansion constant and the eigenvalues of the graph Laplacians were unknown until a sequence of very recent results~\citep{conf/stoc/LeeGT12,lrtv12}. For instance, 
Lee et al.~\cite{conf/stoc/LeeGT12} proved the following higher-order Cheeger inequality:
\begin{equation}\label{eq:highorder}
\frac{\lambda_k}{2}\leq\rho(k)\leq O(k^2)\sqrt{\lambda_k},
\end{equation}
where $0=\lambda_1\leq \ldots\leq\lambda_n\leq 2$ are the eigevalues of the normalized Laplacian matrix $\calL$ of $G$.
Informally, the higher-order Cheeger inequality shows that a graph $G$ has a $k$-way partition with low $\rho(k)$ if and only if $\lambda_k$ is small.
Indeed, \eq{highorder} implies that  a large gap between $\lambda_{k+1}$ and $\rho(k)$ \emph{guarantees} (i) existence of a $k$-way partition  $\{S_i\}_{i=1}^k$ with bounded $\phi_G(S_i)\leq \rho(k)$, and (ii)  any $(k+1)$-way partition of $G$   contains a subset with significantly higher conductance $\rho(k+1)\geq \lambda_{k+1}/2$ compared with $\rho(k)$. Hence, a suitable lower bound on the \emph{gap} $\Upsilon(k)$ for some $k$, defined by
\begin{equation}\label{eq:defupsilon}
\Upsilon(k)\triangleq\frac{\lambda_{k+1}}{\rho(k)},
\end{equation}
implies the  existence of a $k$-way partition  for which
every cluster has low conductance, and that $G$ is a \emph{well-clustered} graph.

We study  well-clustered graphs which satisfy  a
gap assumption on $\Upsilon(k)$ in this paper. Our gap assumption on $\Upsilon(k)$  is slightly weaker than assuming gaps between the eigenvalues, but nonetheless related via Cheeger-type inequalities. Our assumption is also well-grounded in practical studies:
clustering algorithms have  been studied before under
this assumption in machine learning, e.g.~\cite{conf/icml/ZhuLM13}.
Sharp drop-offs between two consecutive eigenvalues have also been
observed to give good indicators for the number of clusters,
e.g.~\cite{luxburg07} and Section D in~\cite{fortunatoPR}.

\subsection{Our Results}

We give structural results that  show close connections between
the eigenvectors and the indicator vectors of the clusters.
This characterization allows us to show that many variants of spectral clustering, that are based on the spectral embedding and that work ``in practice'', can be rigorously analyzed ``in theory''. Moreover, exploiting our gap assumption, we can approximate this spectral embedding using the heat kernel of the graph. Combining this with approximate nearest neighbor data structures, we give a
nearly-linear time algorithm for the $k$-way partitioning problem.

Our structural results can be summarized as follows.
 Let $\{f_i\}_{i=1}^k$   be the eigenvectors corresponding to the  $k$ smallest eigenvalues of $\calL$, and  $\{S_i\}_{i=1}^k$ be a $k$-way partition of $G$ achieving $\rho(k)$ defined in \eq{defrho}. We define $\{g_i\}_{i=1}^k$ to be the indicator vectors of the clusters 
$\{S_i\}_{i=1}^k$, where $g_i(u)=1$ if $u \in S_i$, and $g_i(u)=0$ otherwise. We further use $\{\bar{g}_i\}_{i=1}^k$ to express the normalized indicator vectors of the clusters $\{S_i\}_{i=1}^k$, defined by
\[
\bar{g}_i=\frac{\mathbf{D}^{1/2}g_i }{ \| D^{1/2}g_i \|}.
\]
We show that,  under the condition of $\Upsilon(k)=\Omega(k^2)$, the span of $\{\bar{g}_i\}_{i=1}^k$ and the span of $\{f_i\}_{i=1}^k$ are close to each other, which is stated formally in \thmref{STinf}.

\begin{thm}[The Structure Theorem]\label{thm:STinf}
 Let  $\{S_i\}_{i=1}^k$ be a $k$-way partition  of $G$ achieving  $\rho(k)$, and let $\Upsilon(k)=\lambda_{k+1}/\rho(k) = \Omega(k^2)$. Let $\{f_i\}_{i=1}^k$ and $\{\bar{g}_i \}_{i=1}^k$ be defined as above.
Then, the following statements hold:
\begin{enumerate}
\item For every $\bar{g}_i$, there is a
linear combination of $\{f_i\}_{i=1}^k$, called $\hat{f}_i$, such that
$
\|\overline{g}_i - \hat{f}_i\|^2 \le  1/\Upsilon(k).
$
\item For every $f_i$,  there is a linear combination of $\{\overline{g}_i\}_{i=1}^k$,
called $\hat{g}_i$,
  such that
$
\left\|f_i - \hat{g}_i\right\|^2 \le 1.1k/ \Upsilon(k).
$
\end{enumerate}
\end{thm}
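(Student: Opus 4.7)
The plan is to reduce both claims to the study of the $k\times k$ correlation matrix $M$ defined by $M_{ji}=\langle f_j,\bar{g}_i\rangle$. Its columns encode the coordinates of the projections $\hat{f}_i$ in the eigenbasis $\{f_j\}_{j=1}^{k}$, while its rows encode the coordinates of the projections $\hat{g}_i$ in the basis $\{\bar{g}_j\}_{j=1}^{k}$, so the two parts of the theorem are really two different norm estimates on the same object. The single driving input is a Rayleigh-quotient identity: because $\bar{g}_i=D^{1/2}\mathbf{1}_{S_i}/\sqrt{\vol(S_i)}$, a direct calculation with $\mathcal{L}=I-D^{-1/2}AD^{-1/2}$ gives $\bar{g}_i^{\top}\mathcal{L}\bar{g}_i=\phi_G(S_i)\le \rho(k)$.

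For Part~1, I would expand $\bar{g}_i$ in the orthonormal eigenbasis of $\mathcal{L}$ and take $\hat{f}_i$ to be the projection of $\bar{g}_i$ onto $\mathrm{span}\{f_1,\ldots,f_k\}$. The tail $\bar{g}_i-\hat{f}_i$ lies in $\mathrm{span}\{f_{k+1},\ldots,f_n\}$, so its Rayleigh quotient is at least $\lambda_{k+1}$. A one-line inequality,
\[
\lambda_{k+1}\,\|\bar{g}_i-\hat{f}_i\|^2\ \le\ \bar{g}_i^{\top}\mathcal{L}\,\bar{g}_i\ \le\ \rho(k),
\]
then rearranges to the stated $1/\Upsilon$ bound, and simultaneously certifies that each column of $M$ has squared norm at least $1-1/\Upsilon$.

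Part~2 is the harder direction: it asks for a lower bound on \emph{row} norms of $M$ (since $\|f_i-\hat{g}_i\|^2=1-\|\hat{g}_i\|^2=1-\sum_{j}M_{ij}^2$), whereas Part~1 only controls column norms. The strategy is to exploit that both $\{f_j\}$ and $\{\bar{g}_j\}$ are orthonormal families, which forces $\|M\|_{\mathrm{op}}\le 1$. Combined with the Frobenius estimate $\|M\|_F^2\ge k-k/\Upsilon$ inherited from Part~1, this pins every singular value of $M$ near $1$; in particular $\sigma_k(M)^2\ge 1-k/\Upsilon$. Diagonalizing $MM^{\top}$ then yields $(MM^{\top})_{ii}\ge \sigma_k(M)^2$ for each $i$, which is exactly the desired uniform row-norm bound and gives $\|f_i-\hat{g}_i\|^2\le k/\Upsilon$, matching the claim up to the mild $1.1$ slack.

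The main obstacle is this transfer from column to row information in Part~2: a naive averaging argument only yields that the \emph{average} row norm is close to $1$, not a uniform bound. The singular-value step relies crucially on the operator-norm cap $\|M\|_{\mathrm{op}}\le 1$, which prevents the Frobenius mass from being absorbed by a few dominant singular directions. This also explains the appearance of the factor $k$ (absent in Part~1): in the worst case all $k$ small column-level errors can pile up on a single singular direction, and after rotation concentrate on a single $f_i$.
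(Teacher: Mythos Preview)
Your argument is correct. Part~1 coincides with the paper's: both project $\bar g_i$ onto $\mathrm{span}\{f_1,\dots,f_k\}$ and bound the tail via the Rayleigh quotient. For Part~2, however, the paper takes a different route. It does \emph{not} choose $\hat g_i$ to be the orthogonal projection. Instead it observes that the columns of $M$ are nearly orthogonal (since $\langle\hat f_i,\hat f_j\rangle=-\langle\bar g_i-\hat f_i,\bar g_j-\hat f_j\rangle$ for $i\neq j$, hence $|\langle\alpha^{(i)},\alpha^{(j)}\rangle|\le 1/\Upsilon$), applies the Ger\v{s}gorin circle theorem to $M^{\top}M$ to conclude that $M$ is invertible once $\Upsilon>k$, writes $f_i=\sum_j\beta_j^{(i)}\hat f_j$ with $B=M^{-1}$, sets $\hat g_i=\sum_j\beta_j^{(i)}\bar g_j$, bounds $\max_j(\beta_j^{(i)})^2\le 1.1$, and finishes with a Cauchy--Schwarz step that introduces both the factor~$k$ and the $1.1$ slack. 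Your singular-value argument is shorter and sharper: the cap $\|M\|_{\mathrm{op}}\le 1$ (immediate from the orthonormality of both families) combined with $\|M\|_F^2\ge k-k/\Upsilon$ from Part~1 forces $\sigma_k(M)^2\ge 1-k/\Upsilon$, and then $MM^{\top}\succeq\sigma_k^2 I$ gives $(MM^{\top})_{ii}\ge\sigma_k^2$ for every $i$. This yields $\|f_i-\hat g_i\|^2\le k/\Upsilon$ with no slack constant, avoids Ger\v{s}gorin entirely, and does not even need the hypothesis $\Upsilon=\Omega(k^2)$. One thing the paper's construction buys is that its coefficients $\beta_j^{(i)}$ (the entries of $M^{-1}$, not the rows of $M$) are reused explicitly downstream---in defining the approximate centers $p^{(i)}$ and in the separation Lemma~\ref{lem:deltalem}---so while your projection gives the cleanest proof of the theorem as stated, the paper's choice of $\hat g_i$ carries additional structure that the rest of the analysis leans on.
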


This theorem generalizes  the result shown by Arora et al.~(\cite{ABS10}, Theorem 2.2),
which proves the easier direction~(the first statement, \cref{thm:STinf}), and can be  considered as a stronger  version of the well-known Davis-Kahan theorem~\citep{DavisKahan}. We remark that, despite that we use the higher-order Cheeger inequality  \eq{highorder}
to motivate the definition of $\Upsilon(k)$, our proof of the structure theorem is self-contained.
Specifically, it omits much of the machinery used in the proofs of higher-order and
improved Cheeger inequalities~\citep{impCheeger13,conf/stoc/LeeGT12}.

The structure theorem has several applications. For instance, we  look at the well-known  spectral embedding $F: V[G]\rightarrow\R^k$ defined by
\begin{equation}
\label{eq:speembed}
F(u) \triangleq \frac{1}{\mathsf{NormalizationFactor}(u)}\cdot\left(f_1(u),\dots,f_k(u)\right)^{\rot},
\end{equation}
where $\mathsf{NormalizationFactor}(u)\in\R$ is a normalization factor for  $u\in V[G]$.
We  use  \cref{thm:STinf} to show that
this well-known spectral embedding exhibits very nice geometric properties: (i)
\emph{all}  points $F(u)$ from the same cluster  are  close to each other, and (ii) \emph{most pairs of}  points $F(u), F(v)$ from  different clusters  are far from each other; (iii) the bigger the value of $\Upsilon(k)$, the higher concentration the embedded points within the same cluster.

Based on these facts, we analyze the performance of spectral clustering, aiming at answering the following longstanding open question:
\emph{Why does spectral clustering  perform well in practice?}
We show that the  partition $\{A_i\}_{i=1}^k$ produced  by spectral clustering gives  a good approximation of any ``optimal" partition $
\{S_i\}_{i=1}^k$: every $A_i$ has low  conductance, and has  large overlap with  its corresponding $S_i$. 
This algorithm has comprehensive applications, and has been the subject
of extensive experimental studies for more than 20 years, e.g. \citep{nips02,luxburg07}.
Prior to this work, similar results on spectral clustering mainly focus on graphs generated from the stochastic block model. Instead, our gap assumption captures more general classes of graphs
by replacing the input model with a structural condition.
Our result represents the first rigorous analysis of spectral clustering for the  general family of graphs that exhibit a multi-cut structure but are not captured by the stochastic block model.
Our result  is as follows:

\begin{thm}[Approximation Guarantee of Spectral Clustering] \label{thm:anlykmean}
Let $G$ be a graph satisfying the condition $\Upsilon(k)=\lambda_{k+1}/\rho(k)=\Omega(k^3)$, and $k\in \N$.
Let $F:V[G]\rightarrow \R^k$
be the embedding defined  in \eq{speembed}. Let $\aseq$ be a $k$-way  partition by any $k$-means algorithm running in $\R^k$ that achieves an approximation ratio $\APT$. Then,  the following statements hold: (i)
 $\vol(A_i\triangle S_i)= O\left(\APT\cdot k^3/\Upsilon(k)\right)\vol(S_i)$, and (ii)
$\phi_G(A_i)=1.1\cdot\phi_G(S_i)+O\left(\APT\cdot k^3/\Upsilon(k)\right)$.
\end{thm}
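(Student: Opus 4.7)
The plan is to exploit the Structure Theorem to reduce the conductance guarantee to a concentration statement about the spectral embedding $F$, and then to argue that any constant approximation to $k$-means must recover a partition that is close to $\{S_i\}_{i=1}^k$ in symmetric difference.

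First I would make precise the intuitive picture that $F$ sends each cluster $S_i$ to a tight blob around a well-separated ``center'' point $p^{(i)} \in \R^k$. Concretely, using the Structure Theorem, each normalized indicator $\bar{g}_i$ is $1/\Upsilon$-close to a combination $\hat f_i=\sum_j \alpha_{ji} f_j$ of the bottom $k$ eigenvectors, and each $f_j$ is $1.1k/\Upsilon$-close to $\hat g_j=\sum_i \beta_{ij}\bar g_i$. Defining
\[
 p^{(i)} \triangleq \frac{1}{\sqrt{\vol(S_i)}}\bigl(\alpha_{1i},\dots,\alpha_{ki}\bigr)^{\rot},
\]
(with the normalization factor in \eqref{eq:speembed} chosen so that the embedding agrees with the coordinates of $\hat f_j$ evaluated on $S_i$), I would show two things: (a) the centers are pairwise far apart, $\|p^{(i)}-p^{(j)}\|^2 = \Omega(1/\min(\vol(S_i),\vol(S_j)))$, which follows because $\bar g_i\perp \bar g_j$ and the coefficient matrix $(\alpha_{ji})$ is essentially orthogonal up to an $O(k/\Upsilon)$ error; and (b) the embedding is concentrated,
\[
 \sum_{i=1}^k \sum_{u \in S_i} \deg(u)\,\|F(u) - p^{(i)}\|^2 \;=\; O(k/\Upsilon),
\]
obtained by summing the first part of \thmref{STinf} across $i$ and rewriting the squared-norm error in terms of coordinates.

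Second, using (b), the $k$-means cost $\COST(\{S_i\},\{p^{(i)}\})$ of the ``ground truth'' clustering is $O(k/\Upsilon)$, hence $\OPT = O(k/\Upsilon)$ and the algorithm's partition $\{A_i\}_{i=1}^k$ with its centers $\{c_i\}_{i=1}^k$ satisfies $\COST(\{A_i\},\{c_i\}) = O(\APT\cdot k/\Upsilon)$. Now I would run the standard ``misclassified mass'' argument: fix the bijection $\pi$ between $\{A_i\}$ and $\{S_i\}$ that maximizes $\sum_i \vol(A_i\cap S_{\pi(i)})$ (and, after relabelling, take $\pi=\mathrm{id}$). For any $u \in S_i\cap A_j$ with $j\ne i$, triangle inequality together with (a) forces $\|F(u)-c_j\|^2 + \|F(u)-p^{(i)}\|^2 \gtrsim \|p^{(i)}-p^{(j')}\|^2$ for the cluster $j'$ matched with $A_j$, which is $\Omega(1/\vol(S_i))$ (one needs to verify that $c_j$ is itself close to some $p^{(j')}$, by an averaging/pigeonhole lemma from the cost bound). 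Summing over such $u$ and using (b),
\[
 \frac{\vol(S_i\triangle A_i)}{\vol(S_i)} \;=\; O(\APT\cdot k^3 \cdot \Upsilon^{-1}),
\]
where the extra factor of $k^2$ comes from charging mismatched mass to $k$ possible partners and from Cauchy--Schwarz on the center separation estimates. This is statement (i).

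Third, statement (ii) is a direct corollary of (i). Write $|E(A_i,V\setminus A_i)| \le |E(S_i,V\setminus S_i)| + \vol(A_i\triangle S_i)$ and $\vol(A_i) \ge \vol(S_i) - \vol(A_i\triangle S_i)$; plugging in (i) and simplifying gives
\[
 \phi_G(A_i) \;\le\; \frac{\phi_G(S_i)\vol(S_i)+\vol(A_i\triangle S_i)}{\vol(S_i)-\vol(A_i\triangle S_i)} \;=\; O\!\bigl(\phi_G(S_i)+\APT\cdot k^3\cdot\Upsilon^{-1}\bigr),
\]
provided $\APT\cdot k^3/\Upsilon$ is smaller than a constant, which is where the $\Upsilon=\Omega(k^3)$ hypothesis is used. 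The main obstacle I anticipate is the bijection / center-matching step: one must argue that each algorithmic center $c_j$ is geometrically close to exactly one $p^{(i)}$, because otherwise the cost bound would be violated. Making this quantitative — controlling all $k$ centers simultaneously and turning it into the symmetric-difference bound with the right dependence on $k$ — is the delicate combinatorial heart of the argument; the rest is essentially bookkeeping built on the Structure Theorem.
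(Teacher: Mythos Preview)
Your high–level plan — concentration of $F$ around per-cluster centers, pairwise separation of those centers, an upper bound on $\OPT$, and then a cost lower bound that forces small symmetric difference, from which the conductance bound follows — is exactly the paper's approach. Part (iii) of your outline matches the paper's derivation of $\phi_G(A_i)$ from the symmetric-difference bound essentially verbatim. However, several of the concrete steps are either misattributed or quantitatively off, and the hardest step is handled differently in the paper.

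\textbf{Centers and concentration.} Your centers are built from the $\alpha$-coefficients of Part~1 of the Structure Theorem, and you claim the concentration bound $\sum_i\sum_{u\in S_i} d_u\|F(u)-p^{(i)}\|^2 = O(k/\Upsilon)$ follows by ``summing the first part.'' It does not: Part~1 controls $\|\bar g_i-\hat f_i\|$, which is a statement about how well a single indicator is represented in eigenspace, not about how well $F(u)$ concentrates. The paper instead defines $p^{(i)}_j=\beta^{(j)}_i/\sqrt{\vol(S_i)}$ from Part~2 (so that $\hat g_j(u)=\sqrt{d_u}\,p^{(i)}_j$ for $u\in S_i$), and then $\sum_i\sum_{u\in S_i}d_u\|F(u)-p^{(i)}\|^2=\sum_j\|f_j-\hat g_j\|^2\le 1.1k^2/\Upsilon$. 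Note the bound is $k^2/\Upsilon$, not $k/\Upsilon$.

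\textbf{Separation.} The separation $\|p^{(i)}-p^{(j)}\|^2=\Omega(1/\min\{\vol(S_i),\vol(S_j)\})$ is too strong. The paper proves only $\Omega\bigl(1/(k\min\{\vol(S_i),\vol(S_j)\})\bigr)$, and this factor of $1/k$ is not free: it comes from a separate lemma (\lemref{deltalem}) showing that for every pair $i\neq j$ there is some coordinate $\ell$ with $|\beta^{(\ell)}_i-\beta^{(\ell)}_j|\ge 1/(10\sqrt{k})$, proved via a perturbation argument on the near-orthogonal matrix $\mat{B}=(\beta^{(j)}_i)$. Near-orthogonality of the columns alone does not give it.

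\textbf{The matching step.} Here the paper takes a different route from your ``fix the max-overlap bijection and show each $c_j$ is close to some $p^{(j')}$.'' The paper never argues that the algorithmic centers are close to the $p^{(i)}$'s. Instead it proves the contrapositive: if for \emph{every} permutation $\sigma$ some cluster has $\vol(A_i\triangle S_{\sigma(i)})\ge 2\eps\vol(S_{\sigma(i)})$, then $\COST\ge 10^{-4}\eps/k$ (\lemref{approx_clusters}). The combinatorial content is isolated in \lemref{approx_clusters2}: if no permutation works, then either every $A_i$ overlaps two different $S_{i_1},S_{i_2}$ with total overlap fraction $\ge\eps$, or some $S_\ell$ is split across several $A_i$'s with total fraction $\ge\eps$. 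In either case one gets, for each $i$, a set $B_i\subseteq A_i$ sitting in a cluster whose center is far from $c_i$; summing the contributions and using the separation bound gives the cost lower bound. This sidesteps entirely the issue you flag of matching $c_j$ to $p^{(j')}$ — which is genuinely tricky, since a priori two $c_j$'s could be near the same $p^{(i)}$, or some $c_j$ near none.

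So: right skeleton, but the concentration step needs Part~2 (with the $\beta$'s) and yields $k^2/\Upsilon$; the separation needs the extra coordinate-gap lemma and yields only $1/(k\,\vol)$; and the permutation argument should be run as a contrapositive with the overlap case analysis, rather than by trying to pin down the algorithmic centers.
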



We further study fast algorithms for partitioning well-clustered graphs. Notice that, for
 moderately large values of $k$, e.g. $k=\omega(\log n)$, directly applying $k$-means algorithms and \cref{thm:anlykmean} does not give a nearly-linear time algorithm, since (i) obtaining the spectral embedding \eq{speembed}  requires $\Omega(mk)$ time for computing  $k$ eigenvectors, and (ii)  most $k$-means algorithms run in 
$\Omega(nk)$ time.

To overcome the first obstacle, we study the so-called \emph{heat kernel embedding} $x_t: V[G]\rightarrow \mathbb{R}^n$, an embedding from $V$ to $\mathbb{R}^n$ defined by
\[
x_t(u)\triangleq\frac{1}{\mathsf{NormalizationFactor}(u)}\cdot \left( \mathrm{e}^{-t\cdot \lambda_1}f_1(u), \cdots, \mathrm{e}^{-t\cdot \lambda_n}f_n(u) \right)
\]
for some  $t\in\mathbb{R}_{\geq 0}$. The heat kernel of a graph is a well-studied mathematical concept and is related to, for example,  the study of random walks~\cite{SL97}.
We exploit the heat kernel embedding to approximate the squared-distance $\|F(u)-F(v)\|^2$ of the embedded
points $F(u)$ and $F(v)$ via their  \emph{heat-kernel distance} $\|x_t(u)  - x_t(v) \|^2$. Since  the heat kernel distances between  vertices can be approximated in nearly-linear time~\cite{osv12}, this
 approach avoids the  computation of eigenvectors for a large value of $k$.
For the second obstacle,  instead of applying $k$-means algorithms as a black-box, we apply  approximate nearest-neighbor data structures. This can be viewed as an ad-hoc version of a $k$-means algorithm, and indicates that in many scenarios the standard
Lloyd-type heuristic widely used  in $k$-means algorithms
can eventually be avoided. Our  result is as follows:

\begin{thm}[Nearly-Linear Time Algorithm For Partitioning Graphs]
\label{thm:main_informal}
Let $G=(V,E)$ be a graph of $n$ vertices and $m$ edges, and $k=\omega(\log n)$ be the number of clusters. Assume that
$\Upsilon(k)=\lambda_{k+1}/\rho(k)= \tilde{\Omega}(k^5)$, and
 $\sseq$ is a $k$-way partition  such that $\phi_G(S_i)\leq \rho(k)$. Then there is an algorithm which runs  in $\tilde{O}(m)$ time and outputs a $k$-way partition $\aseq$ such that  (i)
 $\vol(A_i\triangle S_i)= \tilde{O}\left( k^4 / \Upsilon(k)\right) \vol(S_i)$, and (ii)
 $\phi_G(A_i)=1.1\cdot\phi_G(S_i) + \tilde{O}\left(k^4 / \Upsilon(k)\right)$.
The  $\tilde{O}(\cdot)$ and $\tilde{\Omega}(\cdot)$ terms here hide a factor of $\mathrm{poly}\log n$.
 \end{thm}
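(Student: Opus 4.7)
The plan is to replace the spectral embedding $F$ used in \thmref{anlykmean} with a heat-kernel embedding that can be approximately computed in $\tilde{O}(m)$ time, and to replace the black-box $k$-means step by a seeding-plus-nearest-neighbor procedure implementable via locality-sensitive hashing. Concretely, I consider the embedding $x_t(u) = e^{-t\calL/2}\chi_u/\sqrt{d_u}$, where $\chi_u$ is the indicator vector of $u$. Expanding in the eigenbasis gives $x_t(u) = \sum_i e^{-t\lambda_i/2} f_i(u) f_i/\sqrt{d_u}$, so the squared distance $\|x_t(u)-x_t(v)\|^2$ is dominated by the contribution of $f_1,\dots,f_k$ whenever $t$ lies in the window where $e^{-t\lambda_k}$ is close to $1$ but $e^{-t\lambda_{k+1}}$ is tiny, which the gap $\Upsilon=\Omega(k^5)$ comfortably enforces. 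I would take $t$ of order $(\log k)/\lambda_{k+1}$.

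Given this reduction, the next step is to prove a heat-kernel analogue of \thmref{STinf}: writing $X_t$ for the heat-kernel embedding, the Structure Theorem should imply that the points $X_t(u)$ for $u\in S_i$ concentrate near a common center $\mu_i$, with average squared radius $O(1/(\Upsilon\cdot\vol(S_i)))$, while distinct centers $\mu_i,\mu_j$ are far apart relative to these radii. This inherits the ``ball and separation'' picture already driving \thmref{anlykmean} and avoids redoing the spectral geometry from scratch.

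The algorithmic pipeline is the following. First, project the embedding onto $O(\log n)$ random directions via a Johnson--Lindenstrauss sketch; since $\langle r_j, x_t(u)\rangle$ equals the $u$-th coordinate of the single vector $e^{-t\calL/2}r_j/\sqrt{d_u}$, this reduces the task to computing only $O(\log n)$ heat-kernel vectors. Second, approximate each such vector in $\tilde{O}(m)$ time via a polynomial (or rational) approximation of $e^{-t\calL/2}$ composed with a nearly-linear-time Laplacian solver. Third, draw $\tilde{O}(k)$ vertex samples proportional to degree, so that with high probability at least one sample lands in each $S_i$. Fourth, greedily pick $k$ seeds $c_1,\dots,c_k$ among the samples whose sketched embeddings are pairwise far apart; the separation of the $\mu_i$'s forces these seeds into distinct clusters. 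Finally, build an LSH data structure on the sketched embeddings and assign each vertex $v$ to the cluster whose seed is $v$'s approximate nearest neighbor; every step runs in $\tilde{O}(m)$ time.

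The main obstacle, and where the proof demands the most care, is extracting the volume and conductance guarantees from this assignment. For the volume bound, a vertex $v\in S_j$ is misassigned only when its sketched heat-kernel vector is unusually far from $\mu_j$, so a Markov-type bound against the per-cluster concentration radius yields $\vol(A_j\triangle S_j)=O(k^3\log^2 k\cdot \Upsilon^{-1}\cdot \vol(S_j))$; the extra $\log^2 k$ factor absorbs both the Johnson--Lindenstrauss distortion and the approximation ratio of LSH, while the loss from $\Upsilon=\Omega(k^3)$ to $\Upsilon=\Omega(k^5)$ covers the slack needed to guarantee that greedy seeding succeeds with high probability. For the conductance bound, $|E(A_i,V\setminus A_i)|\leq |E(S_i,V\setminus S_i)|+\vol(A_i\triangle S_i)$ combined with $\vol(A_i)\geq \vol(S_i)/2$ yields $\phi_G(A_i)=O(\phi_G(S_i)+k^3\log^2 k\cdot \Upsilon^{-1})$. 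Summing these estimates over all clusters gives the theorem within the promised $\tilde{O}(m)$ runtime.
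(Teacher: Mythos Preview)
Your overall architecture matches the paper's: heat-kernel embedding approximated via matrix-exponential routines plus a Johnson--Lindenstrauss sketch, then a seed-and-trim step followed by approximate nearest-neighbor assignment, with the volume and conductance bounds derived exactly as you describe. There are, however, two concrete gaps.

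The serious one is your seeding step. You propose to draw $\tilde{O}(k)$ vertices \emph{proportional to degree} and assert that with high probability each $S_i$ receives a sample. That is false without further assumptions: the probability that a degree-proportional sample lands in $S_i$ is $\vol(S_i)/\vol(G)$, and nothing in the hypotheses prevents some cluster from having volume $o(\vol(G)/(k\,\mathrm{polylog}\,k))$, in which case $\tilde{O}(k)$ samples miss it with high probability. The paper samples proportional to $d_u\|x(u)\|^2$ instead. The key observation (\lemref{norm} together with \lemref{optcost}) is that $\|F(u)\|^2 \approx 1/\vol(S_i)$ for most $u\in S_i$, so $\sum_{u\in S_i} d_u\|F(u)\|^2 \approx 1$ for \emph{every} cluster regardless of its volume; hence each cluster carries roughly a $1/k$ fraction of the total sampling mass, and $\Theta(k\log k)$ samples suffice (\lemref{exactProb}, \lemref{success}). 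Degree-proportional sampling has no such equalization across clusters, so your third step does not go through.

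A secondary issue is the choice $t\sim(\log k)/\lambda_{k+1}$. With this $t$ the tail $\sum_{i>k} e^{-2t\lambda_i}\bigl(f_i(u)/\sqrt{d_u}-f_i(v)/\sqrt{d_v}\bigr)^2$ is damped only by $\mathrm{poly}(1/k)$ per term, which does not control a sum of $n-k$ terms. The paper takes $t\in(3\log n/\lambda_{k+1},\,1/(2\lambda_k))$ so that the additive error in \eq{approx_emb} is $O(n^{-5})$; the existence of such $t$ is where the gap assumption (and, for the heat-kernel branch, $k=\Omega(\log n)$) is used. Since $t$ is not known a priori, the paper also runs a doubling search over $t$ (Algorithm~\ref{fig:algoverview}), which your outline omits.
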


We remark that bounds of other expansion parameters of $k$-way partitioning can be derived from our analysis as well. For instance,
it is easy to see that $\rho(k)$ and  the \emph{normalized cut}~\citep{ShiM00} studied in machine learning, which is defined as the sum of the conductance of all returned clusters, differ by at most a factor of $k$, and the normalized cut value of a $k$-way partition from spectral clustering can be derived from our results.



\subsection{Related Work}

In the broadest sense, our algorithms are clustering routines.
Clustering can be formulated in many ways, and the study of
algorithms in many such formulations are areas of active work
~\cite{AwasthiBO12,Ben-David13,KannanVV04,MakarychevMV15}.
Among these, our work is most closely related to spectral clustering,
which is closely related to normalized or low conductance cuts~\cite{ShiM00}.
The $k$-way expansion that we study is always within a factor of $k$
of $k$-way normalized cuts.

Theoretical studies of  graph partitioning are often based
on augmenting the fractional relaxation of these cut problems
with additional constraints in the form of semidefinite programs
or Lasserre hierarchy.
The goal of our study is to obtain similar bounds using
more practical tools such as $k$-means and heat-kernel embedding.

%

Oveis Gharan and Trevisan~\cite{conf/soda/GharanT14} formulate the notion of  clusters with respect to the \emph{inner} and $\emph{outer}$ conductance: a cluster $S$ should have  low outer conductance, and the conductance of the induced subgraph by $S$ should be high.  Under a gap assumption between $\lambda_{k+1}$ and $\lambda_k$, they
  present a polynomial-time algorithm which finds a $k$-way partition $\aseq$ that satisfies the inner- and outer-conductance condition. In order to ensure that every $A_i$ has high inner conductance, they assume that $\lambda_{k+1}\geq\mathrm{poly}(k)\lambda_k^{1/4}$, which is much stronger than ours. Moreover, their algorithm runs in polynomial-time, in contrast to our nearly-linear time algorithm.


Dey et al.~\cite{Sidiropoulos14} studies the properties of the spectral embedding for graphs having a gap between $\lambda_k$ and $\lambda_{k+1}$ and presents a $k$-way partition algorithm, which is based on  $k$-center clustering and is similar in spirit to our work.  Using combinatorial arguments, they are able to show that the clusters concentrate around $k$ distant points in the spectral embedding. In contrast to our work, their result only holds for bounded-degree graphs, and cannot provide an approximate guarantee for individual clusters. Moreover, their algorithm runs in nearly-linear time only if $k=O(\mathrm{poly}\log n)$.

 We also explore the separation between $\lambda_k$ and $\lambda_{k + 1}$ from an algorithmic
 perspective, and show that this assumption interacts well with heat-kernel embeddings.
The heat kernel has been used in previous algorithms on local partitioning~\citep{Chung09}, balanced separators~\citep{osv12}.
It also plays a key role in current efficient approximation algorithms for
finding low conductance cuts~\citep{OrecchiaSVV08,Sherman09}.
However, most of these theoretical guarantees are through the matrix
multiplicative weights update framework~\citep{AroraHK12,AroraK07}.
Our algorithm instead directly uses the heat-kernel embedding to
find low conductance cuts.

There is  also a considerable amount of research on partitioning random graphs. For instance, in the Stochastic Block Model~(\textsf{SBM})~\citep{mcsherry2001spectral}, the input graph with $k$ clusters is generated according to probabilities $p$ and $q$ with $p>q$: an edge between   any  two vertices  within the same cluster is placed  with probability $p$, and an edge between any two vertices  from different clusters is placed with probability $q$. It is proven that spectral  algorithms  give the correct clustering for certain ranges of $p$ and $q$~\citep{mcsherry2001spectral,rohe2011spectral,vu2014}. However, the analysis of these algorithms cannot be easily generalized into our setting: we consider graphs where edges are not necessarily chosen independently with certain probabilities, but can be added in an ``adversarial" way.
For this reason, standard perturbation theorems used in the analysis of algorithms for \textsf{SBMs}, such as the Davis-Kahan theorem~\citep{DavisKahan}, cannot be always applied, and ad-hoc arguments specific for graphs, like our structure theorem~(\thmref{STinf}), become necessary.

\section{Preliminaries\label{sec:prelimininaries}}

Let $G=(V,E)$ be an undirected and unweighted graph with $n$ vertices and $m$ edges. 
The set of neighbors of a vertex $u$ is represented by $N(u)$,
and its degree is $d_u= |N(u)|$.
For any set $S\subseteq V$, let $\vol(S)\triangleq\sum_{u\in S} d_u$.
For any set $S, T\subseteq V$, we define
$E(S,T)$ to be the set of edges between $S$ and $T$,
aka $E(S, T) \triangleq \{\{u,v\}| u\in S\mbox{\ and\ } v\in T\}$.
For simplicity, we write $\partial S=E(S,V\setminus S)$ for any set $S\subseteq V$.
For two sets $X$ and $Y$, the symmetric difference of $X$ and $Y$ is defined
as $X\triangle Y\triangleq (X\setminus Y)\cup (Y\setminus X)$.

We work extensively with algebraic objects related to $G$.
We  use $\mat{D}$ to denote the $n\times n$ diagonal matrix
with $\mat{D}_{uu}=d_u$ for $u\in V[G]$. 
The \emph{Laplacian matrix} of $G$ is defined by $\mat{L}\triangleq \mat{D}-\mat{A}$, where $\mat{A}$ is the adjacency matrix of $G$
defined by $\mat{A}_{u,v}=1$ if $\{u,v\}\in E[G]$, and $\mat{A}_{u,v}=0$ otherwise.  The \emph{normalized Laplacian matrix} of $G$ is defined by
$
\calL\triangleq \mat{D}^{-1/2}\mat{L}\mat{D}^{-1/2} = \mat{I}-\mat{D}^{-1/2}\mat{A}\mat{D}^{-1/2}$.
For this matrix, we denote its $n$ eigenvalues with
$ 0=\lambda_1\leq\cdots\leq\lambda_n\leq 2$,
with their corresponding orthonormal eigenvectors  $f_1,\ldots, f_n$.
Note that if $G$ is connected, the first eigenvector is
$f_1=\mat{D}^{1/2}f$, where $f$ is any non-zero constant vector. 

For a vector $x\in \mathbb{R}^n$, the Euclidean norm
of  $x$ is given by
$
\|x\|=\left(\sum_{i=1}^n x_i^2\right)^{1/2}$.
For any $f:V\rightarrow \mathbb{R}$ and $h\triangleq \mathbf{D}^{-1/2}f$, the \emph{Rayleigh quotient} of $f$ with
respect to graph $G$ is  given by
\[
\mathcal{R}(f)\triangleq \frac{f^{\rot} \calL f}{\|f\|^2} = \frac{h^{\rot} \mat{L} h}{\|h\|_{\mat{D}}} = \frac{\sum_{\{u,v\}\in E(G)} \left(h(u)-h(v)\right)^2}{\sum_{u} d_u h(u)^2},
\]
where $
\|h\|_{\mat{D}} \triangleq h^{\rot} \mat{D} h$. 
Based on the Rayleigh quotient,  the conductance of a set $S_i$ can be expressed as  $\phi_G(S_i)=\mathcal{R}(\bar{g}_i)$,
and  the gap $\Upsilon(k)$ can be written as 
\begin{equation}\label{eq:def_Upsilon}
\Upsilon(k)=\frac{\lambda_{k+1}}{\rho(k)}=\min_{1\leq i\leq k}\frac{\lambda_{k+1}}{\phi_G(S_i)}=\min_{1\leq i\leq k}\frac{\lambda_{k+1}}{\mathcal{R}(\bar{g}_i)}.
\end{equation}
Since $k$ is always fixed as part of the algorithm's input, throughout the rest of the paper we always use $\Upsilon$ to express $\Upsilon(k)$ for simplicity. We will also  use $S_1,\ldots, S_k$  to express 
a $k$-way partition of $G$ achieving  $\rho(k)$.
Note that this partition may not be unique.


\section{Connection Between Eigenvectors and Indicator Vectors of Clusters\label{sec:structure}}


In this section we study the relations between the multiple cuts of a graph and the eigenvectors of the graph's normalized Laplacian matrix.
Given clusters $S_1 \ldots S_k$, define the indicator vector of cluster $S_i$ by
\begin{equation}
\label{eq:const_g}
g_i(u)=
\left\{
	\begin{array}{ll}
		1  & \mbox{if } u \in S_i, \\
		0 & \mbox{if } u \not\in {S_i},
	\end{array}
\right.
\end{equation}
and define the  corresponding normalized indicator vector  by
\begin{equation}
\overline{g}_i=\frac{\mat{D}^{1/2} g_i}{\|\mat{D}^{1/2}g_i\|}.
\end{equation}
A basic result in spectral graph theory states that $G$ has $k$ connected components if and only if the $k$ smallest eigenvalues are $0$,
 implying that the spaces spanned by  $f_1,\cdots, f_k$ and $\bar{g}_1,\cdots, \bar{g}_k$ are the same. Generalizing this result, we expect that these two spaces would be still similar if these $k$ components of $G$ are loosely connected, in the sense that (i) every eigenvector $f_i$ can be approximately expressed by a linear combination of  $\{\overline{g}_i\}_{i=1}^k$, and (ii) every indicator vector $\bar{g}_i$ can be approximately expressed by a linear combination of  $\{f_i\}_{i=1}^k$. This leads to our structure theorem, which is illustrated in \cref{fig:relation}.



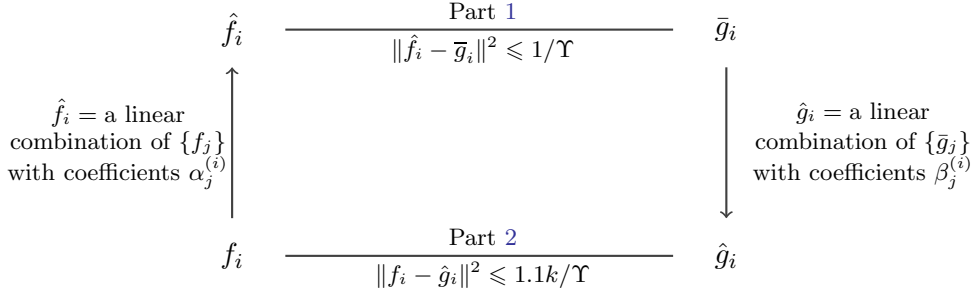
\begin{figure}[ht]\centering
\begin{tikzpicture}

\node (a) at (0, 0) {{$f_i$}};
\draw [->, color=darkgray,thick] (0,0.5) -- (0,2.5);
\node (b) at (0, 3) {{$\hat{f}_i$}};
\node (c) at (6.5, 0) {{$\hat{g}_i$}};
\node (d) at (6.5, 3) {{$\bar{g}_i$}};
\node (e) at (-1.5, 1.9) {\footnotesize{$\hat{f}_i=\mbox{a linear}$}};
\node (e) at (-1.5, 1.5) {\footnotesize{$\mbox{combination of } \{f_j\}$}};
\node (e) at (-1.5, 1.1) {\footnotesize{$\mbox{with coefficients}~\alpha^{(i)}_j$}};

\node (e) at (8.3, 1.9) {\footnotesize{$\hat{g}_i=\mbox{a linear}$}};
\node (e) at (8.3, 1.5) {\footnotesize{$\mbox{combination of } \{\bar{g}_j\}$}};
\node (e) at (8.3, 1.1) {\footnotesize{$\mbox{with coefficients}~\beta^{(i)}_j$}};
\draw [->, color=darkgray,thick] (6.5,2.5) -- (6.5,0.5);
\draw [ color=darkgray,thick] (0.7,0) -- (5.8,0);
\draw [ color=darkgray,thick] (0.7,3) -- (5.8,3);
\node (g) at (3.3, 3.25) {\footnotesize{Part~\ref{part:lincombK}}};
\node (h) at (3.3,2.75) {\footnotesize{$\|\hat{f}_i-\overline{g}_i  \|^2 \le1/\Upsilon$}};
\node (i) at (3.3, 0.25) {\footnotesize{Part~\ref{part:approxthm}}};
\node (k) at (3.3,-0.25) {\footnotesize{$\|f_i - \hat{g}_i\|^2 \le1.1 k/ \Upsilon$}};
\end{tikzpicture}
\caption{Relations among $\{\hat{f}_i\}$, $\{f_i\}$, $\{\bar{g}_i\}$, and $\{\hat{g}_i\}$ given in \cref{thm:structureFormal}.
Here $\Upsilon$ is the gap defined with respect to $\lambda_{k+1}$ and $\rho(k)$.\label{fig:relation} }
\end{figure}


\begin{thm}[The Structure Theorem, Formal Statement]\label{thm:structureFormal}
Let $\Upsilon=\Omega(k^2)$, and $1\leq i\leq k$. Then, the following statements hold:
\begin{enumerate}
\item \label{part:lincombK}
There is  a linear combination of the eigenvectors $f_1,\ldots, f_k$
with coefficients $\alpha_j^{(i)}$:
$
\hat{f}_i= \alpha^{(i)}_1f_1+\cdots + \alpha^{(i)}_kf_k,
$
such that
$
\left\|\overline{g}_i - \hat{f}_i\right\|^2 \le  1/\Upsilon$.

\item \label{part:approxthm}
There is a linear combination of the vectors $\bar{g}_1,\dots, \bar{g}_k$ with coefficients $\beta_j^{(i)}$:
$
\hat{g}_i=\beta_1^{(i)}\bar{g}_1+\cdots + \beta_k^{(i)}\bar{g}_k,
$
such that
$
\left\|f_i - \hat{g}_i\right\|^2 \le 1.1k/ \Upsilon$.
\end{enumerate}
\end{thm}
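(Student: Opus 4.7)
My plan is to tackle the two statements by quite different routes. The first part is the ``easy'' direction and follows immediately from a Rayleigh-quotient calculation in the eigenbasis of $\calL$. The second part is the substantive one, and my plan is to deduce it from the first via a Gram-matrix argument that converts the one-sided closeness of $\bar g_i$ to the low-frequency subspace into the reverse closeness of each $f_i$ to the span of the indicator vectors.

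For the first part, I would expand $\bar g_i=\sum_{j=1}^n \alpha_j^{(i)} f_j$ with $\alpha_j^{(i)}=\langle\bar g_i,f_j\rangle$ and take $\hat f_i:=\sum_{j\le k}\alpha_j^{(i)} f_j$ to be the projection onto $\spn\{f_1,\dots,f_k\}$. The identity $\mathcal{R}(\bar g_i)=\sum_j\lambda_j(\alpha_j^{(i)})^2$, combined with $\mathcal{R}(\bar g_i)=\phi_G(S_i)\le\rho(k)$ and $\lambda_j\ge\lambda_{k+1}$ for $j>k$, instantly yields
\[
\|\bar g_i-\hat f_i\|^2 \;=\; \sum_{j>k}(\alpha_j^{(i)})^2 \;\le\; \rho(k)/\lambda_{k+1} \;=\; 1/\Upsilon.
\]

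For the second part, let $r_i:=\bar g_i-\hat f_i\in\spn\{f_{k+1},\dots,f_n\}$ denote the residual from the first part, and consider the $k\times k$ Gram matrix $M$ with $M_{ii'}=\langle\hat f_i,\hat f_{i'}\rangle$. Since $\hat f_{i'}\perp r_i$, orthonormality of the $\bar g_i$ yields $M_{ii}=1-\|r_i\|^2\ge 1-1/\Upsilon$ and, for $i\neq i'$, $|M_{ii'}|=|\langle r_i,r_{i'}\rangle|\le 1/\Upsilon$, so a Gershgorin-type diagonal-dominance bound gives $\lambda_{\min}(M)\ge 1-k/\Upsilon$. Hence $M$ is invertible and $\{\hat f_i\}_{i=1}^k$ is a basis of $\spn\{f_1,\dots,f_k\}$; each $f_i$ admits a unique expansion $f_i=\sum_j\beta_j^{(i)}\hat f_j$, and packaging the coefficients of the first part into the matrix $A=(\alpha_j^{(i)})$, one sees that $\beta^{(i)}$ is the $i$-th column of $A^{-1}$, so $\|\beta^{(i)}\|^2\le 1/\lambda_{\min}(M)$. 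Setting $\hat g_i:=\sum_j\beta_j^{(i)}\bar g_j$ gives $f_i-\hat g_i=-\sum_j\beta_j^{(i)} r_j$, and the residual Gram matrix $R$ with $R_{jj'}=\langle r_j,r_{j'}\rangle$ is positive semidefinite with $\tr(R)\le k/\Upsilon$, whence
\[
\|f_i-\hat g_i\|^2 \;=\; (\beta^{(i)})^{\rot} R\,\beta^{(i)} \;\le\; \tr(R)\,\|\beta^{(i)}\|^2 \;\le\; \frac{k/\Upsilon}{1-k/\Upsilon} \;\le\; 1.1\,k/\Upsilon
\]
provided $\Upsilon=\Omega(k^2)$ with a sufficiently large hidden constant.

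The hard part will be the second statement: the first part by itself gives no direct information on how $f_i$ sits relative to $\{\bar g_i\}$, and applying the Rayleigh quotient to $f_i$ only recovers the tautology $\mathcal{R}(f_i)=\lambda_i$. The essential leverage is to lift the first part to a matrix-level near-identity statement ($M\approx I_k$) and invert it; the slight loss $1/(1-k/\Upsilon)$ introduced by this inversion is exactly what accounts for the constant $1.1$ in the bound and forces the hypothesis to scale as $\Omega(k^2)$ rather than merely $\Omega(k)$.
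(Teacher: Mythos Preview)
Your proof is correct and follows essentially the same route as the paper: Part~1 by the Rayleigh-quotient calculation in the eigenbasis, and Part~2 by showing via Ger\v{s}gorin that the Gram matrix $M=A^{\rot}A$ of the $\hat f_i$ is near the identity, then inverting to expand each $f_i$ in the $\{\hat f_j\}$ basis and bounding the residual. Your final estimate $(\beta^{(i)})^{\rot}R\,\beta^{(i)}\le\tr(R)\,\|\beta^{(i)}\|^2$ is slightly slicker than the paper's coordinate-wise bound on $\max_j(\beta_j^{(i)})^2$ and in fact only needs $\Upsilon\ge 11k$ to get the constant $1.1$, so your closing remark that the inversion ``forces the hypothesis to scale as $\Omega(k^2)$'' overstates what your own argument actually requires.
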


 Part~\ref{part:lincombK} of \cref{thm:structureFormal}  shows that  the normalized
indicator vectors $\bar{g}_i$ of every cluster $S_i$ can be approximated by a linear
combination of \emph{the first $k$ eigenvectors}, with respect to the value of $\Upsilon$.
The proof follows from the fact that if $\bar{g}_i$ has small Rayleigh quotient, then the inner product between $\bar{g}_i$
and the eigenvectors corresponding to larger eigenvalues must be small.  This statement was also shown implicitly in
 Theorem 2.2 of \cite{ABS10}.

\begin{proof}[Proof of Part~1 of \cref{thm:structureFormal}]
We write $\overline{g}_i$ as a linear combination of the eigenvectors of $\calL$, i.e.
\[
\overline{g}_i= \alpha^{(i)}_1f_1+\cdots + \alpha^{(i)}_nf_n
\]
and let the vector $\hat{f}_i$ be the projection of vector $\bar{g}_i$ on the subspace
spanned by  $\{f_i\}_{i=1}^k$, i.e.
\[
\hat{f}_i= \alpha^{(i)}_1f_1+\cdots + \alpha^{(i)}_kf_k.
\]
By  the definition of Rayleigh quotients, we have that
\begin{align*}
 \calR(\overline{g}_i) 	&=\left(\alpha^{(i)}_1 f_1 +\dots + \alpha^{(i)}_n f_n\right)^{\rot} \calL \left(\alpha^{(i)}_1 f_1 +\dots + \alpha^{(i)}_n  f_n\right) \\
								&=\left(\alpha^{(i)}_1\right)^2 \lambda_1 + \dots + \left(\alpha^{(i)}_n\right)^2 \lambda_n \\
								&\geq \left(\alpha^{(i)}_2\right)^2 \lambda_2 + \cdots + \left(\alpha^{(i)}_k\right)^2 \lambda_k + \left(1-\alpha' -\left(\alpha^{(i)}_1\right)^2\right)\lambda_{k+1},
\end{align*}
where $\alpha'\triangleq\left(\alpha^{(i)}_2\right)^2+\cdots +\left(\alpha^{(i)}_k\right)^2$. Therefore, we have that
\[
1-\alpha'-\left(\alpha^{(i)}_1\right)^2
\leq \mathcal{R}(\overline{g}_i)/\lambda_{k+1}\leq 1/\Upsilon,
\]
and
\[
\|\overline{g}_i - \hat{f}_i\|^2= \left(\alpha^{(i)}_{k+1}\right)^2+\cdots +\left(\alpha^{(i)}_n\right)^2= 1-\alpha'-\left(\alpha^{(i)}_1\right)^2  \leq 1/\Upsilon,
\]
which finishes the proof.
\end{proof}

 Part~\ref{part:approxthm} of \cref{thm:structureFormal} is more interesting, and shows that the opposite direction holds as well,
i.e.,  any $f_i~(1\leq i\leq k)$  can be approximated by a linear combination of
the normalized indicator vectors $\{\overline{g}_i\}_{i=1}^k$. To sketch the proof,  note that  if  we could write every
$\overline{g}_i$ \emph{exactly} as a linear combination of $\{f_i\}_{i=1}^k$,
then we could  write every $f_i~(1\leq i\leq k)$ as a linear combination
of  $\{\overline{g}_i\}_{i=1}^k$.
 This is because both of $\{f_i\}_{i=1}^k$ and $\{\overline{g}_i\}_{i=1}^k$
are sets of linearly independent vectors of the same dimension
and $\spn{\left\{\overline{g}_1,\dots,\overline{g}_k\right\}} \subseteq \spn{\{f_1,\dots,f_k\}}$.
However, the $\overline{g}_i$'s are only close to a linear
combination of the first $k$ eigenvectors, as shown in  Part~\ref{part:lincombK}.
We will denote this combination as $\hat{f}_i$, and use the fact that the errors
of approximation are small to show that these $\{\hat{f}_i\}_{i=1}^k$ are
almost orthogonal between each other.
This allows us to show that
$\spn{\{\hat{f}_1,\dots,\hat{f}_k\}} = \spn{\{f_1,\dots,f_k\}}$,
which  implies Part~\ref{part:approxthm}.

We will use the following two classical results in our proof.

\begin{thm}[Ger\v{s}gorin Circle Theorem]\label{thm:disc} Let $\mat{A}$ be an $n\times n$  matrix , and let $R_i(\mat{A})=\sum_{j\neq i} |\mat{A}_{i,j}|$, for $ 1\leq i \leq n.$
Then, all eigenvalues of  $\mat{A}$ are in the union of Ger\v{s}gorin Discs defined by
 \[
 \bigcup_{i=1}^n \left\{z\in \mathbb{C}: |z-\mat{A}_{i,i}|\leq R_i(\mat{A})\right\}.
 \]
\end{thm}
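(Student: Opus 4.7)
The plan is to prove the Ger\v{s}gorin Circle Theorem by the standard ``largest coordinate'' argument, which locates each eigenvalue via the dominant entry of an associated eigenvector. First I would fix an arbitrary eigenvalue $\lambda \in \mathbb{C}$ of $\mat{A}$ together with a nonzero eigenvector $v \in \mathbb{C}^n$ satisfying $\mat{A} v = \lambda v$. Then I would choose an index $i \in \{1,\dots,n\}$ at which $|v_i|$ is maximized; because $v \neq 0$ we have $|v_i| > 0$, so division by $v_i$ is legal in what follows.

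Next, I would examine the $i$-th coordinate of the identity $\mat{A}v = \lambda v$, namely $\sum_{j} \mat{A}_{ij} v_j = \lambda v_i$, and isolate the diagonal contribution to obtain $(\lambda - \mat{A}_{ii}) v_i = \sum_{j \neq i} \mat{A}_{ij} v_j$. Dividing through by $v_i$, taking absolute values, and invoking the triangle inequality yields $|\lambda - \mat{A}_{ii}| \leq \sum_{j \neq i} |\mat{A}_{ij}| \cdot |v_j|/|v_i|$. By the choice of $i$, each ratio $|v_j|/|v_i|$ is at most $1$, so $|\lambda - \mat{A}_{ii}| \leq \sum_{j \neq i} |\mat{A}_{ij}| = R_i(\mat{A})$. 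This places $\lambda$ inside the $i$-th Ger\v{s}gorin disc, and since $\lambda$ was an arbitrary eigenvalue the union over $i$ contains the entire spectrum, as desired.

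There is essentially no deep obstacle: the argument uses only the definition of an eigenvector and the triangle inequality. The one subtlety worth highlighting is that the dominant index $i$ depends on the chosen eigenvector, so different eigenvalues may be certified by different discs; this is exactly why the conclusion is phrased as membership in the \emph{union} of discs, and why one does not (and cannot, in general) claim that each individual disc contains an eigenvalue.
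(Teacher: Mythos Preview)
Your proof is correct and is the standard argument for the Ger\v{s}gorin Circle Theorem. Note that the paper does not actually prove this statement: it is quoted as a classical result (alongside a perturbation theorem from Horn--Johnson) to be used as a black box in the proof of Part~2 of the Structure Theorem, so there is no in-paper proof to compare against.
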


\begin{thm}[Corollary 6.3.4, \cite{HJ12}]
\label{thm:perturbation}
Let $\mat{A}$ be an $n\times n$ real and  symmetric matrix with eigenvalues $\lambda_1,\dots,\lambda_n$, and $\mat{E}$ be an $n\times n$ matrix. If $\hat{\lambda}$ is an eigenvalue of $\mat{A}+\mat{E}$, then there is some eigenvalue $\lambda_i$ of $\mat{A}$ for which $|\hat{\lambda}-\lambda_i|\le \|\mat{E}\|$.
\end{thm}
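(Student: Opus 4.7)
The plan is to derive this classical perturbation bound (essentially Bauer--Fike specialized to normal matrices) by turning the eigenvalue equation for $\mat{A}+\mat{E}$ into a norm estimate on the resolvent of $\mat{A}$, and then using normality to evaluate that resolvent norm exactly. First, I would dispose of the trivial case: if $\hat\lambda = \lambda_i$ for some $i$, the inequality $|\hat\lambda - \lambda_i| \le \|\mat{E}\|$ holds automatically, so I may assume that $\hat\lambda \notin \{\lambda_1,\dots,\lambda_n\}$, whence $\hat\lambda\mat{I} - \mat{A}$ is invertible.

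Next, pick an eigenvector $v\neq 0$ of $\mat{A}+\mat{E}$ with $(\mat{A}+\mat{E})v = \hat\lambda v$. Rewriting this as $(\hat\lambda\mat{I} - \mat{A})v = \mat{E}v$ and inverting gives $v = (\hat\lambda\mat{I} - \mat{A})^{-1}\mat{E}v$. Taking spectral norms on both sides and dividing by $\|v\|>0$ produces the key inequality
\[
1 \;\le\; \bigl\|(\hat\lambda\mat{I} - \mat{A})^{-1}\bigr\| \cdot \|\mat{E}\|.
\]
This step is formal and does not use normality yet; it only uses submultiplicativity of the spectral norm.

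The heart of the argument is evaluating $\|(\hat\lambda\mat{I} - \mat{A})^{-1}\|$ via normality. Since $\mat{A}$ is normal, the spectral theorem provides a unitary $\mat{U}$ with $\mat{A} = \mat{U}\mat{\Lambda}\mat{U}^{*}$, where $\mat{\Lambda}$ is the diagonal matrix of eigenvalues $\lambda_1,\dots,\lambda_n$. Then
\[
(\hat\lambda\mat{I} - \mat{A})^{-1} \;=\; \mat{U}\,(\hat\lambda\mat{I} - \mat{\Lambda})^{-1}\,\mat{U}^{*},
\]
and because unitary conjugation preserves the spectral norm, while the spectral norm of a diagonal matrix equals its largest absolute diagonal entry, I obtain
\[
\bigl\|(\hat\lambda\mat{I} - \mat{A})^{-1}\bigr\| \;=\; \max_{1\le i\le n} \frac{1}{|\hat\lambda - \lambda_i|} \;=\; \frac{1}{\min_{1\le i\le n} |\hat\lambda - \lambda_i|}.
\]
Substituting this into the previous inequality and rearranging yields $\min_i |\hat\lambda - \lambda_i| \le \|\mat{E}\|$, which is exactly the claim.

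The one step that genuinely requires care — and the only place where the hypothesis of normality is used — is the identification of $\|(\hat\lambda\mat{I} - \mat{A})^{-1}\|$ with $1/\min_i |\hat\lambda - \lambda_i|$. For a general diagonalizable $\mat{A} = \mat{S}\mat{\Lambda}\mat{S}^{-1}$ this step only gives $\|(\hat\lambda\mat{I} - \mat{A})^{-1}\| \le \kappa(\mat{S})/\min_i |\hat\lambda - \lambda_i|$, producing the standard Bauer--Fike bound with an extra condition number factor; the fact that a normal $\mat{A}$ admits a \emph{unitary} diagonalizer kills that factor and delivers the clean inequality stated here. Everything else is routine linear algebra.
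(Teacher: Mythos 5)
Your argument is correct: the resolvent inequality $1 \le \|(\hat\lambda\mat{I}-\mat{A})^{-1}\|\,\|\mat{E}\|$ combined with the exact evaluation of the resolvent norm for a normal (unitarily diagonalizable) matrix is precisely the standard Bauer--Fike proof, which is the one given in the cited source (Horn and Johnson, Corollary 6.3.4); the paper itself quotes the result without proof. Nothing is missing, and the remark that normality is what removes the eigenvector condition number $\kappa(\mat{S})$ is exactly the right point of care.
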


\begin{proof}[Proof of Part~\ref{part:approxthm} of \cref{thm:structureFormal}]
By Part~\ref{part:lincombK}, every $\overline{g}_i$ is approximated by a vector $\hat{f}_i$ defined by
\[
\hat{f}_i=\alpha^{(i)}_1 f_1+\cdots \alpha^{(i)}_k f_k.
\]

Define a $k$ by $k$ matrix $\mat{A}$ such that $\mat{A}_{i,j}=\alpha^{(j)}_i$, i.e.,
the $j$th column of matrix $\mat{A}$ consists of values $\left\{\alpha^{(j)}_i\right\}_{i=1}^k$ representing $\hat{f}_j$. We express the $j$th column of $\mat{A}$ by a vector $\alpha^{(j)}$, defined as
\[
\alpha^{(j)}= \left(\alpha^{(j)}_1,\cdots, \alpha^{(j)}_k\right)^{\rot}.
\]

We will show that the vectors $\left\{\alpha^{(j)}\right\}_{j=1}^k$ are linearly independent, which implies that $\{\hat{f}^{(j)}\}_{j=1}^k$ are linearly independent as well. 
To prove this, we will show that $\mat{A}^{\rot}\mat{A}$ has no zero eigenvalue, and hence 
  $\mat{A}$ is invertible. 

First of all, notice that it holds  by the orthonormality of $\{f_i\}_{i=1}^k$ that
\begin{align*}
\left|\left\langle \alpha^{(i)}, \alpha^{(j)} \right\rangle\right| &= \left|\left\langle \hat{f}_i, \hat{f}_j \right\rangle\right| 
			= \left|\left\langle \bar{g}_i - ( \bar{g}_i-\hat{f}_i),  \bar{g}_j - ( \bar{g}_j-\hat{f}_j) \right\rangle\right| \\
			&= \left| \left\langle  \bar{g}_i,  \bar{g}_j \right\rangle - \left\langle  \bar{g}_i-\hat{f}_i,  \bar{g}_j \right\rangle - \left\langle  \bar{g}_i, \bar{g}_j-\hat{f}_j  \right\rangle + \left\langle  \bar{g}_i-\hat{f}_i, \bar{g}_j-\hat{f}_j \right\rangle\right| \\
			&\le \left\|  \bar{g}_i-\hat{f}_i \right\| + \left\|  \bar{g}_j-\hat{f}_j \right\| + \left\|  \bar{g}_i-\hat{f}_i\right\| \left\|  \bar{g}_j-\hat{f}_j \right\| \\
								&\le 2\sqrt{1/ \Upsilon} + 1/ \Upsilon,
\end{align*}
where the first inequality follows from the orthonormality of $\bar{g}_i$ and $\bar{g}_j$, and the second inequality follows  by Part~\ref{part:lincombK} of \cref{thm:structureFormal}.
So it holds for any $i\ne j$ that
\[
\left|(\mat{A}^{\rot}\mat{A})_{i,j}\right| = \left|\sum_{\ell=1}^k \mat{A}_{\ell, i} \mat{A}_{\ell, j}\right| =
\left|\sum_{\ell=1}^k \alpha^{(i)}_{\ell} \alpha^{(j)}_{\ell}\right| = \left|\left\langle \alpha^{(i)}, \alpha^{(j)} \right\rangle \right| \leq
3\sqrt{1/ \Upsilon}
\]
while
$$(\mat{A}^{\rot}\mat{A})_{i,i} =\sum_{\ell=1}^k \left(\alpha^{(i)}_{\ell}\right)^2  \geq 1-1/\Upsilon.$$ Then,
 by the Ger\v{s}gorin Circle Theorem~(cf.~\cref{thm:disc}), it holds that  all the eigenvalues of $\mat{A}^{\rot}\mat{A}$ are at least
\[
1-1/\Upsilon - (k-1)\cdot 3\sqrt{1/\Upsilon}.
\]
Therefore, $\mat{A}$ has no eigenvalue with value $0$ as long as $\Upsilon>10 k^2$, proving that the vectors  $\left\{\alpha^{(j)}\right\}_{j=1}^k$ are linearly independent.
Combining this with the fact that $\spn{\{\hat{f}_1,\dots,\hat{f}_k\}} \subseteq \spn{\{f_1,\dots,f_k\}}$ and $\dmn( \spn{(\{f_1,\dots,f_k\})})= k$, it holds that 
\[
\spn{\{\hat{f}_1,\dots,\hat{f}_k\}} = \spn{\{f_1,\dots,f_k\}}.
\]
Hence,  we can write every $f_i~(1\leq i\leq k)$ as a linear combination of $\{\hat{f}_i\}_{i=1}^k$, i.e.,
\begin{equation}\label{eq:defbeta}
f_i=\beta^{(i)}_1\hat{f}_1+\beta^{(i)}_2\hat{f}_2+\cdots +\beta^{(i)}_k\hat{f}_k.
\end{equation}
Now define the value of $\hat{g}_i$ as
\begin{equation}\label{eq:defhatg}
\hat{g}_i=\beta^{(i)}_1\overline{g}_1+\beta^{(i)}_2\overline{g}_2+\cdots +\beta^{(i)}_k\overline{g}_k,
\end{equation}
 and define $\|\beta\|^2=\sum_{j=1}^k \left(\beta^{(i)}_j\right)^2$. Then, it holds that
\begin{align*}
1 = \|f_i\|^2 & = \sum_{\ell=1}^k \left( \beta^{(i)}_{\ell} \right)^2 \left\|\hat{f}_{\ell} \right\|^2 + \sum_{\ell\neq \ell'} \beta^{(i)}_{\ell} \beta^{(i)}_{\ell'} \left\langle \hat{f}_{\ell}, \hat{f}_{\ell'} \right\rangle\\
&
\ge \|\beta\|^2 (1-1/\Upsilon) -\sum_{\ell} \left| \beta^{(i)}_{\ell}\right| \sum_{\ell'\neq \ell} \left| \beta^{(i)}_{\ell'}\right| \left\langle \hat{f}_{\ell},  \hat{f}_{\ell'} \right\rangle\\
& \ge \|\beta\|^2 (1 -1/\Upsilon) -  \left(\sqrt{k}\cdot \|\beta\|\right) \cdot \left(\sqrt{k} \cdot \|\beta\|\right) \cdot \left(3\cdot \sqrt{1/\Upsilon}\right) \\
&\ge \left(1 -1/\Upsilon - 3k/\sqrt{\Upsilon} \right) \|\beta\|^2,
\end{align*}
where the second inequality holds by the Cauchy-Schwarz inequality. Since  $\Upsilon=\Omega(k^2)$, we have that 
\[
 \|\beta\|^2 \le \left(1-\frac{1}{\Upsilon}-\frac{3k}{\sqrt{\Upsilon}}\right)^{-1}\le 1.1.
\]
Combining this with Part~\ref{part:lincombK} of \cref{thm:structureFormal} and the Cauchy-Schwarz inequality, we have that 
\[
\| f_i-\hat{g}_i \| \leq \sum_{j=1}^k  \left|\beta_j^{(i)}\right| \left\|\hat{f}_j-\overline{g}_j \right\|\leq \left(1/\sqrt{\Upsilon}\right) \sum_{j=1}^k  \left|\beta_j^{(i)}\right| \le \sqrt{1.1 k/\Upsilon},
\]
which proves Part~\ref{part:approxthm} of the theorem.
\end{proof}

\cref{thm:structureFormal} shows a close connection between the
first $k$ eigenvectors and the indicator vectors of the clusters.
We leverage this and the fact that the $\{\hat{g}_i\}$'s are almost orthogonal between each other to show that, for any two different clusters $S_i$ and $S_j$, there exists
an eigenvector having reasonably different values on the coordinates which correspond to $S_i$ and $S_j$.

\begin{lem}\label{lem:deltalem}
Let $\Upsilon=\Omega(k^3)$.
For any $1\leq i\leq k$, let $\hat{g}_i=\beta_1^{(i)}\overline{g}_1+\cdots+\beta^{(i)}_k\overline{g}_k$ be such that
$
\|f_i - \hat{g}_i\| \le 1.1k/\Upsilon$.
Then, for any $\ell\neq j$, there  exists $i \in \{1,\dots,k\}$ such that
\begin{equation}\label{eq:defgamma}
\left|\beta^{(i)}_\ell - \beta^{(i)}_j\right| \ge\zeta\triangleq \frac{1}{10\sqrt{k}}.
\end{equation}
\end{lem}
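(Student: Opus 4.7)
The plan is to exploit the near-orthonormality of the coefficient vectors $\beta^{(i)} := (\beta^{(i)}_1,\dots,\beta^{(i)}_k)^{\rot} \in \R^k$ and then finish by a contradiction argument applied to a single fixed pair $\ell \neq j$. Because the normalized indicator vectors $\overline{g}_1, \ldots, \overline{g}_k$ have pairwise disjoint supports and unit norm, they are orthonormal, and consequently
\begin{equation*}
\langle \hat{g}_i, \hat{g}_{i'}\rangle = \langle \beta^{(i)}, \beta^{(i')}\rangle, \qquad \|\hat{g}_i\| = \|\beta^{(i)}\|.
\end{equation*}
Since $\{f_i\}_{i=1}^k$ are orthonormal and $\|f_i - \hat{g}_i\| \leq \sqrt{1.1k/\Upsilon}$, writing $\hat{g}_i = f_i - (f_i - \hat{g}_i)$, expanding $\langle \hat{g}_i, \hat{g}_{i'}\rangle$, and using Cauchy--Schwarz gives for $i \neq i'$
\begin{equation*}
\bigl|\langle \beta^{(i)}, \beta^{(i')}\rangle\bigr| \leq 2\sqrt{1.1k/\Upsilon} + 1.1k/\Upsilon,
\end{equation*}
and the same bound controls $\bigl|\|\beta^{(i)}\|^2 - 1\bigr|$. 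Under $\Upsilon = \Omega(k^3)$ with a sufficiently large hidden constant, each of these quantities is $O(1/k)$.

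Next I form the $k\times k$ matrix $B$ whose $i$-th row is $(\beta^{(i)})^{\rot}$. Then $BB^{\rot}$ is exactly the Gram matrix of $\{\beta^{(i)}\}_{i=1}^k$, with diagonal entries in $1 \pm O(1/k)$ and all off-diagonals of magnitude $O(1/k)$. The Ger\v{s}gorin Circle Theorem (\thmref{disc}) then bounds every eigenvalue of $BB^{\rot}$ from below by $1 - O(1)$; by choosing the constant hidden in $\Upsilon = \Omega(k^3)$ large enough, this lower bound exceeds $1/2$. Equivalently $\|Bv\|^2 \geq \tfrac{1}{2}\|v\|^2$ for every $v \in \R^k$.

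To conclude, fix $\ell \neq j$ and apply the previous inequality to $v = e_\ell - e_j$, where $e_\ell$ and $e_j$ denote standard basis vectors of $\R^k$. Since $\|v\|^2 = 2$ and $(Bv)_i = \beta^{(i)}_\ell - \beta^{(i)}_j$, we obtain
\begin{equation*}
\sum_{i=1}^k \bigl(\beta^{(i)}_\ell - \beta^{(i)}_j\bigr)^2 \;=\; \|Bv\|^2 \;\geq\; 1.
\end{equation*}
By pigeonhole some index $i$ satisfies $\bigl|\beta^{(i)}_\ell - \beta^{(i)}_j\bigr| \geq 1/\sqrt{k} > 1/(10\sqrt{k}) = \zeta$, as required. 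The main obstacle is purely constant-tracking: the hidden constant in $\Upsilon = \Omega(k^3)$ must be large enough both to make Gershgorin's lower bound on $\lambda_{\min}(BB^{\rot})$ a fixed positive number and to leave a constant-factor slack between $1/\sqrt{k}$ and $\zeta$. Once this is arranged the argument is essentially a one-line pigeonhole, and no ideas beyond those already used in \thmref{structureFormal} are needed.
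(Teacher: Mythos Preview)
Your proof is correct and takes a cleaner route than the paper. Both arguments start identically: use the orthonormality of $\{\overline{g}_i\}$ and of $\{f_i\}$ together with $\|f_i-\hat{g}_i\|\le\sqrt{1.1k/\Upsilon}$ to show that the Gram matrix of $\{\beta^{(i)}\}$ is an $O(1/k)$-perturbation of the identity, and then invoke Ger\v{s}gorin to bound its eigenvalues away from zero.

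From that point the paper proceeds by contradiction: assuming $\max_i|\beta^{(i)}_\ell-\beta^{(i)}_j|<1/(10\sqrt{k})$, it perturbs $B$ by a matrix $E$ with $\|E\|\le\sqrt{k}\,\zeta'$ so that two rows coincide, forcing $(B+E)^{\rot}(B+E)$ to be singular; an eigenvalue perturbation bound (\thmref{perturbation}) then contradicts the Ger\v{s}gorin estimate. You bypass this entirely by observing that the Ger\v{s}gorin bound on $BB^{\rot}$ transfers to $B^{\rot}B$ (since $B$ is square they share the same spectrum), so $\|Bv\|^2\ge\tfrac12\|v\|^2$ for all $v$; plugging in $v=e_\ell-e_j$ and pigeonholing finishes immediately. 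Your approach is shorter, avoids the auxiliary perturbation theorem, and even yields the slightly stronger constant $1/\sqrt{k}$ rather than $1/(10\sqrt{k})$. The only step worth making explicit in a final write-up is the passage from $\lambda_{\min}(BB^{\rot})\ge 1/2$ to $\lambda_{\min}(B^{\rot}B)\ge 1/2$, which you use implicitly.
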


\begin{proof}
Let $\beta^{(i)} =\left(\beta^{(i)}_1,\dots,\beta^{(i)}_k\right)^{\rot}$, for $1\leq i\leq k$.
Since $\bar{g}_{i} \perp \bar{g}_{j}$ for any $i \ne j$, we have by the orthonormality of $\overline{g}_1,\cdots,\overline{g}_k$  that
\begin{align*}
\left\langle \hat{g}_i , \hat{g}_j \right\rangle &= \left\langle \beta^{(i)}_1\overline{g}_1+\cdots +\beta^{(i)}_k\overline{g}_k , \beta^{(j)}_1\overline{g}_1+\cdots +\beta^{(j)}_k\overline{g}_k \right\rangle \\
						    &= \sum_{\ell=1}^k \beta^{(i)}_{\ell} \beta^{(j)}_{\ell} \|\overline{g}_{\ell}\|^2 = \left\langle \beta^{(i)} ,  \beta^{(j)} \right\rangle,
\end{align*}
 and
\begin{align*}
\left|\left\langle \beta^{(i)} ,  \beta^{(j)} \right\rangle \right|	&= \left| \left\langle \hat{g}_i , \hat{g}_j \right\rangle\right| = \left|\left\langle f_i - (f_i-\hat{g}_i), f_j - (f_j-\hat{g}_j) \right\rangle\right| \\
								&= \left|\langle f_i, f_j \rangle - \langle f_i-\hat{g}_i, f_j \rangle - \langle f_j-\hat{g}_j, f_i \rangle + \langle f_i-\hat{g}_i,f_j-\hat{g}_j \rangle\right| \\
								&\le \| f_i-\hat{g}_i \| + \| f_j-\hat{g}_j \| + \| f_i-\hat{g}_i \| \| f_j-\hat{g}_j \| \\
								&\le 2.2\sqrt{k/ \Upsilon} + 1.1k/ \Upsilon.
\end{align*}
Moreover, it holds that
\[
\left\| \beta^{(i)}\right\| = \left\| \hat{g}_i \right\| = \left\| f_i + \hat{g}_i  - f_i \right\| \leq 1 +  \left\|\hat{g}_i - f_i\right\| \leq 1+ \sqrt{1.1k/\Upsilon},
\]
and
\[
\left\| \beta^{(i)}\right\| = \left\| \hat{g}_i \right\| = \left\| f_i + \hat{g}_i  - f_i \right\| \geq 1 -  \left\|\hat{g}_i - f_i\right\| \geq 1- \sqrt{1.1k/\Upsilon},
\]
which implies that
\begin{equation}\label{eq:boundbeta}
\left\|\beta^{(i)}\right\|^2 \in \left(1 - (2.2\sqrt{k/ \Upsilon} + 1.1k/ \Upsilon), 1 +  2.2\sqrt{k/ \Upsilon} + 1.1k/ \Upsilon\right).
\end{equation}
In other words, we showed that $\beta^{(i)}$'s are almost orthonormal.

Now we construct a $k$ by $k$ matrix $\mat{B}$, where the $j$th column of $\mat{B}$ is $\beta^{(j)}$.
By the Ger\v{s}gorin Circle Theorem~(\cref{thm:disc}), all eigenvalues $\lambda$ of $\mat{B}^{\rot}\mat{B}$ satisfies
\begin{equation}\label{eq:aptb}
\left| \lambda-  (\mat{B}^{\rot}\mat{B})_{i,i} \right| \leq (k-1)\cdot ( 2.2\sqrt{k/ \Upsilon} + 1.1k/ \Upsilon)
\end{equation}
for any $i$. Combing this with \eq{boundbeta}, we have that the eigenvalues of  $\mat{B}^{\intercal}\mat{B}$ are close to 1.


Now we show that $\beta^{(i)}_\ell$ and $\beta^{(i)}_j$ are far
from each other by contradiction.
Suppose there exist $\ell \neq j$ such that
\[
\zeta' \triangleq \max\limits_{1\leq i\leq k} \left|\beta^{(i)}_\ell - \beta^{(i)}_j\right|  < \frac{1}{10\sqrt{k}}.
\]
 This implies that the $j$th row and $\ell$th row of matrix $\mat{B}$ are somewhat close to each other.
 Let us now define  matrix $\mat{E}\in\R^{k\times k}$, where
 \[
 \mat{E}_{\ell,i} \triangleq \beta^{(i)}_j - \beta^{(i)}_{\ell},
 \]
 and $\mat{E}_{t,i}=0$ for any $t\neq \ell$ and $1\leq i\leq k$. Moreover, let 
 \[
 \mat{Q}=\mat{B}+\mat{E}.
 \]
 Notice that $\mat{Q}$
  has two identical rows, and rank at most $k-1$. Therefore, $\mat{Q}$ has an eigenvalue with value $0$, and the spectral norm $\|\mat{E}\|$ of $\mat{E}$, the largest singular value of $\mat{E}$, is at most $\sqrt{k}\zeta'$. By definition of matrix $\mat{Q}$ we have that
\[\mat{Q}^{\rot}\mat{Q} =  \mat{B}^{\rot}\mat{B} +  \mat{B}^{\rot}\mat{E}+ \mat{E}^{\rot}\mat{B}+ \mat{E}^{\rot}\mat{E}.\]
Since $\mat{B}^{\rot}\mat{B}$ is symmetric and $0$ is an eigenvalue of $\mat{Q}^{\rot}\mat{Q}$, by \cref{thm:perturbation} we know that, if $\hat{\lambda}$ is an eigenvalue of $\mat{Q}^{\rot}\mat{Q}$, then there is an  eigenvalue $\lambda$ of $\mat{B}^{\rot}\mat{B}$ such that
\begin{align*}
|\hat{\lambda}-\lambda|& \leq \|\mat{B}^{\rot}\mat{E}+ \mat{E}^{\rot}\mat{B}+ \mat{E}^{\rot}\mat{E}\|\\
& \leq \|\mat{B}^{\rot}\mat{E}\| + \|\mat{E}^{\rot}\mat{B}\|+ \|\mat{E}^{\rot}\mat{E}\|\\
& \leq  4\sqrt{k}\zeta'+ k\zeta'^2,
\end{align*}
which implies  that
\[
\hat{\lambda}\geq \lambda-4\sqrt{k}\zeta'- k\zeta'^2\geq  1-k(2.2\sqrt{k/ \Upsilon} + 1.1k /\Upsilon)-4\sqrt{k}\zeta'- k\zeta'^2,
\]
due to \eq{boundbeta} and \eq{aptb}. By setting $\hat{\lambda}=0$,
we have that
\[
1-k(2.2\sqrt{k/\Upsilon}+1.1k/\Upsilon)-4\sqrt{k}\zeta' -k\zeta'^2 \leq 0.
\]
By the condition of  $\Upsilon=\Omega(k^3)$, the inequality above implies that $
\zeta' \ge \frac{1}{10\sqrt{k}}$, which leads to a contradiction.
\end{proof}

We point out that it was  shown in \cite{impCheeger13} that  the first $k$ eigenvectors can
be approximated by a $(2k + 1)$-step function.
The quality of the approximation is the same as the
one given by our structure theorem.
However,  a $(2k+1)$-step approximation is not enough to show that most vertices belonging to the same cluster are mapped close to each other in the spectral embedding.

We further  point out that  standard matrix perturbation theorems cannot be applied in our setting. For instance, we look at a well-clustered graph $G$ that  contains a subset $C$  of a cluster $S_i$  such that  most neighbors of vertices in $C$ are outside $S_i$. In this case, the adjacency matrix representing crossing edges of $G$ has high spectral norm, and hence  standard matrix perturbation arguments could not give us a meaningful result. However, our structure theorem takes the fact that $\vol(C)$ has to be small into account, and that is why the structure theorem is needed to analyze the cut structure of a graph.

\section{Analysis of Spectral Clustering\label{sec:withoutht}}

In this section we analyze an algorithm based on the classical spectral clustering paradigm, and give an approximation guarantee of this method on well-clustered graphs. We will show that  any $k$-means algorithm $\AlgoMean(\calX,k)$
with certain approximation guarantee can be used for  the $k$-way partitioning problem.
Furthermore, it suffices to call $\AlgoMean$ in a black-box manner with a point set $\calX \subseteq \mathbb{R}^{k}$.

 This section is structured as follows. We first give a quick overview of  spectral and $k$-means clustering in \secref{introkmean}. In \secref{analyembed}, we use the structure theorem to analyze the spectral embedding. \secref{analysis1} gives a general result about  $k$-means  when applied to this embedding, and the proof of \cref{thm:anlykmean}.

\subsection{$k$-Means Clustering\label{sec:introkmean} }

 Given a set of points $\calX\subseteq \R^d$, a \emph{k-means algorithm} $\AlgoMean(\calX,k)$ seeks to find a set $\mathcal{K}$ of $k$ centers $c_1,\cdots, c_k$ to minimize
  the sum  of the $\ell_2^2$-distance between $x\in\calX$ and the center to which it is assigned. Formally, for any partition $\calX_1,\cdots, \calX_k$ of the set $\calX\subseteq \R^d$, we define the cost function by
\[
\COST(\calX_1,\dots,\calX_k) \triangleq \min_{c_1,\dots,c_k \in \R^d} \sum_{i=1}^k\sum_{x \in \calX_i}  \| x-c_i\|^2,
\]
i.e.,
the $\COST$ function minimizes the total $\ell_2^2$-distance between the points $x$'s and their individually closest center  $c_i$, where $c_1,\ldots, c_k$ are chosen arbitrarily in  $\R^d$.
We further define  the optimal clustering cost by
\begin{equation}\label{eq:defdeltak}
\Delta_k^2(\calX)\triangleq\min_{\substack{\small \mathrm{partition }~ \calX_1,\ldots, \calX_k}}\COST(\calX_1,\dots,\calX_k).
\end{equation}

Spectral clustering can be described as follows: (i) Compute the bottom $k$ eigenvectors $f_1,\cdots, f_k$ of the normalized Laplacian matrix\footnote{Other graph matrices~(e.g. the adjacency matrix, and the Laplacian matrix) are also widely used in practice. Notice that, with proper normalization, the choice of these matrices does not substantially influence the performance of $k$-means algorithms.} of graph $G$. (ii) Map every vertex $u\in V[G]$ to a point $F(u)\in\R^k$ according to
\begin{equation}
\label{eq:embedding}
F(u) = \frac{1}{\mathsf{NormalizationFactor}(u)}\cdot\left(f_1(u),\dots,f_k(u)\right)^{\rot},
\end{equation}
with a proper normalization factor $\mathsf{NormalizationFactor}(u)\in\R$ for each $u\in V$. (iii)  Let $\mathcal{X}\triangleq\{F(u): u\in V\}$ be the set of the embedded points from vertices in $G$. Run $\AlgoMean(\mathcal{X},k)$, and group the vertices of $G$ into $k$ clusters according to the output of $\AlgoMean(\mathcal{X},k)$.
This approach that combines a $k$-means algorithm with a spectral embedding has been widely used
in practice for a long time, although there was a lack of rigorous analysis of its performance
prior to our result.

\subsection{Analysis of the Spectral Embedding\label{sec:analyembed}}

The first step of spectral clustering  is to map vertices of a graph into points in Euclidean space, through the spectral embedding  \eq{embedding}. This subsection analyzes the properties of this embedding. Let us define the normalization factor to be
\[
\mathsf{NormalizationFactor}(u)\triangleq \sqrt{d_u}.
\]
We will show that
the embedding \eq{embedding} with the normalization factor above  has very nice properties: embedded points from the same cluster $S_i$ are concentrated around their center $c_i\in \R^k$, and
 embedded points from different clusters of $G$ are far from each other.
These
properties imply that a simple $k$-means algorithm is able to produce a good clustering\footnote{Notice that this embedding is similar with the one used in \cite{conf/stoc/LeeGT12}, with the only difference that $F(u)$ is not normalized and so it is not necessarily a unit vector. This difference, though, is crucial for our analysis.}.

We first   define $k$ points $p^{(i)}\in \R^k~(1\leq i\leq k)$, where
\begin{equation}
\label{eq:centres}
p^{(i)}\triangleq\frac{1}{\sqrt{\vol{(S_i)}}}\left(\beta^{(1)}_i,\dots,\beta^{(k)}_i\right)^{\rot}
\end{equation}
and the parameters $\{ \beta_i^{(j)} \}_{j=1}^k$ are defined in \cref{thm:structureFormal}.
We will show in \cref{lem:optcost} that all embedded points $\mathcal{X}_i\triangleq\{F(u): u\in S_i\}~(1\leq i\leq k)$ are concentrated around $p^{(i)}$. Moreover, we  bound  the total $\ell_2^2$-distance between points in   $\mathcal{X}_i$ and $p^{(i)}$, which is proportional to $1/\Upsilon$: the bigger the value of $\Upsilon$, the higher concentration the points within the same cluster have. Notice that we \emph{do not} claim that $p^{(i)}$ is the actual center of $\calX_i$. However, these approximated points $p^{(i)}$'s suffice for our analysis.

\begin{lem}\label{lem:optcost}
It holds that $$\sum_{i=1}^k\sum_{u \in S_i} d_u \left\|  F(u) - p^{(i)} \right\|^2 \le 1.1k^2/\Upsilon.$$
\end{lem}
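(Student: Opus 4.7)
The plan is to reduce the claim to a straightforward coordinate-wise calculation and then apply Part~\ref{part:approxthm} of the Structure Theorem (\thmref{structureFormal}). The key observation is that the definition of $p^{(i)}$ has been rigged precisely so that, after multiplying by $d_u$, each coordinate of $F(u)-p^{(i)}$ becomes the pointwise discrepancy $f_j(u)-\hat g_j(u)$.

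First, I would fix $u\in S_i$ and compute the squared distance coordinate by coordinate. Since $\bar g_\ell(u)=\sqrt{d_u/\vol(S_\ell)}$ when $u\in S_\ell$ and $\bar g_\ell(u)=0$ otherwise, the linear combination $\hat g_j=\beta^{(j)}_1\bar g_1+\cdots+\beta^{(j)}_k\bar g_k$ evaluated at $u\in S_i$ collapses to a single term:
\[
\hat g_j(u)=\beta^{(j)}_i\sqrt{d_u/\vol(S_i)}.
\]
Comparing with the $j$th coordinate of $F(u)-p^{(i)}$, namely $f_j(u)/\sqrt{d_u}-\beta^{(j)}_i/\sqrt{\vol(S_i)}$, multiplying by $\sqrt{d_u}$ gives exactly $f_j(u)-\hat g_j(u)$. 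Therefore
\[
d_u\,\|F(u)-p^{(i)}\|^2=\sum_{j=1}^{k}\bigl(f_j(u)-\hat g_j(u)\bigr)^2
\quad\text{for every }u\in S_i.
\]

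Now I would sum this identity over $u\in S_i$ and over $i$. Since $\{S_i\}_{i=1}^{k}$ partitions $V$, swapping the order of summation yields
\[
\sum_{i=1}^{k}\sum_{u\in S_i}d_u\,\|F(u)-p^{(i)}\|^2
=\sum_{j=1}^{k}\sum_{u\in V}\bigl(f_j(u)-\hat g_j(u)\bigr)^2
=\sum_{j=1}^{k}\|f_j-\hat g_j\|^2.
\]
By Part~\ref{part:approxthm} of \thmref{structureFormal}, each term satisfies $\|f_j-\hat g_j\|^2\le 1.1k/\Upsilon$, so the total is at most $1.1k^2/\Upsilon$, which is the claimed bound.

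There is no substantive obstacle: the entire content of the lemma is bookkeeping that matches $p^{(i)}$ to the structure-theorem approximations $\hat g_j$. The only step requiring care is verifying the pointwise identity $d_u\|F(u)-p^{(i)}\|^2=\sum_j(f_j(u)-\hat g_j(u))^2$ — this is what determines why the factor $1/\sqrt{\vol(S_i)}$ (rather than, say, $1/\sqrt{|S_i|}$) appears in the definition \eqref{eq:centres} of $p^{(i)}$ and why $F$ is normalized by $\sqrt{d_u}$. Once that alignment is spelled out, the rest is a partition-based re-summation and one application of the structure theorem.
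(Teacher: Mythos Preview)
Your proof is correct and is essentially the same as the paper's: the paper packages your pointwise identity $\sqrt{d_u}\bigl(F(u)_j-p^{(i)}_j\bigr)=f_j(u)-\hat g_j(u)$ into the one-line statement $\sum_{i}\sum_{u\in S_i}d_u\bigl(F(u)_j-p^{(i)}_j\bigr)^2=\|\mat{D}^{-1/2}f_j-\mat{D}^{-1/2}\hat g_j\|_{\mat{D}}=\|f_j-\hat g_j\|^2$, and then sums over $j$ exactly as you do. The only difference is presentation; your version spells out explicitly why the normalizations in $F$, $\bar g_i$, and $p^{(i)}$ line up.
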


\begin{proof}
Since $\hat{g}_j(u) = \sqrt{\frac{d_u}{\vol(S_i)}} \beta^{(j)}_i $  and $p^{(i)}_j=\frac{1}{\sqrt{\vol(S_i)}} \beta_i^{(j)}$
hold for any $1\leq j \leq k$ and $u\in S_i$ by definition, we have that 
\begin{align*}
\sum_{i=1}^k \sum_{u \in S_i} d_u \left(F(u)_j - p^{(i)}_j\right)^2 &
= \sum_{i=1}^k \sum_{u\in S_i} d_u \left( \frac{1}{\sqrt{d_u}} f_j(u) - \frac{1}{\sqrt{\vol(S_i)}} \beta^{(j)}_i \right)^2\\
& =\sum_{i=1}^k \sum_{u\in S_i}  \left(  f_j(u) - \sqrt{\frac{d_u}{\vol(S_i)}} \beta^{(j)}_i \right)^2 \\
& =\sum_{i=1}^k \sum_{u\in S_i}  \left(  f_j(u) - \hat{g}_j(u)\right)^2 \\
&= \left\| f_j-\hat{g}_j \right\|^2\\
&\le 1.1k/\Upsilon,
\end{align*}
where the last inequality follows from  \cref{thm:structureFormal}.
Summing over all $j$ for $1\leq j\leq k$ implies that
\[
\sum_{i=1}^k \sum_{u \in S_i} d_u \left\|  F(u) - p^{(i)} \right\|^2 = \sum_{i=1}^k \sum_{j=1}^k \sum_{u \in S_i} d_u \left(F(u)_j - p^{(i)}_j\right)^2 \le 1.1k^2/\Upsilon.\qedhere
\]
\end{proof}

The next lemma shows that the $\ell^2_2$-norm of $p^{(i)}$ is inversely proportional to the volume of $S_i$. This implies that embedded points from a big cluster are close to the origin, while embedded points from a  small cluster are far from the origin.

\begin{lem}\label{lem:norm}
It holds for every $1\leq i\leq k$ that
\[
\frac{99}{100\vol(S_i)} \le \left\|p^{(i)}\right\|^2 \le \frac{101}{100\vol(S_i)}.
\]
\end{lem}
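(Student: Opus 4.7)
\medskip

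\noindent\textbf{Proof plan for Lemma~\ref{lem:norm}.} Unpacking the definition \eqref{eq:centres}, what we must bound is
\[
\|p^{(i)}\|^{2}\,\vol(S_i) \;=\; \sum_{j=1}^{k}\bigl(\beta^{(j)}_i\bigr)^{2}.
\]
Thus the key task is to control the quantity $\sum_{j}(\beta^{(j)}_i)^{2}$, which is the squared Euclidean norm of the $i$th row of the $k\times k$ matrix $\mat{B}$ whose $j$th column is the vector $\beta^{(j)}=\bigl(\beta^{(j)}_1,\dots,\beta^{(j)}_k\bigr)^{\rot}$ (this is exactly the matrix already introduced in the proof of \lemref{deltalem}). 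In other words, $\sum_{j}(\beta^{(j)}_i)^{2}=(\mat{B}\mat{B}^{\rot})_{i,i}$, so everything reduces to showing that the diagonal entries of $\mat{B}\mat{B}^{\rot}$ are close to $1$.

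The plan is to transfer the information we already have about the \emph{columns} of $\mat{B}$ to its \emph{rows} by exploiting that $\mat{B}$ is a square matrix. From the proof of \lemref{deltalem} we can extract two facts: the diagonal entries of $\mat{B}^{\rot}\mat{B}$ (the squared column norms $\|\beta^{(j)}\|^{2}$) satisfy $(\mat{B}^{\rot}\mat{B})_{j,j}\in 1\pm O(\sqrt{k/\Upsilon})$ by \eqref{eq:boundbeta}, and the off-diagonal entries satisfy $\bigl|(\mat{B}^{\rot}\mat{B})_{i,j}\bigr|=|\langle\beta^{(i)},\beta^{(j)}\rangle|\le 2.2\sqrt{k/\Upsilon}+1.1k/\Upsilon$. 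Applying the Ger\v{s}gorin Circle Theorem~(\thmref{disc}) to $\mat{B}^{\rot}\mat{B}$ then yields that every eigenvalue of $\mat{B}^{\rot}\mat{B}$ lies in $1\pm O\!\bigl(\sqrt{k^{3}/\Upsilon}\bigr)$.

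Now the main trick: because $\mat{B}$ is square, the nonzero eigenvalues of $\mat{B}\mat{B}^{\rot}$ coincide with those of $\mat{B}^{\rot}\mat{B}$, so every eigenvalue of $\mat{B}\mat{B}^{\rot}$ is also within $O(\sqrt{k^{3}/\Upsilon})$ of $1$. Since $\mat{B}\mat{B}^{\rot}$ is symmetric, this gives the operator-norm bound $\|\mat{B}\mat{B}^{\rot}-\mat{I}\|\le O\!\bigl(\sqrt{k^{3}/\Upsilon}\bigr)$, and in particular
\[
\bigl|(\mat{B}\mat{B}^{\rot})_{i,i}-1\bigr|\;=\;\bigl|e_i^{\rot}(\mat{B}\mat{B}^{\rot}-\mat{I})e_i\bigr|\;\le\; O\!\bigl(\sqrt{k^{3}/\Upsilon}\bigr).
\]
Under the standing assumption $\Upsilon=\Omega(k^{3})$ (with a sufficiently large hidden constant) this error is at most $1/10$, which gives precisely $9/10\le (\mat{B}\mat{B}^{\rot})_{i,i}\le 11/10$ and hence the claimed bounds on $\|p^{(i)}\|^{2}$.

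The only substantive obstacle is the passage from a statement about the column structure of $\mat{B}$ (essentially, $\mat{B}^{\rot}\mat{B}\approx\mat{I}$, which is the ``almost orthonormal'' conclusion we already have) to a statement about individual row norms of $\mat{B}$. I expect this to be the main step to get right, and I plan to handle it via the observation that for a \emph{square} matrix $\mat{B}^{\rot}\mat{B}$ and $\mat{B}\mat{B}^{\rot}$ are unitarily similar; spectral closeness of $\mat{B}^{\rot}\mat{B}$ to $\mat{I}$ then transfers to $\mat{B}\mat{B}^{\rot}$, and closeness in operator norm automatically bounds every diagonal entry, which is exactly what $\|p^{(i)}\|^{2}$ depends on.
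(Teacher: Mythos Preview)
Your proposal is correct and follows essentially the same route as the paper: the paper also identifies $\|p^{(i)}\|^2\vol(S_i)$ with the $i$th diagonal entry of $\mat{B}\mat{B}^{\rot}$, invokes the Ger\v{s}gorin bound \eq{aptb} on the eigenvalues of $\mat{B}^{\rot}\mat{B}$, passes to $\mat{B}\mat{B}^{\rot}$ via the fact that $\mat{B}$ and $\mat{B}^{\rot}$ share the same singular values, and then reads off the diagonal entries. Your write-up is in fact slightly more explicit than the paper's about why eigenvalue closeness to $1$ controls the individual diagonal entries.
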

\begin{proof}
By \eq{centres}, we have that
\[
\left\|p^{(i)}\right\|^2 = \frac{1}{\vol(S_i)}\left\| \left(\beta^{(1)}_i,\dots,\beta^{(k)}_i\right)^{\rot} \right\|^2.
\]
Notice that $p^{(i)}$ is just the $i$th row of the matrix $\mat{B}$ defined in the proof of \cref{lem:deltalem}, normalized by $\sqrt{\vol(S_i)}$. Since $\mat{B}$ and $\mat{B}^{\rot}$ share the same
singular values (this follows from the SVD decomposition), by \eq{aptb} the eigenvalues of  $\mat{B}\mat{B}^{\rot}$ are close to $1$. But since $(\mat{B}\mat{B}^{\rot})_{i,i}$ is equal
to the $\ell_2^2$-norm of the $i$th row of $\mat{B}$, we have that
\begin{equation}\label{eq:rangebeta}
\left\| \left(\beta^{(1)}_i,\dots,\beta^{(k)}_i\right)^{\rot} \right\|^2 \in  \left(1 - (2.2\sqrt{k/ \Upsilon} + 1.1k/ \Upsilon), 1 +  2.2\sqrt{k/ \Upsilon} + 1.1k/ \Upsilon\right),
\end{equation}
which implies  the statement.
\end{proof}

 We will further show in \cref{lem:deltaclusters} that these points $p^{(i)} (1\leq i\leq k)$ exhibit another excellent property: the
distance between $p^{(i)}$ and $p^{(j)}$ is inversely proportional to the volume of the \emph{smaller} cluster between $S_i$ and $S_j$.  Therefore, points in  $S_i$ of smaller  $\vol(S_i)$ are far from points  in $S_j$ of bigger $\vol(S_j)$. Notice that, if this were not the case,  a misclassification of a small fraction of  points in $S_j$ could introduce  a large error to $S_i$.

\begin{lem}\label{lem:deltaclusters}
For every $i \ne j$, it holds that
\[
\left\|p^{(i)} - p^{(j)}\right\|^2 \ge \frac{\zeta^2}{10\min{\{\vol(S_{i}),\vol(S_j)\}}},
\]
where $\zeta$ is defined in \eq{defgamma}.
\end{lem}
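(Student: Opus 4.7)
The plan is to expand $\|p^{(i)}-p^{(j)}\|^2$ in terms of the rows of the matrix $\mat{B}$ constructed in the proof of Lemma~\ref{lem:deltalem}, and then combine the near-orthonormality of those rows (implicit in the Ger\v{s}gorin analysis there) with the separation guaranteed by Lemma~\ref{lem:deltalem} itself. Write $V_i=\vol(S_i)$, $V_j=\vol(S_j)$, assume without loss of generality that $V_i\le V_j$, and set $b_i=(\beta^{(1)}_i,\dots,\beta^{(k)}_i)^{\rot}$, the $i$-th row of $\mat{B}$ viewed as a column vector, so that $p^{(i)}=b_i/\sqrt{V_i}$. The starting point is the expansion
\[
\|p^{(i)}-p^{(j)}\|^2 \;=\; \frac{\|b_i\|^2}{V_i}+\frac{\|b_j\|^2}{V_j}-\frac{2\langle b_i,b_j\rangle}{\sqrt{V_iV_j}}.
\]

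First I would collect three estimates. Equation~\eqref{eq:rangebeta} supplies $\|b_i\|^2,\|b_j\|^2 \in [1-\delta,\,1+\delta]$ with $\delta=2.2\sqrt{k/\Upsilon}+1.1k/\Upsilon$, which is much smaller than $\zeta^2=1/(100k)$ provided the hidden constant in $\Upsilon=\Omega(k^3)$ is taken sufficiently large. Applying Lemma~\ref{lem:deltalem} to the pair of indices $(i,j)$ delivers some coordinate $m\in\{1,\dots,k\}$ with $|\beta^{(m)}_i-\beta^{(m)}_j|\ge\zeta$, and hence $\|b_i-b_j\|^2\ge\zeta^2$. The polarisation identity $2\langle b_i,b_j\rangle=\|b_i\|^2+\|b_j\|^2-\|b_i-b_j\|^2$ then yields $\langle b_i,b_j\rangle\le 1+\delta-\zeta^2/2$. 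Substituting these three bounds into the displayed expression and completing the square converts it to
\[
\Bigl(\tfrac{1}{\sqrt{V_i}}-\tfrac{1}{\sqrt{V_j}}\Bigr)^{2} \;+\; \frac{\zeta^2}{\sqrt{V_iV_j}} \;-\; \delta\Bigl(\tfrac{1}{\sqrt{V_i}}+\tfrac{1}{\sqrt{V_j}}\Bigr)^{2}.
\]
Because $V_i\le V_j$, the error term is at most $4\delta/V_i$, a negligible fraction of $\zeta^2/V_i$ under the gap assumption.

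It remains to show the first two terms are at least a constant fraction of $\zeta^2/V_i$. Setting $x=\sqrt{V_i/V_j}\in(0,1]$, they equal $V_i^{-1}\bigl((1-x)^2+x\zeta^2\bigr)$; a one-variable calculus exercise places the minimum at $x^{\star}=1-\zeta^2/2$ with value $\zeta^2-\zeta^4/4\ge 0.99\,\zeta^2$ for $\zeta=1/(10\sqrt k)$. Combining with the error estimate gives $\|p^{(i)}-p^{(j)}\|^2\ge (0.99\,\zeta^2-4\delta)/V_i\ge \zeta^2/(10V_i)$, as required.

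The main obstacle is precisely this minimisation. The bound $\|b_i-b_j\|^2\ge\zeta^2$ from Lemma~\ref{lem:deltalem} is genuinely strong only when $V_i\approx V_j$, where it forces $\zeta^2/\sqrt{V_iV_j}\approx\zeta^2/V_i$; when $V_j\gg V_i$ the lower bound must instead come from the ``radial'' difference $(1/\sqrt{V_i}-1/\sqrt{V_j})^2$, which is large precisely because one cluster is much heavier than the other. Treating both regimes with a single clean inequality, while simultaneously keeping the $\delta$-error under control under the hypothesis $\Upsilon=\Omega(k^3)$, is the delicate part of the argument.
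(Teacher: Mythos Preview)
Your proof is correct, and it takes a genuinely different route from the paper's. The paper first passes to the normalised vectors $p^{(i)}/\|p^{(i)}\|$ and $p^{(j)}/\|p^{(j)}\|$, shows from Lemma~\ref{lem:deltalem} that their angle is bounded away from $0$ (namely $\langle p^{(i)}/\|p^{(i)}\|,\,p^{(j)}/\|p^{(j)}\| \rangle \le 1-\zeta^2/4$), and then performs a two-case analysis on the ratio $\|p^{(i)}\|/\|p^{(j)}\|$: when the norms differ by a factor $\ge 4$ the radial gap alone suffices, and when they are comparable the angular gap combined with the law of cosines gives the bound. You instead expand $\|p^{(i)}-p^{(j)}\|^2$ directly in terms of the rows $b_i,b_j$, bound $\langle b_i,b_j\rangle$ via polarisation, complete the square, and reduce everything to the single-variable minimisation of $(1-x)^2+x\zeta^2$ over $x\in(0,1]$. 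This is arguably cleaner: it avoids the case split entirely and makes transparent exactly how the angular term $x\zeta^2$ and the radial term $(1-x)^2$ trade off. The price is a slightly larger hidden constant in the assumption $\Upsilon=\Omega(k^3)$, since you need $4\delta\le 0.89\,\zeta^2$ rather than merely $\delta\lesssim\zeta$; but the asymptotic regime is the same, so this is harmless for the lemma as stated.
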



\begin{proof}
Let $S_i$ and $S_j$ be two arbitrary clusters.
By \cref{lem:deltalem}, there exists $1\leq  \ell \leq k$ such that \[\left| \beta^{(\ell)}_{i} - \beta^{(\ell)}_{j}  \right| \ge \zeta.\] By the definition of $p^{(i)}$ and $p^{(j)}$ it follows that
\[
\left|\left| \frac{p^{(i)}}{\|p^{(i)}\|} - \frac{p^{(j)}}{\|p^{(j)}\|}\right|\right|^2
\geq \left( \frac{\beta_i^{(\ell)}}{\sqrt{\sum_{t=1}^k \left(\beta_i^{(t)}\right)^2}} -\frac{\beta_j^{(\ell)}}{\sqrt{\sum_{t=1}^k \left(\beta_j^{(t)}\right)^2}} \right)^2.
\]
By \eq{rangebeta}, we know that
\[
\sqrt{\sum_{\ell=1}^k \left(\beta_j^{(\ell)}\right)^2}=\left\| \left(\beta^{(1)}_j,\dots,\beta^{(k)}_j\right)^{\rot} \right\| \in \left(1-\frac{\zeta}{10},1+\frac{\zeta}{10}\right).
\]
Therefore, we have that
\[
\left|\left| \frac{p^{(i)}}{\|p^{(i)}\|} - \frac{p^{(j)}}{\|p^{(j)}\|}\right|\right|^2  \ge \frac{1}{2}\cdot \left( \beta^{(\ell)}_{i} -  \beta^{(\ell)}_{j} \right)^2 \ge\frac{1}{2}\cdot \zeta^2,
\]
and
\[
\left\langle \frac{p^{(i)}}{\|p^{(i)}\|} , \frac{p^{(j)}}{\|p^{(j)}\|} \right\rangle \le 1-\zeta^2/4.
\]
Without loss of generality, we assume that  $\left\|p^{(i)} \right\|^2 \ge \left\|p^{(j)}\right\|^2$. By \cref{lem:norm}, it holds that
\[
\left\|p^{(i)} \right\|^2\geq \frac{9}{10\cdot\vol(S_i)},
\]
and
\[
\left\|p^{(i)} \right\|^2 \ge \left\|p^{(j)}\right\|^2 \geq\frac{9}{10\cdot \vol(S_j)}.
\]
Hence, it holds that
\[
\left\|p^{(i)} \right\|^2 \ge \frac{9}{10\min{\{\vol(S_{i}),\vol(S_j)\}}}.
\]
We can now finish the proof by considering two cases
based on $\left\|p^{(i)} \right\|$.

\emph{Case~1:}
Suppose that $\left\|p^{(i)} \right\| \ge 4 \left\|p^{(j)}\right\|$. We have that
\begin{align*}
\left\|p^{(i)} - p^{(j)}\right\|	\ge \left\|p^{(i)}\right\| - \left\|p^{(j)}\right\| \ge \frac{3}{4} \left\|p^{(i)}\right\|,
\end{align*}
which implies that
\[
\left\|p^{(i)} - p^{(j)}\right\|^2 \ge  \frac{9}{16}\left\|p^{(i)} \right\|^2 \geq \frac{1}{2\min{\{\vol(S_{i}),\vol(S_j)\}}}.
\]

\emph{Case~2:} Suppose $\left\|p^{(j)}\right\| = \alpha \left\|p^{(i)} \right\|$ for $\alpha \in (\frac{1}{4},1]$. In this case, we have that
\begin{align*}
\left\|p^{(i)} - p^{(j)}\right\|^2 	&= \left\|p^{(i)}\right\|^2 + \left\|p^{(j)}\right\|^2 -2 \left\langle \frac{p^{(i)}}{\|p^{(i)}\|} , \frac{p^{(j)}}{\|p^{(j)}\|} \right\rangle \left\|p^{(i)}\right\|\left\|p^{(j)}\right\| \\
					&\ge {\left\|p^{(i)} \right\|^2} + {\left\|p^{(j)}\right\|^2} -  2 (1-\zeta^2/4)\cdot{\left\|p^{(i)} \right\|\left\|p^{(j)}\right\|}\\
					&= (1+\alpha^2) {\left\|p^{(i)} \right\|^2} -  2 (1-\zeta^2/4)\alpha\cdot{\left\|p^{(i)} \right\|^2}\\
					&=(1+\alpha^2-2\alpha+\alpha\zeta^2/2) \cdot {\left\|p^{(i)} \right\|^2}\\
					&\ge \frac{\alpha\zeta^2}{2}\cdot \left\|p^{(i)}\right\|^2 \geq \zeta^2\cdot\frac{1}{10\min{\{\vol(S_{i}),\vol(S_j)\}}},
\end{align*}
and the lemma follows.
\end{proof}

\subsection{Approximation Guarantees of Spectral Clustering}
\label{sec:analysis1}

Now we analyze why spectral clustering performs well for solving the $k$-way partitioning problem.  We assume that  $A_1,\ldots, A_k$ is any  $k$-way partition  returned by a $k$-means algorithm with an approximation ratio of $\APT$.

We map every vertex $u$ to $d_u$ identical points in $\R^k$. This ``trick'' allows us to bound the volume of the overlap between the clusters retrieved by a $k$-means algorithm and the optimal ones.  For this reason we define the cost function of  partition $A_1,\ldots, A_k$ of $V[G]$ by
\[
\COST(A_1,\dots,A_k) \triangleq \min_{c_1,\dots,c_k \in \R^k} \sum_{i=1}^k\sum_{u \in A_i} d_u \| F(u)-c_i\|^2,
\]
and the optimal clustering cost is defined
 by
 $$
 \Delta_k^2\triangleq\min_{\mathrm{partition~} A_1,\ldots, A_k} \COST(A_1,\ldots, A_k).$$
i.e., we define the optimal clustering cost in the same way as in \eq{defdeltak}, except that we look at the embedded points from vertices of $G$ in the definition.
From now on, we always refer $\COST$ and $\Delta_k^2$ as the $\COST$ and optimal $\COST$ values of  points $\{F(u)\}_{u\in V}$, and for technical reasons every point is counted $d_u$ times.
The next lemma gives an upper bound to the cost of the optimal $k$-means clustering which depends on the gap $\Upsilon$

\begin{lem}\label{lem:boundopt}
It holds that $\Delta_k^2\le 1.1k^2 /\Upsilon$.
\end{lem}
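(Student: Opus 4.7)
The plan is to exhibit a feasible clustering whose cost is already bounded by $1.1k^2/\Upsilon$, and then invoke the minimality in the definition of $\Delta_k^2$. Since
\[
\Delta_k^2 = \min_{\mathrm{partition}~A_1,\ldots,A_k} \min_{c_1,\ldots,c_k \in \R^k} \sum_{i=1}^k \sum_{u \in A_i} d_u\, \|F(u) - c_i\|^2,
\]
it suffices to upper bound the right-hand side for one specific choice of partition and centers.

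The natural candidate is to take the optimal conductance partition $S_1, \ldots, S_k$ as the partition, and the points $p^{(1)}, \ldots, p^{(k)} \in \R^k$ defined in \eqref{eq:centres} as the centers (even though $p^{(i)}$ is not literally the centroid of $\{F(u) : u \in S_i\}$, it is a valid choice, since the inner minimization over centers only makes the cost smaller). With this substitution the cost becomes
\[
\sum_{i=1}^k \sum_{u \in S_i} d_u \left\|F(u) - p^{(i)}\right\|^2,
\]
which is precisely the quantity bounded in \lemref{optcost} by $1.1k^2/\Upsilon$.

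Combining these two observations yields
\[
\Delta_k^2 \le \sum_{i=1}^k \sum_{u \in S_i} d_u \left\|F(u) - p^{(i)}\right\|^2 \le \frac{1.1 k^2}{\Upsilon},
\]
which is the claim. There is no genuine obstacle here: the whole content of the lemma has already been packaged into \lemref{optcost}, and this step merely reinterprets that inequality as an upper bound on the minimum $k$-means cost by noting that $\{S_i\}_{i=1}^k$ together with the centers $\{p^{(i)}\}_{i=1}^k$ is a valid (though not necessarily optimal) candidate clustering for the definition of $\Delta_k^2$.
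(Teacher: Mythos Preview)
Your proof is correct and follows essentially the same approach as the paper: choose the partition $\{S_i\}_{i=1}^k$ with centers $\{p^{(i)}\}_{i=1}^k$ as a feasible candidate in the definition of $\Delta_k^2$, and then invoke \lemref{optcost}.
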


\begin{proof}
Since $\Delta_k^2$ is obtained by minimizing over all partitions $A_1,\ldots, A_k$ and $c_1,\ldots, c_k$, we have 
\begin{align}
\label{eq:degcost}
\Delta_k^2  \leq \sum_{i=1}^k\sum_{u \in S_i} d_u \left\|  F(u) - p^{(i)} \right\|^2.\end{align}
Hence the statement follows by applying \cref{lem:optcost}.
\end{proof}

Since $A_1,\cdots, A_k$ is the output of a $k$-means algorithm with approximation ratio $\APT$,  by \cref{lem:boundopt} we have that $\COST(A_1,\ldots, A_k)\leq \APT\cdot 1.1k^2/\Upsilon$.
We will show that this upper bound of  $\APT\cdot 1.1k^2/\Upsilon$ suffices to show that  this approximate clustering  $A_1,\ldots, A_k$  is close to the ``actual" clustering $S_1,\ldots, S_k$, in the sense that (i) every $A_i$ has low conductance, and (ii)
  under a proper permutation $\sigma:\{1,\ldots, k\}\rightarrow \{1,\ldots, k\}$, the symmetric difference between $A_i$ and $S_{\sigma(i)}$ is small.
  The fact is proven by contradiction: If we could always find a set $A_i$ with high symmetric difference with its correspondence $S_{\sigma(i)}$, regardless of how we map $\{A_i\}$ to their corresponding $\{S_{\sigma(i)}\}$, then the $\COST$ value will be high, which contradicts to the fact that
   $\COST(A_1,\ldots, A_k)\leq \APT\cdot 1.1k^2/\Upsilon$. The core of of the whole contradiction arguments is the following technical lemma, whose proof will be presented in the next subsection.

\begin{lem}
\label{lem:approx_clusters}
Let $A_1,\dots,A_k$ be a partition of $V$.  Suppose that,  for every permutation of the indices $\sigma:\{1,\ldots,k\}\rightarrow\{1,\ldots, k\}$, there exists $i$ such that $\vol\left(A_i \triangle S_{\sigma(i)}\right) \ge 2\eps \vol\left(S_{\sigma(i)}\right)$ for $\eps \ge 10^5\cdot k^3/\Upsilon$, then $\COST(A_1,\dots,A_k)\ge 10^{-4}\cdot\eps/k$.
\end{lem}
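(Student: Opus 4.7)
The plan is to establish the contrapositive: assume $\COST(A_1,\ldots,A_k) < 10^{-4}\eps/k$ and build a permutation $\sigma$ such that $\vol(A_i \triangle S_{\sigma(i)}) < 2\eps\vol(S_{\sigma(i)})$ for every $i$, contradicting the hypothesis. The argument will ride on the geometric picture of \secref{analyembed}: vertices of $S_j$ concentrate around $p^{(j)}$ in the embedding (\lemref{optcost}), the points $\{p^{(j)}\}_{j=1}^k$ are pairwise separated in a volume-weighted sense (\lemref{deltaclusters}), and Markov's inequality is the tool that converts these $\ell_2^2$-type statements into set-wise volume bounds.

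Concretely, let $c_i$ be the (weighted) centroid of $A_i$, so that $\COST(A_1,\ldots,A_k) = \sum_i \sum_{u\in A_i} d_u\|F(u)-c_i\|^2$. For a threshold $r_i$, Markov's inequality bounds the volume of the ``far'' set $B_i \triangleq \{u\in A_i : \|F(u)-c_i\|\ge r_i\}$, and I would call the complement $A_i\setminus B_i$ the \emph{core} of $A_i$. Symmetrically, \lemref{optcost} gives $\sum_j \sum_{u\in S_j} d_u \|F(u)-p^{(j)}\|^2 \le 1.1k^2/\Upsilon$, so for a threshold $s_j$ the set $\widetilde{B}_j \triangleq \{u\in S_j : \|F(u)-p^{(j)}\|\ge s_j\}$ has small volume; call $S_j\setminus\widetilde{B}_j$ the core of $S_j$. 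I will choose the $r_i$'s and $s_j$'s on the order of $\tfrac{1}{4}\min_{j\ne j'}\|p^{(j)}-p^{(j')}\|$, tuned so that each of $\vol(B_i)$ and $\vol(\widetilde{B}_j)$ is at most $\tfrac{\eps}{2}\vol(S_j)$.

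The key step is then: for each $i$, there is at most one $j$ such that $(A_i\setminus B_i) \cap (S_j\setminus\widetilde{B}_j)$ has positive volume, because any vertex $u$ in the intersection satisfies $\|c_i - p^{(j)}\|\le r_i+s_j$, and \lemref{deltaclusters} forces any two such candidate $j$'s to violate the triangle inequality. Define $\sigma(i)$ to be that unique index. Well-definedness follows because if no such $j$ exists then the core of $A_i$ is almost empty, inflating its own cost contribution beyond the budget. Injectivity is proved by contradiction: if $\sigma(i_1)=\sigma(i_2)=j$, some $j'$ has no preimage, so every vertex in the core of $S_{j'}$ lies in some $A_i$ whose centroid is far from $p^{(j')}$ (by more than $\tfrac{1}{2}\|p^{(j')}-p^{(\sigma(i))}\|$), and summing $d_u\|F(u)-c_i\|^2$ over these vertices, combined with the separation bound from \lemref{deltaclusters} of order $1/(k\cdot\min\vol)$, already exceeds $10^{-4}\eps/k$.

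Finally, by construction $A_i\triangle S_{\sigma(i)} \subseteq B_i\cup\widetilde{B}_{\sigma(i)}$ up to a small residue from vertices in $A_i\cap\widetilde B_{\sigma(i)}$ or $S_{\sigma(i)}\cap B_i$, and the chosen thresholds imply $\vol(A_i\triangle S_{\sigma(i)}) < 2\eps\vol(S_{\sigma(i)})$, giving the required contradiction. The main obstacle is balancing the thresholds $r_i,s_j$: they must be large enough to make $B_i$ and $\widetilde{B}_j$ have volume below $\eps\vol(S_j)$, yet small enough that $r_i+s_j < \tfrac{1}{2}\|p^{(j)}-p^{(j')}\|$ uses the separation $\zeta/\sqrt{10\min\vol}$ from \lemref{deltaclusters}; the precise hypothesis $\eps\ge 10^5 k^3/\Upsilon$ is exactly what lets both constraints hold simultaneously, and getting the constants $10^{-4}$ and $10^5$ to match will require careful bookkeeping in the volume/cost tradeoff rather than any new idea.
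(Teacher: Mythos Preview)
Your contrapositive strategy is natural, but the sketch has a structural gap that is more than bookkeeping. The separation in \lemref{deltaclusters} is volume-dependent: $\|p^{(j)}-p^{(j')}\|^2\ge \zeta^2/\bigl(10\min\{\vol(S_j),\vol(S_{j'})\}\bigr)$. If you set $r_i,s_j$ at the \emph{global} scale $\tfrac14\min_{j\ne j'}\|p^{(j)}-p^{(j')}\|$ (as written), the Markov bound gives $\vol(\widetilde B_j)\lesssim \mathcal E_j\cdot V_{\max}/\zeta^2$, where $V_{\max}$ is the second-largest $\vol(S_\ell)$; this is \emph{not} bounded by $\tfrac\eps2\vol(S_j)$ once $\vol(S_j)\ll V_{\max}$, so the core of a small cluster need not contain most of its volume. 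The natural fix, taking $s_j^2\asymp 1/\vol(S_j)$, repairs the $S_j$ side, but the analogous choice for $r_i$ requires knowing $\vol(S_{\sigma(i)})$ before $\sigma(i)$ is defined --- the very circularity your construction is meant to resolve. The containment $A_i\triangle S_{\sigma(i)}\subseteq B_i\cup\widetilde B_{\sigma(i)}$ also leaks: vertices of $A_i\cap S_{j'}$ for $j'\ne\sigma(i)$ can lie in $\widetilde B_{j'}$, not $\widetilde B_{\sigma(i)}$, and controlling $\sum_{j'}\vol(\widetilde B_{j'})$ by $\eps\vol(S_{\sigma(i)})$ again runs into the $V_{\max}/\vol(S_{\sigma(i)})$ ratio.

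The paper avoids this circularity by working in the forward direction. It first proves a purely combinatorial lemma (\lemref{approx_clusters2}): if no permutation matches, then either every $A_i$ has overlap $\ge\eps_i\min\{\vol(S_{i_1}),\vol(S_{i_2})\}$ with two distinct clusters $S_{i_1},S_{i_2}$ and $\sum_i\eps_i\ge\eps$, or a dual statement holds with the roles of $A$ and $S$ swapped. Then, for each $i$, it sets $B_i=A_i\cap S_{i_1}$ with $i_1$ chosen so that $\|c_i-p^{(i_1)}\|\ge\tfrac12\|p^{(i_1)}-p^{(i_2)}\|$, and lower-bounds $\sum_{u\in B_i}d_u\|F(u)-c_i\|^2$ using the separation between the \emph{specific} pair $p^{(i_1)},p^{(i_2)}$ --- which is exactly the scale that matches $\vol(B_i)\ge\eps_i\min\{\vol(S_{i_1}),\vol(S_{i_2})\}$. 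The volume factors cancel termwise, yielding $\COST\ge\zeta^2\eps/80-1.1k^2/\Upsilon\ge\zeta^2\eps/100$. Your injectivity paragraph is actually close to this idea; if you abandon the core-intersection definition of $\sigma$ and instead set $\sigma(i)=\argmin_j\|c_i-p^{(j)}\|$, the argument you sketch there (an unmatched $S_{j'}$ forces large cost) can be made to work and is essentially the paper's computation read backwards.
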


\begin{proof}[Proof of \cref{thm:anlykmean}] Let $A_1,\dots,A_k$ be
a $k$-way partition that achieves an approximation ratio of $\APT$, and let
\[
\eps = \frac{2\cdot 10^5  \cdot k^3\cdot\APT}{\Upsilon}.
\]
We first show that there exists a permutation $\sigma$ of the indices such that
\begin{equation} \label{eq:sigmacondition}
\vol\left(A_i \triangle S_{\sigma(i)}\right) \le \varepsilon \vol(S_{\sigma(i)}),\qquad \mbox{for any\ } 1\leq i\leq k.
\end{equation}
Assume for contradiction that  for all permutation $\sigma$ 
  there is  $1\le i\leq k$ such that \[\vol(A_i \triangle S_{\sigma(i)}) > \varepsilon \vol\left(S_{\sigma(i)}\right).\]
This implies by  \cref{lem:approx_clusters} that
\[
\COST(A_1,\dots,A_k) \geq  10 \cdot\APT\cdot k^2 /\Upsilon,\]
which   contradicts to the fact that $A_1,\dots,A_k$ is an $\APT$-approximation to a $k$-way partition, whose corresponding $k$-means cost is at most  $1.1\cdot \APT\cdot k^2/\Upsilon$.

Now we assume that $\sigma:\{1,\cdots, k\} \rightarrow\{1,\cdots, k\}$ is the permutation satisfying \eq{sigmacondition}, and bound the conductance of every cluster $A_i$. For any $1\leq i\leq k$, the number of leaving edges of $A_{i}$ is upper bounded by
\begin{align*}
\left|\partial \left(A_{i}\right)\right| & \leq \left|\partial \left(A_{i}\setminus S_{\sigma(i)}\right)\right| +\left|\partial \left(A_{i}\cap S_{\sigma(i)}\right)\right|  \\
& \leq \left|\partial \left(A_{i}\triangle S_{\sigma(i)}\right)\right| + \left|\partial\left (A_{i}\cap S_{\sigma(i)}\right)\right|.
\end{align*}
Notice that $\left|\partial \left(A_{i}\triangle S_{\sigma(i)}\right)\right|  \leq \eps\vol\left(S_{\sigma(i)}\right)$ by our assumption on $\sigma$, and every node in $\left|\partial\left (A_{i}\cap S_{\sigma(i)}\right)\right|$ either belongs to $\partial S_{\sigma(i)}\setminus S_{\sigma(i)}$ or $\partial \left(A_{i}\triangle S_{\sigma(i)}\right)$, hence 
\begin{align*}
\left|\partial \left(A_{i}\right)\right| & \leq \eps\vol\left(S_{\sigma(i)}\right) +  \phi_G\left(S_{\sigma(i)}\right)\vol\left(S_{\sigma(i)}\right)+\eps\vol\left(S_{\sigma(i)}\right)\\
& = \left(2\eps+\phi_G\left(S_{\sigma(i)}\right)\right)\vol(S_{\sigma(i)}).
\end{align*}
On the other hand, we have that 
\[
\vol\left(A_{i}\right) \geq \vol \left(A_{i}\cap S_{\sigma(i)}\right) \geq (1-2\eps)\vol(S_{\sigma(i)}).
\]
Hence,
\begin{align*}
\phi_G(A_i) 	& \leq \frac{(2\eps+\phi_G(S_{\sigma(i)}))\vol(S_{\sigma(i)})}{(1-2\eps)\vol(S_{\sigma(i)})}\\
&=\frac{2\eps+\phi_G(S_{\sigma(i)})}{1-2\eps}\\
&\le 1.1\cdot\phi_G(S_{\sigma(i)}) + O(\APT\cdot k^3/\Upsilon).\qedhere
\end{align*}
\end{proof}

\subsection{Proof of \cref{lem:approx_clusters}}

It remains to show \cref{lem:approx_clusters}.
Our proof is  based on the following high-level idea:
suppose by contradiction that there is a cluster $S_j$ which is very different from every cluster $A_{\ell}$.
Then there is a cluster $A_i$ with significant overlap with two different clusters $S_{j}$ and $S_{j'}$ (\cref{lem:approx_clusters2}).
However,  we already proved in  \cref{lem:deltaclusters}  that any two clusters are far from each other.
This implies that the $\COST$ value of $A_1,\ldots, A_k$ is high, which leads to  a contradiction.

\begin{lem}
\label{lem:approx_clusters2}
Suppose for every permutation $\pi : \{1,\dots,k\} \rightarrow \{1,\dots,k\}$ there exists an index $i$ such that
$
\vol\left(A_i\triangle S_{\pi(i)}\right) \geq 2\eps \vol\left(S_{\pi(i)}\right)$.
Then, at least one of the following two cases holds:
\begin{enumerate}
\item for any index $i$ there are indices $i_1 \ne i_2$ and $\eps_i \ge 0$ such that
\[
\vol(A_i \cap S_{i_1}) \ge \vol(A_i \cap S_{i_2}) \ge \eps_i \min{ \{ \vol(S_{i_1}),\vol(S_{i_2}) \} },
\]
and
$\sum_{i=1}^k \eps_i \ge \eps$;
\item there exist indices $i,\ell$ and $\eps_j \ge 0$ such that, for $j \ne \ell$,
\[
\vol(A_i \cap S_{\ell}) \ge \eps_i   \vol(S_{\ell}), \text{   } \vol(A_i \cap S_{j}) \ge \eps_i  \vol(S_{\ell})
\]
and
$\sum_{i=1}^k \eps_i \ge \eps$.
\end{enumerate}
\end{lem}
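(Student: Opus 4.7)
My plan is to prove the lemma by contrapositive: assuming that neither Case 1 nor Case 2 holds, I will construct a permutation $\pi$ satisfying $\vol(A_i \triangle S_{\pi(i)}) < 2\eps \vol(S_{\pi(i)})$ for every $i$, which contradicts the hypothesis.

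The natural candidate is greedy: for each $i$, set $\sigma(i) = \argmax_j \vol(A_i \cap S_j)$, breaking ties arbitrarily. This map can fail to yield a valid $\pi$ in exactly two qualitative ways, each forcing one of the two cases. First, $\sigma$ may fail to be injective: if $\sigma(i_1)=\sigma(i_2)=\ell$ with $i_1\ne i_2$, then the disjoint clusters $A_{i_1}$ and $A_{i_2}$ both place their largest share inside $S_\ell$. Since $S_\ell$ is the maximum overlap of each, each $A_{i_r}$ also retains comparable mass in some $S_{j(i_r)}\ne S_\ell$ (unless $A_{i_r}$ is entirely absorbed by $S_\ell$, a corner case that can be handled separately), producing a Case~2 witness at $\ell$.

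Second, even when $\sigma$ is a bijection, some $i^*$ may violate $\vol(A_{i^*} \triangle S_{\sigma(i^*)}) < 2\eps \vol(S_{\sigma(i^*)})$. Decomposing the symmetric difference as $\vol(A_{i^*}\setminus S_{\sigma(i^*)}) + \vol(S_{\sigma(i^*)}\setminus A_{i^*})$, at least one of the two terms is $\ge \eps\vol(S_{\sigma(i^*)})$. If the first dominates, then by pigeonhole some $S_{j'}\ne S_{\sigma(i^*)}$ absorbs a non-trivial portion of $A_{i^*}$'s mass, so $A_{i^*}$ has two significant overlaps and witnesses Case~1. If the second dominates, the missing part of $S_{\sigma(i^*)}$ is distributed among other $A_{i'}$'s, witnessing Case~2 at $\ell=\sigma(i^*)$.

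The principal obstacle is the quantitative bound $\sum_i \eps_i \ge \eps$: naively extracting information from a single bad index only yields a contribution of order $\eps/k$. To reach the full $\eps$ I would abandon the per-index contradiction and instead define $\eps_i$ intrinsically from $A_i$'s overlap profile --- for Case~1 as $A_i$'s second-largest fractional overlap $\vol(A_i\cap S_{j_2})/\min\{\vol(S_{j_1}), \vol(S_{j_2})\}$, and for Case~2 (at the candidate $\ell$) as $\vol(A_i \cap S_\ell)/\vol(S_\ell)$ whenever $A_i$ also holds non-trivial mass in some other $S_j$ --- and then argue that if \emph{both} totals fall below $\eps$ the greedy $\sigma$ is simultaneously injective and satisfies the good symmetric-difference bound at every $i$. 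The most delicate bookkeeping will come from reconciling the two different normalisations --- the cluster-pair minimum in Case~1 against the fixed $\vol(S_\ell)$ in Case~2 --- with the symmetric-difference inequality, which requires splitting sub-cases by the relative volumes of the clusters involved.
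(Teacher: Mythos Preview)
Your high-level structure---define a greedy map $\sigma$ and split on whether it is a bijection---is exactly what the paper does. But the paper proceeds \emph{directly}, not by contrapositive, and this matters because your ``principal obstacle'' is illusory: a single bad index already yields the full $\sum_i\eps_i\ge\eps$, with no $1/k$ loss.

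Concretely, the paper takes $\sigma(i)=\argmax_j \vol(A_i\cap S_j)/\vol(S_j)$ (normalised, not the raw argmax you use). If $\sigma$ is a permutation, the hypothesis hands you one index, say $i=1$ with $\sigma(1)=1$, such that
\[
\vol(A_1\triangle S_1)=\sum_{j\ne1}\vol(A_j\cap S_1)+\sum_{j\ne1}\vol(A_1\cap S_j)\ge 2\eps\vol(S_1).
\]
If the first sum is $\ge\eps\vol(S_1)$, set $\eps_j=\vol(A_j\cap S_1)/\vol(S_1)$; these \emph{already} sum to $\ge\eps$, and the normalised definition of $\sigma$ gives the companion bound $\vol(A_j\cap S_{\sigma(j)})\ge\eps_j\vol(S_{\sigma(j)})$, establishing Case~1. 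If the second sum dominates, set $\eps_j=\vol(A_1\cap S_j)/\vol(S_1)$ and take $\ell=1$; again $\sum_j\eps_j\ge\eps$ immediately, giving Case~2 (after checking $\vol(A_1\cap S_1)\ge\eps\vol(S_1)$, else the first subcase applies). If $\sigma$ is not injective, some $S_j$ is not in its image; setting $\eps_\ell=\vol(A_\ell\cap S_j)/\vol(S_j)$ gives $\sum_\ell\eps_\ell=1\ge\eps$, and the argmax definition again supplies the second overlap---this lands in Case~1, not Case~2 as you suggested.

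The point you missed is that the $\eps_i$'s are not per-$A_i$ quantities aggregated over independent witnesses; they are the fractions into which a \emph{single} cluster $S_1$ (or a single $A_1$) is distributed among the other pieces, and the symmetric-difference inequality says precisely that these fractions sum to $\ge\eps$. Your plan to ``abandon the per-index contradiction'' and build intrinsic $\eps_i$'s for every $i$ is therefore unnecessary, and the contrapositive framing---which would require controlling all indices simultaneously---is strictly harder than the direct argument. Switch to the direct proof and use the normalised argmax; the bookkeeping you anticipate then disappears.
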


\begin{proof}
Let $\sigma:\{1,\ldots, k\}\rightarrow\{1,\ldots, k\}$ be the function defined by
\[
\sigma(i)=\argmax_{1\leq j\leq k}\frac{\vol(A_i\cap S_j)}{ \vol(S_j)}.
\]
We  first assume that  $\sigma$ is one-to-one, i.e.~$\sigma$ is a permutation. By the hypothesis of the lemma, there exists an index $i$ such that $\vol(A_i\triangle S_{\sigma(i)}) \geq 2\eps \vol(S_{\sigma(i)})$. Without loss of generality, we  assume that  $i=1$ and $\sigma(j)=j$ for $j=1,\dots,k$.
Notice that
\begin{equation}\label{eq:expressa1a}
\vol\left(A_1\triangle S_1\right) = \sum_{j\neq 1} \vol\left(A_j \cap S_1\right) +\sum_{j\neq 1} \vol\left(A_1\cap S_j\right).
\end{equation}
 Hence, one of the summations on the right hand side of  \eq{expressa1a} is at least $\eps\vol\left(S_1\right)$.  Now the proof is based on the  case distinction.

\emph{Case~1:} Assume that $\sum_{j\neq 1} \vol\left(A_j \cap S_1\right)\geq \eps\vol(S_{1})$. We
define $\tau_j$ for $2\leq j\leq k$ to be
\[
\tau_j=\frac{\vol\left(A_j\cap S_{1}\right)}{\vol\left(S_1\right)}.
\]
We have that
\[
\sum_{j\neq 1}\tau_j\geq \eps,
\]
and by the definition of $\sigma$ it holds  that
\[
\frac{\vol\left(A_j\cap S_j\right)}{\vol\left(S_j\right)} \geq
\frac{\vol\left(A_j\cap S_1\right)}{\vol\left(S_1\right)}=\tau_j
\]
for $2\leq j\leq k$. Setting $\eps_j=\tau_j$ for $2\leq j\leq k$ and $\eps_1=0$ finishes the proof of Case~1.

\emph{Case~2:} Assume that
\begin{equation}\label{eq:condition1a}
\sum_{j\neq 1} \vol\left(A_1\cap S_j\right)\geq \eps\vol(S_1).
\end{equation}
Let us define $\tau'_j$ for $1\leq j\leq k, j\neq 1$, to be
\[
\tau'_j=\frac{\vol(A_1\cap S_j)}{ \vol\left(S_1\right)}.
\]
Then,  \eq{condition1a} implies that
 \[
\sum_{j \ne 1} \tau'_{j} \ge \eps.
\]
The statement in this  case holds by assuming $\vol\left(A_1\cap S_{1}\right) \ge \eps\vol\left(S_{1}\right)$, since otherwise we have \[
\vol\left(  S_{1} \right) - \vol\left( A_1\cap S_{1}\right) =
\sum_{j\neq 1} \vol\left(A_j \cap S_{1}\right)\geq (1-\eps)\vol\left(S_{1}\right)
\geq \eps\vol\left(S_{1}\right),
\]
and this case was proven in Case~1.

So it suffices to  study the case in which $\sigma$ defined earlier is not one-to-one. Then, there is $j~(1\leq j\leq k)$ such that $j\not\in\{\sigma(1),\cdots, \sigma(k)\}$. For any $1\leq\ell\leq k$, let
\[
\tau_{\ell}''=\frac{\vol(A_{\ell}\cap S_j)}{\vol(S_j)}.
\]
Then, $\sum_{\ell=1}^k \tau''_{\ell}=1\geq \eps$ and it holds for any $1\leq \ell\leq k$ that
\[
\frac{\vol\left(A_{\ell} \cap S_{\sigma(\ell)}\right)}{\vol\left(S_{\sigma(\ell)}\right)} \geq \frac{\vol(A_{\ell}\cap S_j)}{\vol(S_j)}=\tau''_{\ell}.\qedhere
\]
\end{proof}

\definecolor{blue1}{rgb}{0.2, 0.2, 0.6}
\definecolor{red1}{rgb}{0.6, 0.0, 0.0}

\begin{proof}[Proof of \cref{lem:approx_clusters}]
We first  consider  the case when part 1 of \cref{lem:approx_clusters2} holds, i.e., for every $i$ there exist $i_1 \ne i_2$ such that
\begin{equation}
\label{eq:vol}
\begin{aligned}
& \vol(A_i \cap S_{i_1}) \ge \eps_i \min{\{\vol(S_{i_1}),\vol(S_{i_2})\}},  \\
& \vol(A_i \cap S_{i_2})\ge \eps_i \min{\{\vol(S_{i_1}),\vol(S_{i_2})\}},
 \end{aligned}
 \end{equation}
 for some $\eps\geq 0$,  and $\sum_{i=1}^k \eps_i \ge \eps$.

Let $c_i$ be the center of $A_i$. Let us assume without loss of generality that  $\|c_i-p^{(i_1)}\| \ge \|c_i-p^{(i_2)}\|$, which implies $\|p^{({i_1})} - c_i\| \ge \|p^{(i_1)} - p^{(i_2)}\|/2.$
However, points in  $B_i = A_i \cap S_{i_1}$ are far away from $c_i$, see \cref{fig:expproof}. We lower bound the value of $\COST(A_1,\dots,A_k)$ by only looking  at the contribution of points in the $B_i$s . Notice that by \cref{lem:optcost} the sum of the squared-distances between points in $B_i$ and $p^{(i_1)}$ is at most $k^2/\Upsilon$, while the distance between $p^{(i_1)}$ and $p^{(i_2)}$ is large (\cref{lem:deltaclusters}).
Therefore, we have that
\begin{align*}
\COST(A_1,\dots,A_k) = \sum_{i=1}^k \sum_{u\in A_i } d_u  \|F(u) - c_i\|^2 \ge \sum_{i=1}^k \sum_{u\in B_i } d_u  \|F(u) - c_i\|^2  \\
\end{align*}
By applying the inequality $a^2 + b^2 \ge (a-b)^2/2$, we have that


 \begin{align}
\COST(A_1,\dots,A_k) 	&\ge \sum_{i=1}^k \sum_{u\in B_i } d_u \left( \frac{\left\|p^{({i_1})} - c_i\right\|^2}{2} - \left\|F(u) - p^{(i_1)}\right\|^2\right)  \notag\\
	&\ge \sum_{i=1}^k \sum_{u\in B_i } d_u  \frac{\left\|p^{({i_1})} - c_i\right\|^2}{2} - \sum_{i=1}^k\sum_{u\in B_i } d_u \left\|F(u) - p^{(i_1)}\right\|^2 \notag \\
	&\ge \sum_{i=1}^k \sum_{u\in B_i } d_u  \frac{\left\|p^{({i_1})} - c_i\right\|^2}{2} - \frac{1.1k^2}{\Upsilon} \label{eq:app1} \\
	&\ge\sum_{i=1}^k \sum_{u\in B_i } d_u  \frac{\left\|p^{({i_1})} - p^{({i_2})}\right\|^2}{8} - \frac{1.1k^2}{\Upsilon} \notag \\
	&\ge \sum_{i=1}^k \frac{\zeta^2 \vol(B_i)}{ 80\min{\{\vol(S_{i_1}),\vol(S_{i_2})\}}} - \frac{1.1k^2}{\Upsilon} \label{eq:app2} \\
	&\ge \sum_{i=1}^k\frac{\zeta^2 \eps_i \min{\{\vol(S_{i_1}),\vol(S_{i_2})\}}}{ 80\min{\{\vol(S_{i_1}),\vol(S_{i_2})\}}} - \frac{1.1k^2}{\Upsilon} \notag \\
	&\ge \sum_{i=1}^k \frac{\zeta^2 \eps_i }{ 80} - \frac{1.1k^2}{\Upsilon} \notag \\
	&\ge \frac{\zeta^2 \eps }{ 80} - \frac{1.1k^2}{\Upsilon} \ge  \frac{\zeta^2 \eps }{100} \notag
\end{align}
where  \eq{app1} follows from  \cref{lem:optcost}, \eq{app2} follows from  \cref{lem:deltaclusters} and the last inequality follows from the assumption that  $\eps
\geq 10^5\cdot k^3/\Upsilon$.


 Now, suppose that  part 2 of \cref{lem:approx_clusters2} holds, i.e.~there are indices $i,\ell$ such that, for any $j \ne \ell$, it holds that
\begin{align*}
\vol(A_i \cap S_{\ell}) \ge \eps_i  \vol(S_{\ell}), \\
\vol(A_i \cap S_{j}) \ge \eps_i  \vol(S_{\ell})
\end{align*}
for some $\eps\geq 0$,  and
  $
 \sum_{i=1}^k \eps_i \ge \eps$.
 In this case, we only need to repeat the proof by setting, for any  $j \ne i$, $B_j = A_i \cap S_j$, $S_{j_1} = S_{\ell}$, and $S_{j_2} = S_j$.
\end{proof}

\begin{figure}\centering
\begin{tikzpicture}

\draw [color=gray,dotted, thick] (0.7,0.4) --(-3.3,-1.5);
\draw [color=gray,thick] (0.7,0.4) --(-2.6,-0.1);

\draw [color=gray,dotted, thick] (-3.3,-1.5) --(-2.6,-0.1);

\draw  [color=red1,thick] (-3.3,-1) ellipse (1.4cm and 2.6cm);
\draw [color=blue1,thick] (0,0) ellipse (3.5cm and 2cm);
\draw  [color=red1,thick] (3.3,-1) ellipse (1.4cm and 2.6cm);
\node (a) at (0, 2.4) {{$A_i$}};
\node (b) at (-4.5, 2) {{$S_{i_1}$}};
\node (c) at (4.5, 2) {{$S_{i_2}$}};
\node (d) at (-2.9, 0.05) {{$u$}};
\node (d) at (-2.7, 0.55) {{$B_i$}};

\fill (-3.3,-1.5) circle (2pt);
\fill (3.3,-1.5) circle (2pt);
\fill (0.7,0.4) circle (2pt);

\fill (-2.6,-0.1) circle (2pt);

\node (d) at (0.7, 0.7) {{$c_i$}};
\node (e) at (-3.3, -2) {{$p^{(i_1)}$}};
\node (f) at (3.3, -2) {{$p^{(i_2)}$}};
\end{tikzpicture}
\caption{We use the fact that $\left\|p^{(i_1)}-c_i\right\|\geq \left\|p^{(i_2)}-c_i\right\|$, and lower bound the value of $\COST$ function by only looking at the contribution of points  $u\in B_i$ for all $1\leq i\leq k$.\label{fig:expproof}}
\end{figure}
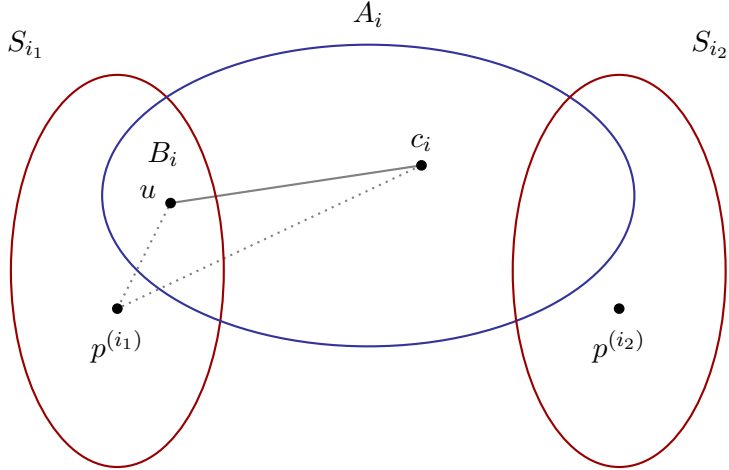

\section{Partitioning Well-Clustered Graphs in Nearly-Linear Time}
\label{sec:algorithm}

In this section we present a nearly-linear time algorithm for partitioning well-clustered graphs, and prove \cref{thm:main_informal}. At a high level, our algorithm follows the general framework of $k$-means algorithms, and consists of two steps: the seeding step, and the grouping step. The seeding step chooses $k$ candidate centers such that each one
is close to the actual center of a different cluster.
The grouping step assigns  the
remaining vertices to their individual closest candidate  centers.

All the proofs for the seeding and grouping steps assume that we have an embedding $\{x(u)\}_{u\in V[G]}$ satisfying the following two conditions:
\begin{align}
\left(1-\frac{1}{10\log{n}}\right) \cdot \| F(u) \|^2 \leq &\| x(u) \|^2 \leq  \| F(u) \|^2 + \frac{1}{n^5} \label{eq:approx_emb1}, \\
\left(1-\frac{1}{10\log{n}}\right) \cdot \| F(u)-F(v) \|^2 \leq &\| x(u) - x(v)\|^2 \leq  \| F(u)-F(v) \|^2 + \frac{1}{n^5}
\label{eq:approx_emb}
\end{align}

Notice that these two conditions hold trivially if $\{x(u)\}_{u\in V[G]}$ is the spectral embedding $\{ F(u)\}_{ u \in V[G]}$, or any embedding produced by good approximations of the first $k$ eigenvectors. However, obtaining such embedding becomes non-trivial 
for a large value of $k$,
as  directly computing the first $k$ eigenvectors  takes super-linear time. 
We will present a nearly-linear time algorithm that computes an embedding satisfying \eq{approx_emb1} and \eq{approx_emb}. By using standard dimensionality reduction techniques that approximately preserve pairwise distances, such as the Johnson-Lindenstrauss  transform (see e.g.~\cite{dasguptaJL}),
 we can also always assume that the dimension of the embedding $\{x(u)\}_{u\in V[G]}$ is $d=O(\log^3 n)$.
Throughout the whole section, we assume $k=\omega(\poly\log n)$ and
$\Upsilon=\tilde{\Omega}(k^5)$.

This section is organized as follows: \secref{seed} and \secref{group} discuss the seeding and grouping steps,
assuming that we have an embedding $\{x(u)\}_{u\in V[G]}$ that satisfies \eq{approx_emb1} and \eq{approx_emb}, and  \secref{approx}  analyzes  the approximation guarantee of the partition returned by the grouping step. In \secref{heatker}, we present an algorithm that computes all required quantities in nearly-linear time, assuming that we know the value of $\lambda_k$. This assumption on $\lambda_k$ will be finally removed in  Section~\ref{sec:pfthm13}, and this leads to our final algorithm which  corresponds to \cref{thm:main_informal}.

\subsection{The Seeding Step}
\label{sec:seed}
We proved in Section~\ref{sec:analyembed} that the approximate center $p^{(i)}$ for every $1\leq i\leq k$ satisfies
\[
\frac{99}{100\vol(S_i)} \le \left\|p^{(i)}\right\|^2 \le \frac{101}{100\vol(S_i)},
\]
and most embedded points $F(u)$ are close to their approximate centers. 
Together with \eq{approx_emb1} and \eq{approx_emb}, these two properties 
 imply that, when sampling points $x(u)$ with probability proportional to $d_u\cdot \|x(u)\|^2$,  vertices from different clusters will be approximately sampled with the same probability. We will prove that, when sampling $\Theta(k\log k)$ points in this way, with constant probability there is at least one point sampled from each cluster.

In the next step   remove the sampled points  which are close to each other, and call this resulting set $C^{\star}$. We prove that  with constant probability there is exactly one point in $C^{\star}$ from a cluster. Algorithm~\ref{fig:seed} below gives a formal description of the seeding step.

\begin{algorithm}

  \begin{algorithmic}[1]

\STATE \textbf{input}: the number of clusters $k$, and the embedding $\{x(u)\}_{u\in V[G]}$.

	\STATE Let $K=\Theta(k\log k)$.	
	 \FOR  {$i = 1,\dots, K$}
			\STATE Set $c_i = u$ with probability proportional to $d_u\| x(u) \|^2$.
	\ENDFOR
	 \FOR  {$i = 2,\dots,K$}
			\STATE Delete all $c_j$ with $j < i$ such that $\|x( c_i) - x(c_j )\|^2 < \frac{\| x(c_i) \|^2}{2\cdot10^4 k}$.
	\ENDFOR
	\RETURN the remaining sampled vertices.
 \end{algorithmic}
 \caption{\textsc{SeedAndTrim}$(k,\{x(u)\}_{u\in V[G]})$}
\label{fig:seed}
\end{algorithm}

Now we analyze Algorithm~\ref{fig:seed}.
For any $1\leq i\leq k$, we define $\mathcal{E}_i$ to be the sum of the $\ell_2^2$-distance between the embedded points from $S_i$ and $p^{(i)}$, i.e.,
\[
\mathcal{E}_i \triangleq\sum_{u\in S_i} d_u\left\|F(u)-p^{(i)} \right\|^2.
\]
For any parameter $\rho>0$, we define the radius of $S_i$ with respect to $\rho$ to be 
\[
R_i^{\rho}\triangleq \frac{\rho \cdot \mathcal{E}_i}{\vol(S_i)},
\]
and define $\core_i^{\rho}\subseteq S_i$ to be the set of vertices whose $\ell_2^2$-distance to $p^{(i)}$ is at most $R_i^{\rho}$, i.e.,
\begin{equation}
\label{eq:defcore}
\core_i^{\rho} \triangleq\left\{u\in S_i : \left\|F(u) - p^{(i)} \right\|^2 \leq R^{\rho}_i \right \}.
\end{equation}
By the averaging argument it holds that
\[
\vol(S_i\setminus \core_i^{\rho}) \leq \frac{\sum_{u\in S_i} d_u \left\| F(u)-p^{(i)}\right\|^2 }{ R_i^{\rho}}  =\frac{\vol(S_i)}{\rho},
\]
and 
\begin{equation}\label{eq:lbcore}
\vol(\core_i^{\rho})\geq \max\left\{\left(1-\frac{1}{\rho}\right)\vol(S_i), 0\right\}.
\end{equation}
 From now on, we set the parameter
 \[
\alpha\triangleq\Theta(K\log K).
\]
We first show that most embedded points of the vertices in $S_i$ are contained in the cores $\core_i^{\alpha}$, for $1\leq i\leq k$.
\begin{lem}
\label{lem:exactProb}
The following statements hold:
\begin{enumerate}
\item
$
\sum_{u\in\core_{i}^{\alpha}} {d_u\cdot \|F(u)\|^2} \geq 1-\frac{1}{100K} .
$
\item
$
\sum_{i=1}^k \sum_{u \notin \core_{i}^{\alpha} } {d_u\cdot \|F(u)\|^2} \leq    \frac{k}{100 K} .
$
\end{enumerate}
\end{lem}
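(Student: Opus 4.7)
The starting identity is that
\[
\sum_{u\in V} d_u\,\|F(u)\|^2 \;=\; \sum_{u\in V}\sum_{j=1}^k f_j(u)^2 \;=\; \sum_{j=1}^k \|f_j\|^2 \;=\; k,
\]
so Part~2 will follow at once from Part~1 by summing over $i$, since $\core_i^{\alpha}\subseteq S_i$ and the $S_i$'s partition $V$. Thus the whole task reduces to Part~1, which in turn reduces to estimating $\sum_{u\in S_i} d_u\,\|F(u)\|^2$ and then subtracting a small mass coming from $S_i\setminus\core_i^{\alpha}$.

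To evaluate $\sum_{u\in S_i} d_u\,\|F(u)\|^2$, I will expand around $p^{(i)}$:
\begin{align*}
\sum_{u\in S_i} d_u\,\|F(u)\|^2
&= \vol(S_i)\,\|p^{(i)}\|^2 \;+\; 2\Big\langle p^{(i)},\, \sum_{u\in S_i} d_u\bigl(F(u)-p^{(i)}\bigr)\Big\rangle \;+\; \mathcal{E}_i \\
&= -\vol(S_i)\,\|p^{(i)}\|^2 \;+\; 2\Big\langle p^{(i)},\, \sum_{u\in S_i} d_u F(u)\Big\rangle \;+\; \mathcal{E}_i.
\end{align*}
The key algebraic step is that the vector $\sum_{u\in S_i} d_u F(u)\in\R^k$ has $j$-th coordinate $\sum_{u\in S_i}\sqrt{d_u}\,f_j(u)=\langle \mathbf{D}^{1/2}g_i, f_j\rangle = \sqrt{\vol(S_i)}\,\alpha^{(i)}_j$, where $\alpha^{(i)}_j=\langle\bar g_i,f_j\rangle$ are the expansion coefficients of $\bar g_i$ appearing in the structure theorem. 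Combining this with $p^{(i)}_j=\beta^{(j)}_i/\sqrt{\vol(S_i)}$ gives
\[
\Big\langle p^{(i)},\, \sum_{u\in S_i} d_u F(u)\Big\rangle \;=\; \sum_{j=1}^k \beta^{(j)}_i\,\alpha^{(i)}_j \;=\; (\mathbf{B}\mathbf{A})_{i,i}.
\]
But unwinding the identity $f_i=\sum_j \beta^{(i)}_j\hat{f}_j=\sum_{j,\ell}\beta^{(i)}_j\alpha^{(j)}_\ell f_\ell$ forces $\mathbf{A}\mathbf{B}=\mathbf{I}$ and hence $\mathbf{B}\mathbf{A}=\mathbf{I}$, so this inner product equals exactly $1$. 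Together with $\vol(S_i)\|p^{(i)}\|^2\in(1-\epsilon,1+\epsilon)$ from \lemref{norm} (cf.~\eq{rangebeta} with $\epsilon=2.2\sqrt{k/\Upsilon}+1.1k/\Upsilon$) and $\mathcal{E}_i\leq 1.1k^2/\Upsilon$ from \lemref{optcost}, we obtain
\[
\sum_{u\in S_i} d_u\,\|F(u)\|^2 \;\geq\; 1 - \epsilon \;\geq\; 1 - O\bigl(\sqrt{k/\Upsilon}\bigr).
\]

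Next, I discard the tails: for $u\in S_i\setminus\core_i^{\alpha}$ use $\|F(u)\|^2\le 2\|p^{(i)}\|^2+2\|F(u)-p^{(i)}\|^2$, together with Markov's bound $\vol(S_i\setminus\core_i^{\alpha})\le\vol(S_i)/\alpha$ (already noted in the text) and \lemref{norm}, to get
\[
\sum_{u\in S_i\setminus\core_i^{\alpha}} d_u\,\|F(u)\|^2 \;\le\; \tfrac{2\cdot 11/10}{\alpha} \;+\; 2\mathcal{E}_i \;=\; O\!\bigl(1/(N\log N)\bigr) + O(k^2/\Upsilon).
\]
Plugging in $\alpha=\Theta(N\log N)$, $N=\Theta(k\log k)$, and the assumption $\Upsilon=\Omega(k^5)$ makes both the $\epsilon$-slack above and this tail contribution at most $\tfrac{1}{200N}$ each, yielding
\[
\sum_{u\in\core_i^{\alpha}} d_u\,\|F(u)\|^2 \;\geq\; 1 - \tfrac{1}{100N},
\]
which is Part~1. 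Summing this inequality over $i=1,\dots,k$ and subtracting from the total mass $k$ gives Part~2 immediately.

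\paragraph{Main obstacle.} The only non-routine step is the exact identity $\langle p^{(i)}, \sum_{u\in S_i} d_u F(u)\rangle=1$: all three ``approximate'' relations of the structure theorem ($\mathbf{A}$'s columns near-orthonormal, $\mathbf{B}$ well-defined, $\hat g_i$ close to $f_i$) are intertwined, and it is only the clean equality $\mathbf{A}\mathbf{B}=\mathbf{I}$---which holds exactly rather than approximately, because $\{\hat f_j\}$ forms an actual basis of $\spn\{f_1,\dots,f_k\}$ once $\Upsilon>k$---that yields a lower bound of the form $1-o(1/N)$ rather than merely $\Omega(1)$. Everything else is quantitative bookkeeping using the inequalities from \secref{analyembed} and the gap assumption $\Upsilon=\Omega(k^5)$.
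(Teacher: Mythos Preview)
Your proof is correct and takes a genuinely different route from the paper's. The paper bounds $\sum_{u\in\core_i^{\alpha}} d_u\|F(u)\|^2$ directly by an averaging/integration trick: since $\core_i^{\rho}\subseteq\core_i^{\alpha}$ for $\rho\le\alpha$, it lower-bounds the core mass by $\frac{1}{\alpha}\int_0^{\alpha}\sum_{u\in\core_i^{\rho}}d_u\|F(u)\|^2\,d\rho$, and then controls each integrand via $\|F(u)\|\ge\|p^{(i)}\|-\sqrt{R_i^{\rho}}$ and $\vol(\core_i^{\rho})\ge(1-1/\rho)\vol(S_i)$. This yields $1-O(k\sqrt{\alpha/\Upsilon})-(\ln\alpha)/\alpha$, which is then pushed to $1-1/(100N)$ using the gap assumption.

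Your approach instead computes $\sum_{u\in S_i}d_u\|F(u)\|^2$ almost exactly. The crucial observation---that $\mathbf{AB}=\mathbf{I}$ holds \emph{exactly} once $\Upsilon>k$, hence $\mathbf{BA}=\mathbf{I}$ and $\big\langle p^{(i)},\sum_{u\in S_i}d_uF(u)\big\rangle=(\mathbf{BA})_{i,i}=1$---is not stated in the paper but follows immediately from the definitions in the proof of \thmref{structureFormal}. This gives $\sum_{u\in S_i}d_u\|F(u)\|^2=2-\vol(S_i)\|p^{(i)}\|^2+\mathcal{E}_i\ge 1-O(\sqrt{k/\Upsilon})$, after which a straightforward Markov tail bound on $S_i\setminus\core_i^{\alpha}$ finishes. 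Your argument is more algebraic and arguably cleaner: it explains \emph{why} the total mass over $S_i$ is essentially~$1$ (the change-of-basis identity), whereas the paper's integration argument is a purely analytic device that does not surface this structure. On the other hand, the paper's method is more self-contained in that it only uses the distance estimates from \secref{analyembed} and never needs to revisit the matrices $\mathbf{A},\mathbf{B}$.
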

\begin{proof}
By the definition of $\core_i^{\alpha}$, it holds that
\begin{align}
& \sum_{u \in \core_{i}^{\alpha} } {d_u\cdot \|F(u)\|^2} \nonumber \\
& \geq \frac{1}{\alpha} \int_0^{\alpha} \sum_{u \in \core_{i}^{\rho} } {d_u\cdot \|F(u)\|^2} d\rho \notag \\
&\ge  \frac{1}{\alpha} \int_0^{\alpha}  \left(\left\|p^{(i)}\right\| - \sqrt{R_i^{\rho}}\right)^2 \vol(\core_i^{\rho}) d\rho \label{eq:int11} \\
&\ge  \frac{1}{\alpha}  \int_0^{\alpha}  \left(\left\|p^{(i)}\right\|^2 - 2\sqrt{R_i^{\rho}}\cdot\left\|p^{(i)}\right\|\right) \max\left\{\left(1-\frac{1}{\rho}\right)\vol(S_i),0\right\} d\rho \label{eq:int21}  \\
&\ge  \frac{1}{\alpha} \int_0^{\alpha}  \max\left\{\left(1 - (2.2\sqrt{k/ \Upsilon} + 1.1k/ \Upsilon) - 3  \sqrt{ \mathcal{E}_i \rho}\right) \left(1-\frac{1}{\rho}\right),0\right\}d\rho \label{eq:int31}
\end{align}
where \eq{int11} follows from the fact that for all $u \in \core_i^{\rho}$, $\|F(u)\| \ge \|p^{(i)}\| - \sqrt{R_i^{\rho}}$, \eq{int21} from 
\eq{lbcore}, and \eq{int31} from the definition of $R_i^{\rho}$ and the fact that
\[
\left \|p^{(i)}\right\|^2\cdot\vol(S_i)\in  \left(1 - (2.2\sqrt{k/ \Upsilon} + 1.1k/ \Upsilon), 1 +  2.2\sqrt{k/ \Upsilon} + 1.1k/ \Upsilon\right).
\]
Since $\mathcal{E}_i\leq 1.1k^2/\Upsilon$   by \cref{lem:optcost}, it holds that 
\begin{align}
\sum_{u \in \core_{i}^{\alpha} } {d_u\cdot \|F(u)\|^2}
&\ge  \frac{1}{\alpha} \int_0^{\alpha}  \max\left\{\left(1 - (2.2\sqrt{k/ \Upsilon} + 1.1k/ \Upsilon) - 4\sqrt{k^2\rho/\Upsilon}\right) \left(1-\frac{1}{\rho}\right),0\right\}d\rho \notag  \\
&\ge  \frac{1}{\alpha} \int_0^{\alpha}  \max\left\{1 - (2.2\sqrt{k/ \Upsilon} + 1.1k/ \Upsilon) - 4\sqrt{k^2\rho/\Upsilon} -\frac{1}{\rho} ,0\right\}d\rho \notag  \\
&\ge 1 - (2.2\sqrt{k/ \Upsilon} + 1.1k/ \Upsilon) - 4k\sqrt{\alpha/\Upsilon} -\frac{\ln{\alpha}}{\alpha} \notag \\
&\ge 1 - \frac{1}{100K},\nonumber
\end{align}
where the last inequality holds by the  assumption on $\alpha$ and $\Upsilon$.

The second statement follows by the fact that
\[
\sum_{i=1}^k \sum_{u \in \core_{i}^{\alpha} } {d_u\cdot \|F(u)\|^2} \geq k\left(1-  \frac{1}{100K}\right)
\]
and
$
\sum_{u\in V[G]} d_u\cdot\|F(u)\|^2=k$.
\end{proof}

The next lemma shows that the embedded  points from the same core are close to each other, while the embedded points from different cores are far from each other.

\begin{lem}
\label{lem:sameCoreSmall}
The following statements hold:
\begin{enumerate}
\item
For any $1\leq i\leq k$ and any two vertices $u, v \in \core_{i}^{\alpha}$, it holds that
\[
\left\|  x(u) - x(v) \right\|^2 \leq \min\left\{  \frac{11 \alpha k^2 }{ \Upsilon \vol(S_i)},  \frac{\|x(u)\|^2}{2 \cdot 10^4\cdot k} \right\}.
\]
\item For any $i \ne j$, and  $u \in \core_i^{\alpha}, v  \in \core_j^{\alpha}$, it holds that
\[
\left\|  x(u) - x(v) \right\|^2
\geq \frac{1}{7000 k \vol(S_i)}  > \frac{\|x(u)\|^2}{10^4 k}.
\]

\end{enumerate}
\end{lem}

\begin{proof}
By the definition of $\core_{i}^{\alpha}$, it holds for any $u\in \core_{i}^{\alpha}$ that
\[
\left\|F(u) - p^{(i)}\right \|\leq \sqrt{R_i^{\alpha}}.
\]
By the triangle inequality, it holds for any $u\in\core_i^{\alpha}$ and $v\in\core_i^{\alpha}$ that
$
\|F(u) - F(v) \| \leq 2 \sqrt{R_i^{\alpha}}$,
and
\[
\|F(u) - F(v) \|^2 \leq 4 R^{\alpha}_i = \frac{4 \alpha \mathcal{E}_i} { \vol(S_i)}
\leq \frac{5 \alpha k^2 }{ \Upsilon \vol(S_i)},
\]
where the last inequality follows from \cref{lem:optcost}. Hence, by \eq{approx_emb} it holds that
\begin{align*}
\|x(u) - x(v) \|^2
& \leq \|F(u) - F(v) \|^2 + \frac{1}{n^5} \leq \frac{5\alpha k^2 }{ \Upsilon \vol(S_i)} + \frac{1}{n^5} 
\leq
\frac{11\alpha k^2}{ \Upsilon \vol(S_i)},
\end{align*}
where we use the fact that $\frac{1}{n^5} \ll \frac{1}{\vol(S_i)}$.
On the other hand,   we have that 
\begin{align*}
\| F(u) \|^2
&\geq\left (\left\|p^{(i)} \right\| -\sqrt{R^{\alpha}_i} \right)^2  \ge \frac{9}{10 \vol(S_i)},
\end{align*}
where the last inequality follow from \cref{lem:norm} and the definition of $R_i^{\alpha}$. 
By 
\eq{approx_emb1} and 
the conditions on $\alpha$, $\Upsilon$,  it also holds
\[
\|x(u) - x(v) \|^2 \leq \frac{5\alpha k^2 }{ \Upsilon \vol(S_i)} + \frac{1}{n^5}\le \frac{10\alpha k^2}{ \Upsilon} \| F(u) \|^2 \leq \frac{\|x(u)\|^2}{2 \cdot 10^4\cdot k}.
\]
With these we proved the first statement.

Now for the second statement.
By the triangle inequality, it holds for any pair of $u \in \core_i^{\alpha}$ and $v \in \core_{j}^{\alpha}$ that
\[
\|  F(u) - F(v) \| \geq \left\|p^{(i)} - p^{(j)}\right\| - \left\| F(u) - p^{(i)} \right\|   - \left\| F(v) -  p^{(j)} \right\|.
\]
By \cref{lem:deltaclusters}, we have for any $i \ne j$ that
\[
\left\|p^{(i)} - p^{(j)}\right\|^2 \ge  \frac{1}{10^3 k\min{\{\vol(S_{i}),\vol(S_j)\}}}.
\]
Combining this with the fact  that
\[
\left\| F(u) - p^{(i)} \right\| \leq \sqrt{R_i^{\alpha}}\leq  \sqrt{\frac{1.1\alpha  k^2 }{ \Upsilon \vol(S_i)}},
\]
we obtain that
\begin{align*}
\|  F(u) - F(v) \| &\geq \left\|p^{(i)} - p^{(j)}\right\| - \left\| F(u) - p^{(i)} \right\|   - \left\| F(v) -  p^{(j)} \right\|\\
& \geq \sqrt{\frac{1}{10^3 k\min{\{\vol(S_{i}),\vol(S_j)\}}}} - \sqrt{\frac{1.1\alpha k^2 }{ \Upsilon \vol(S_i)}} -\sqrt{\frac{1.1\alpha k^2 }{ \Upsilon \vol(S_j)}}\\
&\geq  \sqrt{\frac{1}{1.1\cdot 10^3 k\min{\{\vol(S_{i}),\vol(S_j)\}}}}.
\end{align*}
Notice that
\[
\| x(u)\|^2\leq \| F(u) \|^2 + \frac{1}{n^5} \leq  \left(\left\|p^{(i)} \right\| +\sqrt{R^{\alpha}_i} \right)^2 + \frac{1}{n^5} \leq \frac{11}{10 \vol(S_i)}+\frac{1}{n^5} \leq \frac{11}{9 \vol(S_i)},
\]
therefore we have 
\begin{align*}
\left\|x(u) - x(v) \right\|^2
& \geq \left(1-\frac{1}{10\log{n}}\right) \left\|F(u) - F(v) \right\|^2 \geq \frac{1  }{7000 k \vol(S_{i})} > \frac{\|x(u)\|^2}{10^4 k}. \qedhere
\end{align*}
\end{proof}

We next show that, after sampling $\Theta(k\log k)$ vertices, with constant probability  the sampled vertices are in the cores  $\bigcup_{i=1}^k \core_i^{\alpha}$, and every core contains at least one sampled vertex.

\begin{lem}
\label{lem:success}
Assume that $K=\Omega(k\log k)$ vertices are sampled, in which each vertex is sampled with probability proportional to $d_u\cdot \|x(u)\|^2$.
Then,  with constant probability the set $C = \{ c_1 \ldots c_K \}$ of sampled vertices satisfies the following properties:
\begin{enumerate}
\item Set $C$ only contains vertices from the cores, i.e.
$
C\subseteq \bigcup_{i=1}^k \core_i^{\alpha}
$;
 \item Set $C$ contains at least one vertex from each cluster, i.e.
 $
C\cap S_i \neq \emptyset$ for any $1 \leq i \leq k$.

 \end{enumerate}
\end{lem}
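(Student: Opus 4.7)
The plan is to reduce everything to the fact that $\sum_{u\in V}d_u\|F(u)\|^2=\sum_{u}\sum_{i=1}^k f_i(u)^2=k$ and then combine the two displays in \lemref{exactProb} with a standard coupon-collector argument. The approximate embedding bound \eqref{eq:approx_emb1} lets us pass between $\|x(u)\|^2$ and $\|F(u)\|^2$ at the cost of constant factors (and an additive $1/n^5$ term that we will absorb trivially since there are only $n$ vertices and degrees are at most $n$). So from now on I will pretend we sample with weights proportional to $d_u\|F(u)\|^2$, losing only a factor of $2\ce$ on the ``good'' side and an additive $\poly(1/n)$ on the ``bad'' side; these slacks are negligible compared to the $1/N$ terms below.

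Property~1 (sampled vertices lie in cores) follows from a union bound. By the second display of \lemref{exactProb}, the total $F$-mass outside the cores is at most $k/(100N)$, while the total $F$-mass is exactly $k$. Using \eqref{eq:approx_emb1} to translate into the actual sampling distribution, the probability that a single sample lies outside $\bigcup_{i=1}^k\core_i^{\alpha}$ is at most $(2\ce)\cdot\frac{k/(100N)+o(1)}{k}\leq \frac{1}{10N}$ for $N$ and $n$ sufficiently large. A union bound over the $N$ samples then gives property~1 with probability at least $9/10$.

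Property~2 (every cluster is hit) is a coupon-collector argument. By the first display of \lemref{exactProb}, the $F$-mass inside $\core_i^{\alpha}$ is at least $1-1/(100N)$; again translating via \eqref{eq:approx_emb1} and dividing by the total mass (which is at most $k+o(1)$), a single sample lands in a \emph{particular} core $\core_i^{\alpha}$ with probability at least $p\triangleq\frac{1}{3\ce k}$. The probability that some cluster is missed after $N$ independent samples is therefore at most $k(1-p)^N$, which is at most $1/10$ provided $N\geq C k\log k$ for a sufficiently large constant $C$; this is exactly the hypothesis $N=\Omega(k\log k)$.

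Combining the two bounds by a union bound gives both properties simultaneously with probability at least $4/5$, a constant, which is what the lemma asserts. The only place where one must be a little careful is keeping the additive $1/n^5$ and the multiplicative $2\ce$ from \eqref{eq:approx_emb1} from polluting the $1/(100N)$ bounds of \lemref{exactProb}; but since $N=\poly(k)$ and $n$ is polynomially larger, all these slacks are dwarfed by the principal terms and no delicate estimation is required.
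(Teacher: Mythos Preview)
Your proposal is correct and follows essentially the same approach as the paper: bound the total sampling mass via $\sum_u d_u\|F(u)\|^2=k$ and \eqref{eq:approx_emb1}, then use \lemref{exactProb} for a union bound on Property~1 and a coupon-collector estimate for Property~2. The only differences from the paper are cosmetic (different constants such as $1/(3\ce k)$ versus $1/(10k)$, and a final success probability of $4/5$ versus $1/2$); your handling of the additive $1/n^5$ term is slightly informal but adequate, since $N=\poly(k)$ and the error is $O(n^{-3})$.
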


\begin{proof}
By \eq{approx_emb1}, it holds for every vertex $u$ that
\[
\left(1-\frac{1}{10\log{n}}\right) \cdot \| F(u) \|^2 \leq \| x(u) \|^2 \leq  \| F(u) \|^2 + \frac{1}{n^5}.
\]
Since
$
\sum_{u\in V[G]} d_u\|F(u)\|^2=k,
$
it holds that 
\[
\sum_{u\in V[G]} d_u\|x(u)\|^2\leq\sum_{u\in V[G]} d_u\cdot\left( \| F(u) \|^2 + \frac{1}{n^5}\right)\leq k+1,
\]
and
\[
\sum_{u\in V[G]} d_u\|x(u)\|^2\geq \sum_{u\in V[G]} d_u\cdot \left(1-\frac{1}{10\log{n}}\right) \cdot \| F(u) \|^2 \geq  \left(1-\frac{1}{10\log{n}}\right) \cdot k,
\]
i.e., 
the total probability mass that we use to sample vertices, i.e. $\sum_{u\in V[G]} d_u\|x(u)\|^2$, is between $\left(1-\frac{1}{10\log{n}}\right)\cdot k$ and $k+1$.

We first bound the probability that we sample at least one vertex from
every core.
For any fixed $1\le i \le k$, the probability that a vertex from $\core_i^{\alpha}$ gets sampled is at least
\begin{align*}
\frac{\sum_{u\in\core_{i}^{\alpha}} d_u\cdot \|x(u)\|^2}{k + 1}
 \geq \frac{\sum_{u\in\core_{i}^{\alpha}} d_u\cdot \|F(u)\|^2}{3(k + 1)}
 \geq \frac{1-\frac{1}{100 K}  }{3\cdot (k + 1)} \geq \frac{1}{10k}.
\end{align*}
Therefore, the probability that we never encounter a vertex from $\core_i^{\alpha}$ after
sampling $K$ vertices  is at most
$
\left( 1 - \frac{1}{10k} \right)^{K}
\leq \frac{1}{10k}$.
Also, the probability that
a sampled vertex is  outside the cores of the clusters is at most
\begin{align*}
 \frac{\sum_{i=1}^k\sum_{u\in S_i\setminus \core_{i}^{\alpha}} d_u\cdot \|x(u)\|^2}{\left(1-\frac{1}{10\log{n}}\right)\cdot k}
& \leq \frac{\sum_{i=1}^k\sum_{u\in S_i\setminus \core_{i}^{\alpha}} d_u\cdot\left( \|F(u)\|^2 + n^{-5}\right) }{k /2}\\
& \leq \frac{\frac{k}{100 K}  + n^{-3}}{k/2} \leq   \frac{2}{100K} + \frac{1}{n^2}.
\end{align*}
Taking a union bound over all these events gives that the total probability
of undesired events is at most 
\[
k \cdot \frac{1}{10k} + K \cdot \left(\frac{1}{n^2} + \frac{2}{100 K}\right)
\leq \frac{1}{3}.\qedhere
\]

\end{proof}

Based on
\cref{lem:sameCoreSmall} and \cref{lem:success}, we can simply delete one of the two vertices  $c_i$ and $c_j$ whose distance is less than $10^{-4}\cdot\|x(c_i)\|^2/(2k)$.  The following lemma presents the correctness and runtime of the procedure
\textsc{SeedAndTrim}, i.e., Algorithm~\ref{fig:seed}.

\begin{lem}
\label{lem:seedAndTrim}
Given the embedding $\{x(u)\}_{u\in V[G]}$ of dimension $d=O(\log^3n)$ that satisfies \eq{approx_emb1} and  \eq{approx_emb}, with constant probability the
procedure
\textsc{SeedAndTrim} returns  a set $C^{\star}$ of
centers $c_1 \ldots c_k$ in $\tilde{O}(n + k^2)$ time, such that each $\core_{i}^{\alpha}$ contains
exactly one vertex in $C^{\star}$.
\end{lem}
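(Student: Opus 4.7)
My plan is to combine the three preceding lemmas in a straightforward way: Lemma~\ref{lem:success} handles the sampling phase, and Lemmas~\ref{lem:sameCoreSmall} and~\ref{lem:betweenCoreBig} together yield a two-sided distance separation that justifies the deterministic trimming phase. The correctness argument essentially just verifies that the deletion threshold $\|x(c_i)\|^2/(10^4 k)$ lies strictly inside the gap between the ``same-core'' and ``different-core'' regimes.

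First I would apply Lemma~\ref{lem:success} with $N = \Theta(k\log k)$ to conclude that, with constant probability, the initially sampled multiset $C = \{c_1,\ldots,c_N\}$ is contained in $\bigcup_{i=1}^k \core_i^{\alpha}$ and meets every $\core_i^{\alpha}$. I condition on this event for the remainder of the argument.

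Next I would analyze the trimming loop. For each pair $c_j, c_i$ with $j < i$, Lemma~\ref{lem:sameCoreSmall} gives $\|x(c_i)-x(c_j)\|^2 \le \|x(c_i)\|^2/(2\cdot 10^4 k)$ when they lie in the same core, which is strictly below the threshold, so $c_j$ is deleted. Conversely, Lemma~\ref{lem:betweenCoreBig} applied with $u=c_i$ gives $\|x(c_i)-x(c_j)\|^2 > \|x(c_i)\|^2/(10^4 k)$ when they lie in different cores, so $c_j$ survives. A short induction on $i$ then shows that after processing index $i$, the surviving samples in $\{c_1,\ldots,c_i\}$ include at most one representative per core (namely the largest-index one), and no cross-core survivor is ever deleted. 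Combined with the first step, at termination we have exactly $k$ survivors, one per $\core_i^{\alpha}$, which is the claimed set $C^{\star}$.

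For the runtime, drawing $N = \tilde{O}(k)$ samples from the weighted distribution $\Pr[u] \propto d_u \|x(u)\|^2$ can be done in $\tilde{O}(k)$ time using a precomputed alias table (its $\tilde{O}(n)$ construction cost will be absorbed into the overall algorithm). The trimming loop then issues $O(N^2) = \tilde{O}(k^2)$ squared-distance queries among the sampled points. The main obstacle I anticipate is showing that each such query costs only $\tilde{O}(1)$ time: since the ambient dimension of $x(u)$ may be large, one should store the sampled coordinates in a sketched (e.g.\ Johnson--Lindenstrauss) form of sufficient accuracy to preserve the strict separation between the two lemma thresholds. This is affordable because the gap between Lemmas~\ref{lem:sameCoreSmall} and~\ref{lem:betweenCoreBig} is a factor of $\Omega(k)$, so a coarse multiplicative approximation (which is compatible with the embedding conditions~\eqref{eq:approx_emb1} and~\eqref{eq:approx_emb}) is enough to keep both directions of the dichotomy intact. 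Once this bookkeeping is in place, the total cost is $\tilde{O}(k^2)$ as claimed.
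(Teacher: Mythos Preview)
Your proposal is correct and matches the paper's approach: the paper does not give a formal proof of this lemma at all, but simply states that it follows from Lemmas~\ref{lem:sameCoreSmall}, \ref{lem:betweenCoreBig}, and~\ref{lem:success}, which is exactly the combination you carry out. Your discussion of the per-query cost via dimension reduction is more careful than the paper's treatment (the paper handles this implicitly, since the embedding $x$ is either the $k$-dimensional spectral embedding or the heat-kernel embedding already passed through a Johnson--Lindenstrauss transform in Lemma~\ref{lem:approxEmbedding}).
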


\begin{proof}
Since the sampled set $C$ contains at least one vertex from each core $\core_i^{\alpha}$ with constant probability,  and  only vertices from different cores will remain in $C^{\star}$ by \cref{lem:sameCoreSmall} and the algorithm description, the \textsc{SeedAndTrim} procedure returns
$k$ centers with constant probability.

Now we analyze the runtime. The procedure takes $\tilde{O}(n)$ time to compute the norms of $\{x(u)\}_{u\in V[G]}$,  since the embedding has dimension $O(\log^3 n)$ by assumption. It takes $\tilde{O}(k)$ time to sample $\tilde{O}(k)$ vertices, and  trimming the sampling vertices takes $\tilde{O}(k^2)$ time. Hence, the total runtime is  $\tilde{O}(n + k^2)$ .
\end{proof}

As the end of this subsection, we would like to mention that choosing good candidate centers is crucial for
most $k$-means algorithms, and has been studied extensively
in the literature~(e.g.~\cite{k-means++,ORSS12}). Comparing with recent algorithms that obtain good initial centers by
iteratively picking points from a \emph{non-uniform} distribution and take $\Omega(nk)$ time,  our seeding step~(Algorithm~\ref{fig:seed}) runs in $\tilde{O}(n + k^2)$ time.

\subsection{The Grouping Step}
\label{sec:group}

After the seeding step, with constant probability we obtain a set of $k$ vertices $C^{\star}=\{ c_1,\cdots, c_k \}$, and these $k$ vertices belong to $k$ different clusters.   Now we   assign each remaining vertex $u$ to a cluster $S_i$ if, comparing with all other points $x(c_j)$ with $c_j \in C^{\star}$,  $x(u)$ is closer to $x(c_i)$.
A naive implementation of this step requires $\tilde{\Omega}(nk)$ time.
To speed it up, we apply $\eps$-approximate nearest neighbor data structures~($\eps$-\textsf{NNS})~\citep{IndykMotwani98}, whose formal description is as follows:

\begin{prob}[$\eps$-approximate nearest neighbor  problem]
Given a set of point $P \subset \R^d$ and a point $q \in \R^d$, find a point $p \in P$ such that, for all $p' \in P$, $\|p-q\| \le (1+ \eps)\|p'-q\|$.
\end{prob}

\begin{thm}[\cite{IndykMotwani98}]
\label{thm:nns}
Given a set $P$ of points in $\R^d$, there is an algorithm that 
solves the $\eps$-approximate nearest neighbor problem with $$\tilde{O}\left(|P|^{1+\frac{1}{1+\eps}} + d\cdot |P|\right)$$
preprocessing time and $\tilde{O}\left(d\cdot |P|^{\frac{1}{1+\eps}}\right)$ query time.
\end{thm}

Now we set $P=\{x(c_1),\dots,x(c_k)\}$, and apply the above $\varepsilon$-approximate nearest neighbor data structures
to assign the remaining vertices to $k$ clusters $A_1,\cdots, A_k$. By  \cref{thm:nns} and setting $\eps=\log{k} -1$, this step can be finished with $\tilde{O}(k)$ preprocessing time and $\tilde{O}(1)$ query time for each query. Hence, the runtime of the grouping step is $\tilde{O}(n)$. Notice that, with our choice of $\eps=\log{k} -1$ and application of $\eps$-\textsf{NNS}, all the remaining vertices in $V\setminus C^{\star}$ \emph{might not} assign to the cluster $A_i$ with the closest center $c_i$. We will prove in the next subsection that our choice of $\varepsilon$ suffices to obtain a good approximation of the optimal partition.
The runtime of the grouping step, and the properties of the returned clusters are summarized in the following lemma:


\begin{lem}
\label{lem:group}
Given a set of centers $C^{\star}=\{c_1,\dots,c_k\}$, the grouping step runs in $\tilde{O}(n)$ time and returns a partition $A_1,\dots,A_k$ of vertices, such that 
for any $i\in\{1,\ldots, k\}$, and every $u\in A_i$, it holds for any $j\neq i$ that
\[
\|x(u) - x(c_i)\| \le \log k \cdot {\|x(u) - x(c_j)\|}.
\]
\end{lem}
\begin{proof}
The statement follows from the definition of $\varepsilon$-\textsf{NNS} with the choice of $\eps=\log{k} -1$, and \cref{thm:nns}.
\end{proof}

  \subsection{Approximation Analysis of the Algorithm}
  \label{sec:approx}
  Now we study the approximation ratio of the $k$-way partition computed by the seeding and grouping steps. The next lemma analyzes the symmetric difference between the optimal partition and the output of the  algorithm. 

\begin{lem}
\label{lem:approx_group}
Let $A_1,\ldots, A_k$  be the output of the grouping procedure. 
Then, under a proper permutation of the indices, with constant probability for any $1 \le i \le k$ it holds that (i) $
\vol(A_i \triangle S_i) = \tilde{O}\left(k^3/ \Upsilon\right) \vol(S_i)$,
and (ii) $
	\phi_G(A_i) = 1.1\cdot\phi_G(S_i) + \tilde{O}\left( k^3 / \Upsilon\right)$.
\end{lem}


\begin{proof}
We assume that  $c_1,\dots,c_k \in V$ are the centers returned by \textsc{SeedAndTrim}, and $\{x(u)\}_{u\in V[G]}$  is the embedding 
we used in the algorithm. Moreover, $\{x(u)\}_{u\in V[G]}$ satisfies \eq{approx_emb1} and \eq{approx_emb}.
We further assume that these sampled $c_1,\dots,c_k \subseteq\bigcup_{i=1}^k\core_{i}^{\alpha}$. By \cref{lem:success}, this holds with constant probability, and we assume that this event happens in the following analysis.  Then, by the second statement of \cref{lem:sameCoreSmall} it holds for any $i\neq j$ that 
\begin{equation}
\label{eq:distCent}
\|x(c_i)-x(c_j)\|^2 = \Omega\left( \frac{1}{ k \cdot\min\{\vol(S_i),\vol(S_j)\}}\right).
\end{equation}
By \cref{lem:group}, it holds for any $1\leq i\leq k$ that 
\begin{align}
\vol(S_i\setminus A_i)
&\le \sum_{i\ne j} \vol\left( \left\{v \in S_i : \|x(c_i) - x(v)\| >  \frac{\|x(c_j) - x(v)\|}{\log k}\right\}\right)  \nonumber\\
&\le \sum_{i\ne j} \vol\left( \left\{v \in S_i : \|x(c_i) - x(v)\| >  \frac{\|x(c_i) - x(c_j)\|-\|x(c_i) - x(v)\|}{\log k}\right\}\right)  \notag\\
&\le \sum_{i\ne j} \vol\left( \left\{v \in S_i : 2\|x(c_i) - x(v)\| >  \frac{\|x(c_i) - x(c_j)\|}{\log k}\right\}\right)  \notag\\
& = \sum_{i\ne j} \vol\left( \left\{v \in S_i : \|x(c_i) - x(v)\| >  \frac{\|x(c_i) - x(c_j)\|}{2\log k}\right\}\right). \notag
\end{align}
By \eq{approx_emb} and the triangle inequality, we have that
\[
\| x(c_i) - x(v) \| \leq \| F(c_i)- F(v) \| + \frac{1}{n^{2.5}} \leq \left\|F(c_i) -p^{(i)} \right\| + \left\| p^{(i)} - F(v) \right\| + \frac{1}{n^{2.5}},
\]
and hence
\begin{align}
\vol(S_i\setminus A_i)
&\leq   \sum_{i\ne j} \vol\left( \left\{v \in S_i : \left\|F(c_i) -p^{(i)} \right\| + \left\| p^{(i)} - F(v) \right\| + \frac{1}{n^{2.5}} >  \frac{\|x(c_i) - x(c_j)\|}{2\log k}\right\}\right)\notag\\
&\le \sum_{i\ne j} \vol\left( \left\{v \in S_i : \left\|p^{(i)} - F(v)\right\| >  \frac{\|x(c_i)-x(c_j)\|}{2\log k} - \left\|F(c_i)-p^{(i)}\right\| - \frac{1}{n^{2.5}}\right\}\right) \notag \\
&\le \sum_{i\ne j} \vol\left( \left\{v \in S_i : \left\|p^{(i)} - F(v)\right\| >  \frac{\|x(c_i)-x(c_j)\|}{2\log k} -  \sqrt{R_i^{\alpha}} - \frac{1}{n^{2.5}} \right\}\right) \notag \\
&\le \sum_{i\ne j} \vol\left( \left\{v \in S_i :\left\|p^{(i)} - F(v)\right\|^2 = \Omega\left( \frac{1}{ k\log^2k \cdot \min\{\vol(S_i),\vol(S_j)\} } \right)\right\}\right) \notag\\
&=  \tilde{O}\left( k^3/ \Upsilon \right) \vol(S_i), \nonumber
\end{align}
where the last equality follows from \cref{lem:optcost}.
For the same reason, we have 
\begin{align*}
\vol(A_i\setminus S_i) &\le \sum_{i\ne j} \vol\left( \left\{v \in S_j : \|x(c_j) - x(v)\| \ge  \frac{\|x(c_i) - x(v)\|}{\log k}\right\}\right)\\
&=  \tilde{O}\left( k^3/ \Upsilon \right) \vol(S_i),
\end{align*}
and therefore
\[
\vol(S_i\triangle A_i)= \vol(S_i\setminus A_i) + \vol(A_i\setminus S_i)=\tilde{O}\left( k^3/ \Upsilon \right) \vol(S_i).
\]
This yields the first statement of the lemma.

The second statement follows by the same argument  used in proving \thmref{anlykmean}.
\end{proof}

\subsection{Fast computation of the required embedding\label{sec:ht}}
\label{sec:heatker}

So far we  assumed the existence of the embedding $\{x(u)\}_{u\in V[G]}$ satisfying \eq{approx_emb1} and \eq{approx_emb}, and analyzed the performance of the seeding and grouping steps. In this subsection, we will present a nearly-linear time algorithm to compute all the required distances used in the seeding and grouping steps. Our algorithm is based on the so-called \emph{heat kernal} of a graph.

Formally, the heat kernel of $G$ with parameter $t\geq0$ is defined by
\begin{equation}\label{eq:heatdef}
\mat{H}_t\triangleq\mathrm{e}^{-t\calL } =\sum_{i=1}^n \ce^{-t\lambda_i} f_if_i^{\rot}.
\end{equation}
We  view the heat kernel as a geometric embedding from $V[G]$ to $\R^n$ defined by
\begin{equation}\label{eq:htembed}
x_t(u)\triangleq\frac{1}{\sqrt{d_u}}\cdot\left( \ce^{-t\cdot\lambda_1}f_1(u),\cdots, \ce^{-t\cdot\lambda_n}f_n(u)\right),
\end{equation}
 and define the $\ell_2^2$-distance between the points  $x_t(u)$ and $x_t(v)$ by
\begin{equation}\label{eq:htdist}
\eta_t(u,v)\triangleq\|x_t(u)-x_t(v)\|^2.
\end{equation}
The following lemma shows that, when $k=\Omega(\log n)$ and   $\Upsilon=\Omega(k^3)$,
 the values of $\eta_t(u,v)$ for all edges $\{u,v\}\in E[G]$ can be approximately computed in $\tilde{O}(m)$
 time.

\begin{lem}\label{lem:approxEmbedding}
Let $k=\Omega(\log n)$ and $\Upsilon=\Omega(k^3)$. Then,  there is $t=O(\mathrm{poly}(n))$ such that  the embedding $\{x_t(u)\}_{u\in V[G]}$ defined in \eq{htembed} satisfies \eq{approx_emb1} and \eq{approx_emb}. 
Moreover, the values of $\eta_t(u,v)$ for all $\{u,v\}\in E[G]$ can be  approximately computed in $\tilde{O}(m)$ time, such that with high probability the conditions \eq{approx_emb1} and \eq{approx_emb} hold for all edges $\{u,v\}\in E[G]$.
\end{lem}

 Our proof of \cref{lem:approxEmbedding} uses the algorithm for approximating the matrix exponential in \cite{osv12} as a subroutine,
whose performance is summarised in \cref{thm:osv12}.  Recall that any $n\times n$ real and  symmetric matrix $\mat{A}$ is  diagonally dominant (\textsf{SDD}), if $\mat{A}_{ii} \ge \sum_{j\neq i} \left|\mat{A}_{ij}\right|$ for each $i=1,\dots,n$. It is easy to see that 
the Laplacian matrix of any undirected graph is  diagonally dominant.

\begin{thm}[\cite{osv12}]\label{thm:osv12}
Given an $n \times n$ \textsf{SDD} matrix $\mat{A}$ with $m_{\mat{A}}$ nonzero entries, a vector $v$ and a parameter $\delta > 0$, there is an algorithm that can compute a vector $x$ such that $\|\mathrm{e}^{-\mat{A}}v - x\| \le \delta \|v\|$ in time $\tilde{O}((m_{\mat{A}}+n) \log(2+\|\mat{A}\|))$, where the $\tilde{O}(\cdot)$ notation  hides $\poly\log n$ and $\poly\log (1/\delta)$ factors.
\end{thm}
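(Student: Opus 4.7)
The plan is to reduce the computation of $\mathrm{e}^{-\mat{A}}y$ to a small number of SDD linear system solves, each of which can be handled by a nearly-linear time solver in the spirit of Spielman--Teng. The overall structure has three ingredients: (i) an accurate low-degree rational approximation of the scalar function $\mathrm{e}^{-x}$ on the relevant spectral range, (ii) a partial-fraction decomposition that turns the rational approximation into a weighted sum of shifted inverses, and (iii) a fast SDD solver applied to each shifted system. The reason this yields the stated runtime is that the degree of the rational approximation only needs to scale polylogarithmically in $\|\mat{A}\|$ and $1/\delta$, while each SDD solve runs in $\tilde{O}(m_{\mat{A}}+n)$ time.

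First I would make the scalar approximation problem precise. Because $\mat{A}$ is SDD and hence positive semidefinite (after a rank-one correction if needed), its eigenvalues lie in $[0,\|\mat{A}\|]$. I would construct a rational function $r(x)=p(x)/q(x)$ of degree $d=\tilde{O}(\log \|\mat{A}\|)$ such that $|r(x)-\mathrm{e}^{-x}|\le \delta/2$ uniformly on $[0,\|\mat{A}\|]$; a standard construction uses best rational approximants to $\mathrm{e}^{-x}$ on the positive real axis (Saff--Varga-type bounds), or alternatively a truncated Chebyshev expansion on the rescaled interval. The key quantitative fact I would invoke is that the degree required to achieve error $\delta$ on an interval of length $\|\mat{A}\|$ grows only like $\poly\log(\|\mat{A}\|)\cdot \poly\log(1/\delta)$, which is the source of the $\log(2+\|\mat{A}\|)$ factor in the final bound.

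Next I would expand $r(\mat{A})y$ via partial fractions as
\[
r(\mat{A})y = c_0 y + \sum_{i=1}^{d} w_i\,(\mat{A}+\sigma_i \mat{I})^{-1} y,
\]
where the shifts $\sigma_i$ have positive real part so that each matrix $\mat{A}+\sigma_i \mat{I}$ is itself SDD (and in the complex-shift case, a block $2\times 2$ real SDD system of size $2n$ represents the pair of conjugate shifts). Each of these $d$ systems can be solved to accuracy $\delta/(d\cdot \mathrm{poly}(n))$ by a fast SDD solver in time $\tilde{O}(m_{\mat{A}}+n)$; summing up with the weights $w_i$ gives the vector $x$. Combining the spectral-norm bound $\|r(\mat{A})-\mathrm{e}^{-\mat{A}}\|\le \delta/2$ with the per-solve error, and using $\|r(\mat{A})y-\mathrm{e}^{-\mat{A}}y\|\le \|r(\mat{A})-\mathrm{e}^{-\mat{A}}\|\cdot\|y\|$ together with a triangle inequality, yields $\|\mathrm{e}^{-\mat{A}}y - x\|\le \delta\|y\|$ as required.

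The main obstacle will be the numerical-stability / coefficient-size control in the partial-fraction step: naive Padé or Chebyshev-derived rationals can have residues $|w_i|$ that are exponentially large, which would inflate both the per-solve accuracy demand and the final error analysis. I would address this by either (a) using a rational approximation whose poles lie on a well-separated contour in the complex plane so that $\sum_i |w_i|$ stays polynomial, or (b) evaluating the approximation in an incremental "continued-fraction"/stable Horner form that never materializes the large cancellations. Once $\sum_i|w_i|\le \mathrm{poly}(n,\|\mat{A}\|)$, it suffices to call each SDD solver with accuracy $\delta\cdot n^{-O(1)}$, which only adds another $\poly\log$ factor and leaves the total runtime at $\tilde{O}((m_{\mat{A}}+n)\log(2+\|\mat{A}\|))$. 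The rest is routine bookkeeping: verifying that $\mat{A}+\sigma_i\mat{I}$ (or its $2n\times 2n$ real block form) is SDD with a polynomially-bounded condition number so that the fast solver's runtime guarantee applies.
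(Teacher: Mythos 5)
You should first note that the paper does not prove this statement at all: it is quoted verbatim (as Theorem~\ref{thm:osv12}) from the cited work of Orecchia, Sachdeva and Vishnoi and is used purely as a black box inside the proof of Lemma~\ref{lem:approxEmbedding}. So there is no ``paper proof'' to compare against; the relevant comparison is with the cited source, and your sketch is essentially a reconstruction of that source's SDD-based algorithm: approximate $\mathrm{e}^{-x}$ by a low-degree rational function, expand via partial fractions into shifted inverses $(\mat{A}+\sigma_i\mat{I})^{-1}$, and apply a nearly-linear-time SDD solver to each shifted system, with the stability of the residues controlled by choosing a good rational approximant. Two points deserve correction or care. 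First, your attribution of the $\log(2+\|\mat{A}\|)$ factor to the degree of the rational approximation is off: by the Saff--Sch\"onhage--Varga theory the best uniform rational approximation of $\mathrm{e}^{-x}$ on all of $[0,\infty)$ achieves error $\delta$ with degree $O(\log(1/\delta))$, independent of $\|\mat{A}\|$; the $\|\mat{A}\|$-dependence in the stated runtime instead enters through the conditioning of the shifted systems $\mat{A}+\sigma_i\mat{I}$ (their condition number scales like $\|\mat{A}\|$, which affects the precision bookkeeping of the solver calls). Second, your worry about exponentially large partial-fraction residues is real and is exactly what forces the use of the Saff--Sch\"onhage--Varga-type approximant (or an equivalent stable evaluation scheme) rather than a naive Pad\'e expansion; you flag this correctly but would need to carry out the quantitative bound $\sum_i |w_i| \le \poly(n,\|\mat{A}\|)$ to make the error analysis complete. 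With those caveats, your outline matches the argument of the cited paper and would, if fleshed out, establish the theorem; for the purposes of this paper, however, citing \cite{osv12} is all that is required.
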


\begin{proof}[Proof of \cref{lem:approxEmbedding}]
By the higher-order Cheeger inequality~\eq{highorder}, we have that 
\[
\Upsilon=\frac{\lambda_{k+1}}{\rho(k)} \leq \frac{2 \lambda_{k+1}}{\lambda_k}.
\]
Since $k=\Omega(\log n)$ and $\Upsilon=\Omega(k^3)$, it holds that $400\cdot\log^2n\leq \lambda_{k+1}/\lambda_k$, and there is $t$ such that
\[
t \in \left(\frac{10\cdot \log n}{ \lambda_{k+1}}, \quad\frac{1}{{20\cdot \lambda_k\cdot \log{n}}}\right).
\]
We first show that  the embedding $\{x_t(u)\}_{u\in V[G]}$ with this  $t$ satisfies   \eq{approx_emb1} and \eq{approx_emb}.

By the definition of $\eta_t(u,v)$, we have that
\begin{align}
\eta_t(u,v) & = \sum_{i=1}^n \mathrm{e}^{-2t \lambda_i}\left(\frac{f_i(u)}{\sqrt{d_u}} - \frac{f_i(v)}{\sqrt{d_v}}\right)^2\notag \\
& = \sum_{i=1}^k \mathrm{e}^{-2t \lambda_i}\left(\frac{f_i(u)}{\sqrt{d_u}} - \frac{f_i(v)}{\sqrt{d_v}}\right)^2 + \sum_{i=k+1}^n \mathrm{e}^{-2t \lambda_i}\left(\frac{f_i(u)}{\sqrt{d_u}} - \frac{f_i(v)}{\sqrt{d_v}}\right)^2.\label{eq:decomp}
\end{align}
Notice that it holds  for $1\le i \le k$ that
\begin{equation}\label{eq:aptlambda1}
1-\frac{1}{10\log n}\leq \ce^{-1/(10\log n)} \leq \ce^{-\lambda_i/(10\lambda_k/\log n)} \leq \ce^{-2t\lambda_i}\leq 1,
\end{equation}
and it holds for $k+1 \le i \le n$ that
\begin{equation}\label{eq:aptlambda2}
	\mathrm{e}^{-2t\cdot \lambda_i} \le \mathrm{e}^{-2 \lambda_i \cdot 10\log{n}/\lambda_{k+1}} \le \mathrm{e}^{-10\log{n} \lambda_{k+1}/\lambda_{k+1}} = \frac{1}{n^{20}}.
\end{equation}
Combining \eq{decomp},
\eq{aptlambda1}, and \eq{aptlambda2}, it holds for any $\{u,v\}\in E[G]$ that 
\[
\left( 1-\frac{1}{10\cdot\log n} \right)\cdot \|F(u)-F(v) \|^2 \leq \eta_t(u,v)\leq \|F(u)- F(v) \|^2 + \frac{1}{n^5},
\]
which proves the first statement.

Now we  show that the distances of $\|x_t(u) -x_t(v)\|$ for all edges $\{u,v\}\in E[G]$ can be approximately computed in nearly-linear time.
For any vertex $u\in V[G]$, we define $\xi_u\in\mathbb{R}^n$, where
$(\xi_u)_v=1/\sqrt{d_u}$ if $v=u$, and $(\xi_u)_v=0$ otherwise.
Combining \eq{heatdef} with \eq{htembed} and \eq{htdist}, we have that
$
\eta_t(u, v) = \left\| \mat{H}_{t} \left( \xi_{u} - \xi_{v} \right) \right\|^2$.
We define $\mat{Z}$ to be the operator of error $\delta$ which corresponds to the algorithm described in \cref{thm:osv12}, and
replacing $\mat{H}_{t}$ with  $\mat{Z}$ we get
\[
\left| \left\| \mat{Z} \left( \xi_{u} - \xi_{v} \right) \right\| - \eta^{1/2}_t(u, v) \right|
\leq \delta \left \|\xi_{u} - \xi_{v} \right\|
\leq \delta,
\]
where the last inequality follows from $d_u, d_v \geq 1$. Hence, it holds that
\begin{equation}\label{eq:approxt1}
\eta^{1/2}_t(u, v) - \delta
\leq \left\| \mat{Z} \left( \xi_{u} - \xi_{v} \right) \right\| \leq
\eta^{1/2}_t(u, v) + \delta.
\end{equation}
By applying the Johnson-Lindenstrauss transform in a way
analogous to the computation of effective resistances~(e.g. \cite{KoutisLP12} and \cite{journals/siamcomp/SpielmanS11}), we obtain 
 an $O(\epsilon^{-2} \cdot\log{n} ) \times n$ Gaussian matrix $\mat{Q}$, such that
with high probability it holds  for all $u, v$ that
\begin{equation}\label{eq:approxt2}
\left( 1 - \epsilon \right) \left\| \mat{Z} \left( \xi_{u} - \xi_{v} \right) \right\|
\leq \left\| \mat{Q} \mat{Z} \left( \xi_{u} - \xi_{v} \right) \right\| \leq
\left( 1 + \epsilon \right) \left\| \mat{Z} \left( \xi_{u} - \xi_{v} \right) \right\|.
\end{equation}
Combining \eq{approxt1} and \eq{approxt2} gives us that
\[
\left( 1 - \epsilon \right) \left( \eta^{1/2}_t(u, v) - \delta \right)
\leq \left\| \mat{Q}\mat{Z} \left( \xi_{u} - \xi_{v} \right) \right\| \leq
\left( 1 + \epsilon \right) \left( \eta^{1/2}_t(u, v) + \delta \right).
\]
Squaring both sides and invoking the inequality
$
 (1 - \epsilon )\alpha^2 - (1 + \epsilon^{-1})b^2
\leq (a + b)^2 \leq (1 + \epsilon )\alpha^2 + (1 + \epsilon^{-1})b^2$
 gives
\[
\left( 1 - 5\epsilon \right) \eta_t(u, v) -  2\delta^2 \epsilon^{-1}
\leq \left\| \mat{Q} \mat{Z} \left( \xi_{u} - \xi_{v} \right) \right\|^2 \leq
\left( 1 + 5\epsilon \right) \eta_t(u, v) + 2\delta^2 \epsilon^{-1}
\]
Scaling $\mat{Q}\mat{Z}$ by a factor of $\left( 1 + 5\epsilon \right)^{-1}$,
and appending an extra entry in each vector to create an additive
distortion of $2 \delta \epsilon^{-1}$ then gives the desired bounds
when $\delta$ is set to $\epsilon n^{-6}$. To satisfy the conditions \eq{approx_emb1} and 
\eq{approx_emb} we just need to set $\eps = O(1/\log{n})$.

To analyze the runtime of computing $\left\| \mat{Q} \mat{Z} \left( \xi_{u} - \xi_{v} \right) \right\|^2$ for all edges $\{u,v\} \in E[G]$,
notice that $\mat{Q}$ has only $O( \log^3{n})$ rows.  We can  then  run the approximate exponential algorithm from~\citep{osv12} $O( \log^3{n})$ times,  where  each time we use a different row of $\mat{Q}$ as input.  Since 
$\|\mathcal{L} \|\leq 2$, by \cref{thm:osv12}  we can compute $\mat{Q}\mat{Z}$ in $\tilde{O}(m)$ time.  Notice that 
$\mat{Q}\mat{Z}\xi_u$ is some column of $\mat{Q}\mat{Z}$ after rescaling, therefore we can compute all the required distances in time $\tilde{O}(m)$.
\end{proof}

We remark that the proof above shows  an interesting property about the embedding~\eq{htembed}, i.e.,  for a large value of $k$ and a certain condition on $\Upsilon$, there is always a $t$ such that the values of $\eta_t(u,v)$ gives a good approximation of $\| F(u)-F(v)\|^2$ for all edges $\{u,v\}\in E[G]$.
A similar intuition which views the heat kernel embedding as a weighted
 combination of multiple eigenvectors was discussed in \cite{osv12}.

\subsection{Proof of  \cref{thm:main_informal} \label{sec:pfthm13}}
We proved in Section~\ref{sec:ht} that if
$k=\Omega(\log n )$ and $\Upsilon=\Omega(k^3)$,
there is a 
\begin{equation}\label{eq:tright}
t\in\left( \frac{10\log n}{\lambda_{k+1}}, \quad\frac{1}{20\cdot\lambda_k\cdot\log n}\right)
\end{equation}
such that $\{x_t(u)\}_{u\in V[G]}$ satisfies the conditions   \eq{approx_emb1} and \eq{approx_emb}. Moreover, the values of $\|x_t(u) -x_t(v) \|$ for $\{u,v \}\in E[G]$ can be approximately computed in nearly-linear time\footnote{\cref{lem:approxEmbedding} shows that both of
the embedding $\{x_t(u)\}_{u\in V[G]}$ and the embedding that the algorithm computes in nearly-linear time 
 satisfy the conditions \eq{approx_emb1} and \eq{approx_emb} with high probability. For the ease of discussion, we use 
  $\{x_t(u)\}_{u\in V[G]}$  to express the embedding that the algorithm actually uses.  
}.
However, it is unclear how to approximate $\lambda_k$,
and without this approximation.
Furthermore, without this approximation of $\lambda_k$,
obtaining the desired embedding $\{x(u)\}_{u\in V[G]}$
becomes highly non-trivial.

To overcome this obstacle, we run the seeding and grouping steps for all possible $t$ of the form $2^i$, where $t\in\mathbb{N}_{\geq 0}$, as it allows us to run the seeding and grouping steps with the right values of $t$ at some point. However,  by \eq{htdist} the distance between any pair of embedded vertices decreases when we increase the value of $t$. Moreover, all these embedded points $\{x_t(u)\}_{u\in V[G]}$ tend to  ``concentrate'' around a single point for an arbitrary large value of $t$. To avoid this situation, for every possible $t$ we compute the value of $\sum_{v\in V[G]} d_v \|x_t(v) \|^2$, and the algorithm only moves to the next iteration
if
\begin{equation}\label{eq:tcondition}
\sum_{v\in V[G]} d_v \|x_t(v) \|^2\geq k\left( 1-\frac{2}{\log n} \right).
\end{equation}
By  \cref{lem:exactProb}, \eq{tcondition} is satisfied for all values of $t$ in the right range \eq{tright}, and the algorithm will not terminate before $t=\lfloor\log n/\lambda_{k+1}\rfloor$.
See Algorithm~\ref{fig:algofix} for the formal description of our final algorithm.

\begin{algorithm}
  \begin{algorithmic}[1]\label{algo:fix}

\STATE \textbf{input}:  the input graph $G$, and the number of clusters $k$

	\STATE Let $t=2$.
	\REPEAT
			\STATE Let $ (c_1, \dots, c_{k})=$ \textsc{SeedAndTrim}$\left(k,\{x_t(u)\}_{u\in V[G]}\right)$. 
			\IF {\textsc{SeedAndTrim} returns exactly $k$ points}
				\STATE Compute a partition $A_1, \dots, A_{k}$ of $V[G]$: for every $v \in V[G]$ assign $v$ to its nearest center $c_i$ using the $\eps$-\textsf{NNS} algorithm with $\eps = \log k-1$.
			\ENDIF
	\STATE Let $t=2t$
	\UNTIL {$t > n^{10}$ \textbf{or} $\sum_{v\in V[G]} d_v \|x_t\|^2 < k\left(1 -\frac{2}{ \log{n}}\right)$.}
	\RETURN $(A_1,\cdots,A_k).$
\end{algorithmic}

\caption{A nearly-linear time graph clustering algorithm, $k=\Omega(\log n )$}
\label{fig:algofix}

\end{algorithm}

\begin{lem}\label{lem:combined}
Let $t = \Omega\left(1/(\lambda_k  \cdot \log{n})\right)$, and $t$ satisfies \eq{tcondition}. Suppose that  \textsc{SeedAndTrim} uses the embedding $\{x_t(u) \}_{u\in V[G]}$
 and returns $k$ centers $c_1,\dots,c_k$.  Then, with constant probability,
 the following statements hold:
 \begin{enumerate}
 \item   It holds that 
 \[
\{c_1,\ldots, c_k \}\subseteq\bigcup_{i=1}^k \core_i^{\alpha}.
\]
\item These  $k$ centers belong to different cores, and it holds for any different $i,j$ that 
\[
\| x_t(c_i) -x_t(c_j) \|^2 =\tilde{ \Omega}\left(\frac{1}{k\cdot\vol(S_i)} \right).
\]
\item For any $i=1,\dots,k$, it holds that  \[\sum_{i=1}^k \sum_{u \in S_i} d_u\cdot \|x(u) - x(c_i)\|^2 = \tilde{O}\left(\frac{k^3}{\Upsilon}\right).\]
\end{enumerate}
\end{lem}
\begin{proof}
Since $\|x_t(u) \|$ is decreasing with respect to the value of $t$ for any vertex $u$, by \lemref{exactProb}
for any $t = \Omega\left(1/(\lambda_k  \cdot \log{n})\right)$
we have:
 \[
\sum_{i=1}^k \sum_{u \notin \core_{i}^{\alpha} } {d_u\cdot \|x_t(u)\|^2} \le \sum_{i=1}^k \sum_{u \notin \core_{i}^{\alpha} } {d_u\cdot \left(\|F(u)\|^2 + \frac{1}{n^5}\right)} \leq    \frac{k}{100 K} + \frac{kn^2}{n^5} \le \frac{1}{\log{k}}. 
\]
On the other hand, we only consider  values of $t$ satisfying \eq{tcondition}.
Since every vertex $u$ is sampled with probability proportional to $d_u\cdot \|x_t(u) \|^2$, with constant probability it holds that 
\[
\{c_1,\ldots, c_k \}\subseteq\bigcup_{i=1}^k \core_i^{\alpha},
\]
which proves the first statement.

 Now we prove that these $k$ centers belong to different cores. 
We fix an index $i$,  and assume that $c_i\in S_i$. We will prove that 
\begin{equation}\label{eq:bignormxt}
\| x_t(c_i)\|^2 =\tilde{\Omega}\left(\frac{1}{\vol(S_i)}\right).
\end{equation}
Assume  by contradiction  that \eq{bignormxt} does not hold, i.e., 
\[
\| x_t(c_i)\|^2 =o\left(\frac{1}{\log^c k\vol(S_i)}\right)
\]
for any constant $c$. Then, we have that 
\begin{align*}
\sum_{u\in\core_i^{\alpha} }d_u\cdot \|x_t(u) \|^2 & \leq \sum_{u\in\core_i^{\alpha} } d_u\cdot \left( \| x_t(c_i)\| + \sqrt{R_i^{\alpha}}  \right)^2 \\
& \leq 2\cdot  \sum_{u\in\core_i^{\alpha} } \left(d_u\cdot  \| x_t(c_i)\|^2 +  d_u\cdot R_i^{\alpha} \right) \\
& = o \left(\frac{1}{\log^ck}\right) + o\left(\frac{1}{k^2} \right)\\
& = o \left(\frac{1}{\log^ck}\right).
\end{align*}
Combining this with \eq{tcondition},  the probability that vertices get sampled from $\core_i^{\alpha}$ is 
\[
\frac{\sum_{u\in\core_i^{\alpha} }d_u\cdot \|x_t(u) \|^2}{\sum_{v\in V[G]} d_v \|x_t(v) \|^2}=o\left(\frac{1}{k\cdot \log^ck} \right).
\]
This means if we sample $K=\Theta(k\log k)$ vertices,
vertices in $\core_i^{\alpha}$ will not get sampled
with probability at least $1-1/\log^5 k$.
This contradicts the fact that $c_i\in\core_i^{\alpha}$.
Therefore \eq{bignormxt} holds.

Now, by description of Algorithm~\ref{fig:seed}, we have
for any $j\neq i$:
\[
\| x_t(c_i) -x_t(c_j) \|^2 \geq \frac{\|x(c_i)\|^2}{2\cdot 10^4\cdot k} =\tilde{ \Omega}\left(\frac{1}{k\cdot\vol(S_i)} \right),
\]
where the last equality follows from  \eq{bignormxt}. Since any vertex in $\core_{i}^{\alpha}$ has distance at most $R_i^{\alpha}$ from $c_i$, $c_j$ and $c_i$ belong to different cores. Therefore, the second statement holds.

Finally we turn our attention to the third statement.
 We showed in \cref{lem:approxEmbedding} that, when $t = \Theta\left(1/(\lambda_k  \cdot \log{n})\right)$, the embedding $\{x_t(u)\}_{u\in V[G]}$ satisfies the conditions \eq{approx_emb1} and \eq{approx_emb}.  Hence,  it holds that
\begin{align}
\sum_{i=1}^k \sum_{u \in S_i} d_u\cdot \|x(u)-x(c_i) \|^2
& \leq  \sum_{i=1}^k  \sum_{u \in S_i} \left( d_u\cdot \|F(u)-F(c_i) \|^2 + \frac{1}{n^5} \right) \nonumber\\
& \leq  \sum_{i=1}^k  \sum_{u \in S_i} \left( d_u\cdot \left( \left\|F(u)-p_i \right\|  + \left\|F(c_i) - p_i \right\| \right)^2 + \frac{1}{n^5}\right) \nonumber\\
& \leq \sum_{i=1}^k  \sum_{u \in S_i}  \left(2\cdot d_u\cdot \left( \left\|F(u)-p_i \right\|^2  + \left\|F(c_i) - p_i \right\|^2 \right) + \frac{1}{n^5} \right)\label{eq:boundterm1} 
\end{align}
Notice that by \cref{lem:optcost} we have
\begin{equation}\label{eq:boundterm2}
\sum_{i=1}^k  \sum_{u \in S_i}  d_u\cdot \|F(u)-p_i \| ^2 \leq 1.1k^2/\Upsilon.
\end{equation}
On the other hand, we have $\|F(c_i) -p_i\|^2\leq R_i^{\alpha}$ as $c_i\in\core_i^{\alpha}$, and 
\begin{equation}\label{eq:boundterm3}
\sum_{i=1}^k  \sum_{u \in S_i}  2\cdot d_u\cdot  \left\|F(c_i) - p_i \right\|^2 \leq  \sum_{i=1}^k 2\vol(S_i) \cdot \frac{\alpha\cdot \mathcal{E}_i}{\vol(S_i)}
= \sum_{i=1}^k 2\alpha\cdot\mathcal{E}_i =\tilde{O}\left(\frac{k^3}{\Upsilon}\right).
\end{equation}
Combining \eq{boundterm1} with \eq{boundterm2} and \eq{boundterm3}, we have that 
\[
\sum_{i=1}^k \sum_{u \in S_i} d_u\cdot \|x(u)-x(c_i) \|^2 \leq \tilde{O}\left(\frac{k^3}{\Upsilon}\right) +\sum_{u\in V[G]} \frac{d_u}{n^5} = \tilde{O}\left(\frac{k^3}{\Upsilon}\right).
\]
Moreover, by \eq{htembed} and \eq{htdist} it is straightforward to see that the distance between any embedded vertices decreases as we increase the value of $t$. Hence, the statement holds for any  $t = \Omega\left(1/(\lambda_k  \cdot \log{n})\right)$.
\end{proof}

\begin{lem}\label{lem:aptnew}
Let $A_1,\ldots, A_k$ be a $k$-way partition returned by Algorithm~\ref{fig:algofix}. Then, under a proper permutation of the indices, with constant probability for any $1 \le i \le k$ it holds that (i) $
\vol(A_i \triangle S_i) = \tilde{O}\left(k^4/ \Upsilon\right) \vol(S_i)$,
and (ii) $
	\phi_G(A_i) = 1.1\cdot\phi_G(S_i) + \tilde{O}\left( k^4 / \Upsilon\right)$.\end{lem}

\begin{proof}
We assume that  $c_1,\dots,c_k$ are the centers returned by \textsc{SeedAndTrim} when obtaining $A_1,\ldots, A_k$. By \cref{lem:combined},  with constant probability it holds that $\{ c_1,\dots,c_k\} \subseteq \bigcup_{i=1}^k\core_i^{\alpha}$, and $c_i$ and $c_j$ belong to different cores for $i\neq j$. Without loss of generality, we assume that $c_i\in\core^{\alpha}_{i}$.
 Then,  it holds  that
 \begin{align}
\vol(S_i\setminus A_i)
&\le \sum_{i\ne j} \vol\left( \left\{v \in S_i : \|x(c_i) - x(v)\| \ge  \frac{\|x(c_j) - x(v)\|}{\log k}\right\}\right)  \notag\\
&\le \sum_{i\ne j} \vol\left( \left\{v \in S_i : \|x(c_i) - x(v)\| \ge  \frac{\|x(c_i) - x(c_j)\|-\|x(c_i) - x(v)\|}{\log k}\right\}\right)  \notag\\
&\le \sum_{i\ne j} \vol\left( \left\{v \in S_i : 2\|x(c_i) - x(v)\| \ge  \frac{\|x(c_i) - x(c_j)\|}{\log k}\right\}\right)  \notag\\
&\le \sum_{i\ne j} \vol\left( \left\{v \in S_i : \left\|x(c_i) - x(v)\right\|^2 = \tilde{\Omega}\left( \frac{1}{ k \min\{\vol(S_j),\vol(S_i)\}}\right) \right\}\right) \label{eq:volboundf1} \\
&=  \tilde{O}\left( {k^4}/{ \Upsilon} \right) \vol(S_i),  \label{eq:volbound1f1}
\end{align}
where \eq{volboundf1} follows from the second statement of \cref{lem:combined}.

Similarly, we also have  that
\begin{align*}
\vol(A_i\setminus S_i) &\le \sum_{i\ne j} \vol\left( \left\{v \in S_j : \|x(c_j) - x(v)\| \ge  \frac{\|x(c_i) - x(v)\|}{\log k}\right\}\right)\\
&=  \tilde{O}\left( {k^4}/{\Upsilon} \right) \vol(S_i).
\end{align*}
This yields the first statement of the lemma.
The second statement follows by the same argument used in proving \thmref{anlykmean}.
\end{proof}

\begin{proof}[Proof of  \cref{thm:main_informal}] The approximation guarantee of the returned partition is shown in \cref{lem:aptnew}. For the runtime, notices that we enumerate  at most $O(\poly\log n)$ possible values of $t$.
Furthermore, and for every such possible value of $t$
the algorithm runs in $\tilde{O}(m)$ time.
This includes computing the distances of embedded points
and the seeding / grouping steps.
Hence, the total runtime is $\tilde{O}(m)$.
\end{proof}

\section*{Acknowledgements}

Part of this work was done while He Sun and Luca Zanetti were
at the Max Planck Institute for Informatics, and while
He Sun was visiting the Simons Institute for the Theory of Computing at UC Berkeley.
We are grateful to Luca Trevisan for insightful comments on
an earlier version of our paper,
and to Gary Miller for very helpful discussions about heat kernels on graphs. We also would like to thank Pavel Kolev, and Kurt Mehlhorn~\cite{KM15} for pointing out an omission in an early version of \lemref{approx_clusters2}. This omission was fixed locally without effecting the statement of the main results.

\bibliographystyle{plain}
\bibliography{reference}

\end{document}